\pgfplotsset{compat=newest}
\newcommand{\doctitle}{Mathematical Analysis and Algorithms for Federated Byzantine Agreement Systems}
\newcommand{\docauthors}{Andr{\'e} Gaul\footnotemark[1], 
Ismail Khoffi\footnotemark[1], 
J{\"o}rg Liesen\footnotemark[1],
and Torsten St{\"u}ber\footnotemark[2]}
\newlength\figurewidth
\newlength\figureheight
\newcommand{\N}{\mathbb N}
\newtheorem{thm}{Theorem}[section]
\newtheorem{lem}[thm]{Lemma}
\newtheorem{cor}[thm]{Corollary}
\theoremstyle{definition}
\newtheorem{ex}[thm]{Example}
\newtheorem{rmk}[thm]{Remark}
\newtheorem{definition}[thm]{Definition}
\title{\doctitle}
\author{\docauthors}
\tikzset{
  treenode/.style = {align=center, inner sep=0pt},
  black_n/.style = {treenode, circle, white, draw=black,
    fill=black, text width=0.6em},
  white_n/.style = {treenode, circle, black, draw=black, 
    text width=0.6em, thick},
  dashed_n/.style = {treenode, circle, black, draw=black, 
    text width=0.6em, densely dashed},
  dashed_e/.style={edge from parent/.style={densely dashed,draw}}
}
\definecolor{plot_color_0}{rgb}{0.886274509803922,0.290196078431373,0.2}
\definecolor{plot_color_1}{rgb}{0.203921568627451,0.541176470588235,0.741176470588235}
\providecommand{\keywords}[1]
{{\small	
\textbf{\textbf{Keywords --}} #1
}}
\begin{document}

\maketitle

\renewcommand{\thefootnote}{\fnsymbol{footnote}}
\footnotetext[1]{TU Berlin, Institute of Mathematics, Stra{\ss}e des 17. Juni 136, 10623 Berlin, Germany, {\tt andre@gaul.io}, 
{\tt ismail.khoffi@gmail.com}, {\tt liesen@math.tu-berlin.de}} 
\footnotetext[2]{SatoshiPay GmbH, M{\"u}hlenstr. 8a, 14167 Berlin, Germany, {\tt torsten@satoshipay.io}}

\renewcommand{\thefootnote}{\arabic{footnote}}

\begin{abstract}
We give an introduction to \emph{federated Byzantine agreement systems (FBAS)} with many examples ranging from small ``academic'' cases to the current Stellar network. We then analyze the main concepts from a mathematical and an algorithmic point of view. Based on work of Lachowski~\cite{Lac19} we derive algorithms for \emph{quorum enumeration}, checking \emph{quorum intersection}, and computing the \emph{intact nodes} with respect to a given set of ill-behaved (Byzantine) nodes. We also show that from the viewpoint of the \emph{intactness probability} of nodes, which we introduce in this paper, a hierarchical setup of nodes 
is inferior to an arrangement that we call a \emph{symmetric simple FBAS}. All algorithms described in this paper are implemented in the Python package \emph{Stellar Observatory}\/\footnote{\url{https://github.com/andrenarchy/stellar-observatory}}, which is also used in some of the computed examples.
\end{abstract}

\keywords{
Byzantine agreement, consensus, distributed computing, 
quorum systems, federated Byzantine agreement, 
intactness probability, Stellar consensus protocol,
Stellar network
}

\section{Introduction}

The main idea of blockchain technology is to maintain a distributed ledger in a decentralized way. This means that the ledger is replicated among a set of different parties or \emph{nodes}, and no central entity has the authority on the content of the ledger~\cite{Nakamoto09}.
Transactions included in the ledger are recorded in blocks which point to previous blocks, thus forming a chain of blocks.
Nodes need to reach \emph{consensus} (see, e.g.,~\cite{Pease80} or~\cite[p. 659]{CouDolKin11}) on the next block in order to be included in the chain despite facing arbitrary (Byzantine) failures of some nodes such as going offline or proposing an invalid block. %~\cite{LamShoPea82}. 
Consequently, every blockchain system requires an underlying consensus model, which ideally is based on a thorough mathematical formalization and analysis. 

In this paper we study a consensus model introduced by Mazi\`eres in 2016~\cite{Maz16}, the \emph{federated Byzantine agreement system (FBAS)}, which is used in the \emph{Stellar\/\footnote{\url{https://www.stellar.org}} Consensus Protocol (SCP)}. The main practical advantages of the SCP in comparison with other consensus mechanisms like proof-of-work or proof-of-stake with the longest chain rule, or classical Byzantine agreement are discussed in detail in~\cite[Section~2]{Maz16} and in~\cite[Section~2]{LokLosMaz19}. 

The nodes of an FBAS conceptually form a network, where the links between the nodes are established by trust decisions that are made by each node individually, and hence in a decentralized way. A subset of nodes that a given node ``trusts'' is called a quorum slice, and each node may have several such slices. A quorum of the FBAS is a set of elements which contains at least one quorum slice for each of its members, and hence may be interpreted as a subset of nodes that ``trusts itself''. In order to avoid that contradictory statements are ratified, the quorums need to be pairwise intersecting. The quorum intersection property and several other concepts in the context of the FBAS lead to challenging mathematical and computational problems which so far have not been fully addressed in the literature. Our goal in this paper is contribute to the overall understanding of FBAS by giving a thorough mathematical introduction to the main ideas and concepts, in particular the quorum intersection problem and the intactness of nodes. We will also treat these problems in detail from an algorithmic point of view. 

The paper is organized as follows: In Section~\ref{sec:FBAS} we give the FBAS definition of Mazi\`eres and illustrate it with several examples. We then introduce and analyze the \emph{trust graph} of an FBAS and derive a mathematical description of the current Stellar network. In Section~\ref{sec:alg-quorums} we show that the quorum intersection decision problem is NP-complete. A similar result was recently shown by Lachowski~\cite{Lac19}, but here we show NP-completeness even when we restrict the decision problem to the class of \emph{simple FBAS} (see Definition~\ref{def:simple}). We then discuss algorithms for quorum enumeration and checking quorum intersection. The main ideas of these algorithms also appeared in the work of Lachowski~\cite{Lac19}, and some have been implemented in \emph{Stellar Core}\/\footnote{\url{https://github.com/stellar/stellar-core}}, the ``backbone'' of the Stellar network. Based on these previous works, we give an in-depth treatment of both algorithms.

In Section~\ref{sec:intact} we analyze the intactness of nodes, characterize the intact nodes in \emph{symmetric simple FBAS}, and compare the original intactness definition of Mazi\`eres with newer definitions given in~\cite{LokLosMaz19,LosGafMaz19}. In Section~\ref{sec:algo-intact} we treat intactness algorithmically, and based on results from Section~\ref{sec:alg-quorums}, we derive an algorithm for computing the intact nodes when a set of ill-behaved (Byzantine) nodes is given. The time complexity of this algorithm is (only) twice as long as determining quorum intersection. Finally, in Section~\ref{sec:intact-prob} we look at intactness from a probabilistic point of view, and we study the probability that a node is intact in different scenarios of ill-behaved nodes.

%We have implemented all algorithms described in this paper in the Python package \emph{Stellar Observatory}\/\footnote{\url{https://github.com/andrenarchy/stellar-observatory}}, which is also used in some of the computed examples.

\section{Federated Byzantine Agreement Systems}\label{sec:FBAS}

In this section we first review the basic definition and give some simple examples of federated Byzantine agreement systems, then discuss their relationship to certain graphs, and finally derive a mathematical description of the current Stellar network.   

\subsection{Basic definitions and examples}

We start with the following definition, originally given by Mazi\`eres~\cite[p.~4]{Maz16}.

%\begin{definition}[Quorum slice]
%For a set of nodes $V$, a subset $q\subseteq V$ is a \emph{quorum %slice} for a node $v$ if $v\in q$.
%\end{definition}

\begin{definition}[FBAS and quorum]\label{def:FBAS}
A \emph{federated Byzantine agreement system (FBAS)} is a pair $(V,S)$ consisting of a finite set of nodes $V$ and a \emph{quorum function} $S:V\to {\mathcal P}({\mathcal P}(V)) \setminus \{\emptyset\}$, where for each $v\in V$ and $s\in S(v)$ we require that $v\in s$. Each set $s\in S(v)$ is called a \emph{quorum slice} of the node $v$. \\
A nonempty set of nodes $Q\subseteq V$ is called a \emph{quorum} in $(V,S)$ if for each $v\in Q$ there exists a quorum slice $s\in S(v)$ with $s\subseteq Q$.
\end{definition}

Note that we require that $S(v)\not=\emptyset$ for every node $v$. Therefore $Q=V$ is a quorum in $(V,S)$, so that each FBAS has at least one quorum. Also note that if $Q_1,Q_2$ are quorums, then $Q_1\cup Q_2$ is a quorum as well. 

\begin{rmk}
    In a more recent paper on Stellar as a global payment network, the authors define a quorum to be ``a non-empty set of nodes encompassing at least one quorum slice of each \emph{non-faulty} member''~\cite[p.~4]{LokLosMaz19}. In contrast to Definition~\ref{def:FBAS}, this definition of quorums includes the individual behavior of the nodes (which is unknown a priori and uncontrollable). In this paper we will focus on the FBAS as originally defined by Maz\`ieres, and thus consider the individual node behavior only later, when defining the concept of \emph{intactness} (see Definition~\ref{def:intact}). A comparison of the original and the new concepts is given in Section~\ref{sec:intact-compare}. We stress that in the analysis of consensus mechanisms it is very important to pay great attention to details. Already Lamport, Shostak and Pease in their classical paper on the Byzantine generals problem from 1982 ``strongly advise the reader to be very suspicious of ... nonrigorous reasoning'' and write that they ``know of no area in computer science or mathematics in which informal reasoning is more likely to lead to errors than in the study of this type of algorithm''~\cite[p.~385]{LamShoPea82}.
\end{rmk}

A quorum slice $s\in S(v)$ is interpreted as a subset of the nodes that the node~$v$ ``trusts''. The set $S(v)$ should be chosen so that $v$ will agree with any statement that is unanimously agreed upon by any of the slices. A quorum $Q$ of the FBAS again is a subset of the nodes, where for each element $v\in Q$ at least one of its quorum slices (i.e., at least one subset of ``trusted'' nodes) is completely contained in $Q$ as well. Thus, a quorum is a set of nodes that ``trusts itself''. In practice quorums are not known a priori but discovered by the nodes when exchanging messages. Each node sends its quorum slices with every message. A receiving node considers a message to satisfy a quorum when the set of the sending nodes contains a quorum slice of each of its elements.

In a nutshell (see~\cite[Section~5.1]{Maz16} for details), a quorum $Q$ \emph{ratifies} a statement if every node $v\in Q$ asserts that this statement is true. \emph{Consensus} means that two contradicting statements should not be ratified at the same time. Thus, the quorums of the FBAS should pairwise intersect, which naturally leads to the following concept introduced by Mazi\`eres~\cite[p.~8]{Maz16}.

\begin{definition}[quorum intersection]
    An FBAS $(V,S)$ has \emph{quorum intersection} if any two of its quorums have a nonempty intersection.
\end{definition}

Note that quorum intersection is not guaranteed without further assumptions. Moreover, the decision problem whether quorum intersection holds is NP-complete; see Section~\ref{sec:NPcomplete} below for details. As described in~\cite[Section~6.2.1]{LokLosMaz19}, 
within \emph{Stellar Core} the quorum intersection property of the underlying FBAS is regularly checked with an algorithm based on an idea of Lachowski~\cite{Lac19}, which we will analyze in detail in Sections~\ref{sec:quorum-enumeration} and~\ref{sec:quorum-intersection}. 

\begin{rmk}
    If $Q_1,\dots,Q_m$ are the quorums of an FBAS $(V,S)$ with quorum intersection, then $(V, \{Q_1,\dots,Q_m\})$ is a (classical) \emph{quorum system}. Such systems have been studied since the late 1970s in the context of distributed computing; see, e.g., the monograph~\cite{Vuk12} for an overview. The quorum system $(V,\{Q_1,\dots,Q_m\})$ therefore may be analyzed using the concepts introduced in this area, including \emph{load}, \emph{capacity}, and \emph{availability}; see, e.g.,~\cite{MalRei97,MalReiWoo00,NaoWoo98}. A detailed analysis of the relationship between FBAS and classical Byzantine quorum systems was recently given in~\cite{GarGot18}.
\end{rmk}

Let us give a few examples of federated Byzantine agreement systems.

\begin{ex}\label{ex:trivial}
Let $V=\{1,\dots,n\}$ for some $n\geq 2$.\\
(1) Let $S(i)=\{\{i\}\}$ for $i=1,\dots,n$. Then every nonempty subset $Q\subseteq V$ is a quorum, since for every node $i$ its only quorum slice is given by $\{i\}$, and we trivially have $\{i\}\subseteq Q$. Clearly, $(V,S)$ does not have quorum intersection.\\
(2) Let $S(i)=\{V\}$ for $i=1,\dots,n$. Now $Q=V$ is the only quorum and $(V,S)$ has quorum intersection.
\end{ex}    
    
\begin{ex}\label{ex:two}
    Consider $V=\{1,\dots,n\}$ for some $n\geq 2$ and $S(i)=\{\{i,j\}\mid j=1,\dots,n,\,j\neq i\}$ for $i=1,\dots,n$. Let $U\subseteq V$ be any subset containing at least two nodes. If we take any two nodes from $U$, say $\ell$ and $m$ with $\ell\neq m$, then $\{\ell,m\}\in S(\ell)$ and $\{\ell,m\}\in S(m)$, which shows that $U$ is a quorum. If $n\geq 4$, then $(V,S)$ does not have quorum intersection, since, e.g., $\{1,2\}$ and $\{3,4\}$ are quorums.
\end{ex}

\begin{ex}[Byzantine agreement]\label{ex:Byzantine}
    This example shows that classical Byzantine agreement can be modeled as a special FBAS.
    
    Let $V=\{1,\dots,n\}$ for some $n=3m+1$, where $m\in\N$, and suppose that for $i=1,\dots,n$, each quorum slice $s\in S(i)$ contains at least $2m+1$ nodes (including~$i$). Then any quorum of the FBAS $(V,S)$ must consist of at least $2m+1$ nodes, and since there are exactly $3m+1$ nodes, any two quorums must intersect in at least $m+1$ nodes. If $m$ nodes in the intersection become Byzantine, i.e., behave arbitrarily, then there still is at least one non-Byzantine node in each intersection, which guarantees that the quorums cannot ratify contradictory statements. And, as shown in a classical paper of Lamport, Shostak, and Pease, ``no solution with fewer than $3m+1$ generals can cope with $m$ traitors''~\cite[p.~387]{LamShoPea82}; also cf. Theorem~\ref{thm:SymmetricByzantine} below.
\end{ex}

\begin{ex} (\cite[Fig.~7]{Maz16}) \label{ex:FBAS_fig7}
    Consider $V=\{1,\dots,7\}$ and $S$ with 
    \begin{align*}
        S(i)&=\{\{1,2,3,7\}\},\; i=1,2,3,\qquad S(i)=\{\{4,5,6,7\}\},\;i=4,5,6\;,
    \end{align*}
    and $S(7)=\{\{7\}\}$. This FBAS $(V,S)$ has exactly four quorums,
    \begin{align*}
        \{1,2,3,7\},\quad\{4,5,6,7\},\quad\{7\},\quad V,
    \end{align*}
    and it has quorum intersection. Note that node $7$ is contained in every quorum, and that the minimal intersection of two quorums consists only of this node.
\end{ex}

\begin{ex}
     Suppose that the FBAS is built by 4 organizations $A,B,C,D$, which have 3 nodes each, i.e., $V=A\cup B\cup C\cup D$ with
    \begin{align*}
        A&=\{a_1,a_2,a_3\},&
        B&=\{b_1,b_2,b_3\},&
        C&=\{c_1,c_2,c_3\},&
        D&=\{d_1,d_2,d_3\}.
    \end{align*}
    Suppose that the quorum slices of each node $x_i$, $i=1,2,3$, are all sets consisting of the three nodes of its own organization $X$, and exactly one node from another organization, i.e., 
    $$S(x_i)=\{X\cup \{y\}\mid y\in V\setminus X\}.$$ 
    Then each node has 9 quorum slices containing 4 nodes each, and $(V,S)$ does not have quorum intersection, since $A\cup B$ and $C\cup D$ are quorums.
\end{ex}

\begin{ex}[hierarchical quorum slices]\label{ex:hierarchy}
    Suppose that the FBAS is built by 6 organizations, $V=A\cup B\cup C\cup D\cup E\cup F$ with
    \begin{align*}
        A&=\{a_1,a_2,a_3\},\quad
        B=\{b_1,b_2,b_3\},\quad
        C=\{c_1,c_2,c_3\},\\
        D&=\{d_1,d_2,d_3\},\quad
        E=\{e_1,e_2,e_3\},\quad
        F=\{f_1,f_2,f_3,f_4,f_5\},
    \end{align*}
    and suppose that the quorum slices are built in a hierarchical way: We form \emph{all possible} subsets of $V$ by first choosing 5 of the 6 sets $A,B,C,D,E,F$, and then in each of the 5 chosen sets we choose either 2 of the 3 nodes in case of the sets $A,B,C,D,E$, or 3 of the 5 nodes in case of set $F$. Thus, each of these subsets contains either 10 or 11 nodes, depending on whether on the ``root level'' of the hierarchy the set $F$ is chosen or not. In total there are \numprint{4293} such subsets, and we denote the union of all these sets by $\underline{S}$.
    
    We define the quorum function by
    $$S(v)=\{U\in \underline{S}\mid v\in U\},\quad v\in V,$$
    then each node in the sets $A,B,C,D,E$ has \numprint{2322} quorum slices, and each node in the set $F$ has \numprint{2430} quorum slices. This results in \numprint{37888} quorums and we have quorum intersection, with the minimal size of an intersection given by 4 nodes. 
    %(We used \emph{Stellar Observatory} to compute these numbers.)
\end{ex}

As shown by the examples above, the definition of the FBAS allows quite intricate systems of quorum slices and resulting quorums. We will now introduce special FBAS classes that are of some practical relevance and can be analyzed more easily than the general case. The idea of these classes is that the quorum slices of each node are given by all possible subsets of a fixed cardinality that can be drawn from a certain fixed subset of the nodes.

\begin{definition}[simple and symmetric FBAS]\label{def:simple}
    An FBAS $(V,S)$ is called \emph{simple} when there exist functions $q: V \rightarrow {\mathcal P}(V)$ and $n: V\rightarrow \N$, such that for each node $v\in V$ we have $v\in q(v)$, $1\leq n(v)\leq |q(v)|$, and
    $$S(v)=\{U\subseteq q(v)\mid v\in U\;\mbox{and}\;|U| = n(v)\}.$$
    We denote a simple FBAS by $(V, q, n)$. A simple FBAS is called \emph{symmetric} when for all $v\in V$ we have $q(v)=V$ and $n(v)=k$ for some fixed $k$, where $1\leq k\leq |V|$, and we denote such an FBAS by $(V,k)$. 
\end{definition}

Note that Example~\ref{ex:trivial}(1) is a simple FBAS with $q(v_i)=\{v_i\}$ and $n(v_i)=1$, Example~\ref{ex:trivial}(2) is a symmetric simple FBAS with $k=n$, and Example~\ref{ex:two} is a symmetric simple FBAS with $k=2$.

\begin{lem}\label{lem:simplequorum}
    Let $(V,k)$ be a symmetric simple FBAS. Then the set of all quorums is given by $\{Q\subseteq V\,\mid\,|Q|\geq k\}$, and the FBAS has quorum intersection if and only if $k>|V|/2$.
\end{lem}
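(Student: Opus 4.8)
The plan is to establish the two assertions in turn, both by direct counting arguments. For the first assertion I would begin by unfolding the definitions: in a symmetric simple FBAS $(V,k)$ the quorum slices of a node $v$ are exactly the $k$-element subsets of $V$ that contain $v$. Hence, given a nonempty $Q\subseteq V$, checking that $Q$ is a quorum amounts to finding, for every $v\in Q$, a $k$-element subset of $Q$ containing $v$; such a subset exists if and only if $v\in Q$ and $|Q|\geq k$ (extend $\{v\}$ by $k-1$ further elements of $Q$). Since this condition holds simultaneously for all $v\in Q$ precisely when $|Q|\geq k$, a nonempty set $Q$ is a quorum if and only if $|Q|\geq k$; and because $k\geq 1$, the inequality $|Q|\geq k$ already rules out $Q=\emptyset$. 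This yields the claimed description $\{Q\subseteq V\mid |Q|\geq k\}$ of the set of quorums.

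For the second assertion I would treat the two directions separately. If $k>|V|/2$, then for any two quorums $Q_1,Q_2$ the inclusion--exclusion identity gives $|Q_1\cap Q_2|=|Q_1|+|Q_2|-|Q_1\cup Q_2|\geq 2k-|V|>0$, using $|Q_1|,|Q_2|\geq k$ from the first part and $|Q_1\cup Q_2|\leq|V|$; hence the FBAS has quorum intersection. Conversely, if $k\leq|V|/2$, I would exhibit two disjoint quorums: choose any $Q_1\subseteq V$ with $|Q_1|=k$, and then choose $Q_2\subseteq V\setminus Q_1$ with $|Q_2|=k$. This is possible because $|V\setminus Q_1|=|V|-k\geq k$; by the first part both $Q_1$ and $Q_2$ are quorums, they are nonempty, and they are disjoint, so quorum intersection fails.

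There is no serious obstacle in this proof; everything is elementary and self-contained. The one spot that warrants a line of care is the inequality $|V|-k\geq k$ in the converse direction when $|V|$ is odd: there $k\leq|V|/2$ together with $k\in\N$ forces $2k\leq|V|-1$, so the complement of a $k$-element set is indeed large enough to accommodate a second, disjoint $k$-element set. I would also remark that the same inclusion--exclusion bound shows, more quantitatively, that when $k>|V|/2$ any two quorums of $(V,k)$ intersect in at least $2k-|V|$ nodes.
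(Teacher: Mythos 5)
Your proof is correct and follows essentially the same route as the paper: the same description of quorums as the sets of size at least $k$, disjoint $k$-element quorums when $k\leq|V|/2$, and a counting (pigeonhole/inclusion--exclusion) argument for intersection when $k>|V|/2$. You merely spell out the details that the paper leaves as ``easy to see.''
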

\begin{proof}
    It is easy to see that $\{Q\subseteq V\,\mid\,|Q|\geq k\}$ is the set of all quorums, since 
    for each node $v\in V$ its quorum slices are given by all subsets of $V$ containing exactly $k$ elements (including $v$ itself). If $k\leq |V|/2$, then there exist two disjoint quorums. On the other hand, if $k>|V|/2$, then the pigeonhole principle implies that the quorums are pairwise intersecting.
\end{proof}

%\subsection{The Trust Graph} 
%\begin{rmk}\label{rmk:digraph}
%\end{rmk}

\subsection{The trust graph}\label{sec:trust-graph}

It is tempting to think of and to visualize the network of relationships that is represented by an FBAS as a conventional directed graph, where the vertices are given by the elements of $V$, and an edge between $v_i$ and $v_j$ exists when $v_j$ is contained in a quorum slice of $v_i$. While in general there is no one-to-one correspondence between directed graphs and FBAS, the representation of an FBAS by a directed graph is a useful tool for their analysis because it describes the \emph{network of trust}. 

%In Remark~\ref{rmk:digraph} we mentioned that every FBAS can be represented in a simplified way as a directed graph. Although this graph does not contain sufficient information to completely describe the FBAS -- there is no one-to-one correspondence -- it is a useful tool for analyzing FBAS because it describes the network of trust.

\begin{definition}[trust graph and SCCs]\label{def:trust-graph}
    The \emph{trust graph} of the FBAS $(V,S)$ is the directed graph $G=(V,E)$, where for every $u,v\in V$ we have $(u,v)\in E$ if 
    $v\in s$ for some $s\in S(u)$.\\
    For every $u,v\in V$ we say that $v$ is \emph{reachable} from $u$ if there is a path from $u$ to $v$ in $G$. We say that a set $U\subseteq V$ is \emph{strongly connected} in $G$ if for every $u,v\in U$ the node $u$ is reachable from $v$. Furthermore, we say that a nonempty set $C\subseteq V$ is a \emph{strongly connected component} (abbreviated \emph{SCC}) of $G$ if it is strongly connected and no proper superset of $C$ is strongly connected.
\end{definition}

Note that due to the last condition, whenever $C$ is an SCC and there is a node $v\in V$ such that every node in $C$ is reachable from $v$, and $v$ is reachable from every node in $C$, then $v\in C$. The following result follows directly from the transitivity of the reachability relation.

\begin{lem}\label{lem:SCC}
    If $C$ and $D$ are SCCs, then the following hold:
    \begin{enumerate}[(1)]
        \item $C=D$ or $C\cap D=\emptyset$.
        \item Let $c\in C$ and $d\in D$ be such that $d$ is reachable from $c$. Then every node in $D$ is reachable from every node in $C$.
    \end{enumerate}
\end{lem}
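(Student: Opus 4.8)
The plan is to derive both statements from the transitivity of reachability, together with the maximality built into the definition of an SCC (Definition~\ref{def:trust-graph}); I will also use that reachability is reflexive, since a node is joined to itself by the trivial (length-zero) path, which keeps singleton SCCs and degenerate sub-cases from being exceptional.

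I would begin with part~(2), as it is the more elementary of the two. Given $c\in C$ and $d\in D$ with $d$ reachable from $c$, I take arbitrary $c'\in C$ and $d'\in D$ and chain three reachability relations: $c$ is reachable from $c'$ because $C$ is strongly connected, $d$ is reachable from $c$ by hypothesis, and $d'$ is reachable from $d$ because $D$ is strongly connected. Transitivity then yields that $d'$ is reachable from $c'$, which is exactly the claim.

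For part~(1), I would argue by contradiction on the failure of disjointness. Suppose $x\in C\cap D$ but $C\neq D$. I claim $C\cup D$ is strongly connected: for any two of its nodes $a,b$ one shows $a$ is reachable from $b$ by routing through $x$, using that $x$ lies in both $C$ and $D$ — namely $b$ reaches $x$ by strong connectedness of whichever of $C,D$ contains $b$, and $x$ reaches $a$ likewise — so transitivity closes the chain; the sub-cases where $a$ and $b$ lie in the same component, or coincide with $x$, are immediate. Since $C\cup D$ is strongly connected and contains the SCC $C$ as a subset, maximality of $C$ forces $C=C\cup D$, hence $D\subseteq C$; by symmetry $C\subseteq D$, so $C=D$, contradicting the assumption.

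As for the main obstacle: there really is not a substantial one — this is precisely the kind of statement the text flags as following ``directly'' from transitivity. The only points requiring a little care are (i) treating the reachability relation as reflexive so that singleton SCCs and the coincidence sub-cases in part~(1) need no separate handling, and (ii) invoking the correct half of the SCC definition, namely that no proper superset of an SCC is strongly connected, when collapsing $C\cup D$ back to $C$.
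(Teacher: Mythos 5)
Your proof is correct and follows exactly the route the paper indicates: the paper gives no written proof beyond the remark that the lemma ``follows directly from the transitivity of the reachability relation,'' and your argument is a faithful elaboration of that, using transitivity for part~(2) and transitivity plus the maximality clause in the SCC definition for part~(1). Your explicit attention to reflexivity of reachability is a reasonable precaution, consistent with the paper's later claim that the SCCs partition $V$.
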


The first statement of Lemma~\ref{lem:SCC} implies that the set of SCCs form a partition of $V$ into disjoint nonempty subsets. The second statement implies that the following notion of reachability of SCCs is well-defined and is a partial order on SCCs.

\begin{definition}[maximal and greatest SCC]\label{def:SCC-props}
    Let $C$ and $D$ be SCCs. We say that $D$ is \emph{reachable} from $C$ if there are $c\in C$ and $d\in D$ such that $d$ is reachable from $c$. We say that an SCC is \emph{maximal} if no other SCC is reachable from it. We say that an SCC is the \emph{greatest} SCC if it is reachable from every other SCC.
\end{definition}

\begin{ex}[Continuation of Example~\ref{ex:FBAS_fig7}]
    The strongly connected components of the FBAS of Example~\ref{ex:FBAS_fig7} are $\{1,2,3\}$, $\{4,5,6\}$ and $\{7\}$; see Figure~\ref{fig:FBAS_fig7}, which also shows how the reachability relation carries over to the SCCs. Note that the SCC $\{7\}$ is the only maximal SCC and also the greatest SCC.
    
    \begin{figure}
        \centering
        \begin{tikzpicture}[auto,
            level distance = 1.25cm,
            node/.style={circle,fill=white,draw},
            edge_style/.style={draw=orange, line width=2, ultra thick},
            bi_dir_e/.style={edge from parent/.style={<->,>=stealth',draw}}]
            
            \node[node] (7) {7};
            \node[circle, draw=black, fit=(7), dashed] (q3) {};
            
            \node[node, below right= 6mm and 19mm of q3] (2) {2}
                child[bi_dir_e]{ node [node] (1) {1}}
                child[bi_dir_e]{ node [node] (3) {3}};
            \draw[<->,>=stealth'] (1) -- (3);
            
            \node[node, below left= 6mm and 19mm of q3] (5) {5}
                child[bi_dir_e]{ node [node] (4) {4}}
                child[bi_dir_e]{ node [node] (6) {6}};
            \draw[<->,>=stealth'] (4) -- (6);
            
            \node[circle, draw=black, fit=(1) (2) (3), dashed, below right=6mm and 10mm of q3] (q2) {};
            \node[circle, draw=black, fit=(4) (5) (6), dashed, below left=6mm and 10mm of q3] (q1) {};
            \draw[->,>=stealth'] (q1) -- (q3);
            \draw[->,>=stealth'] (q2) -- (q3);
        \end{tikzpicture}
        \caption{The strongly connected components of the FBAS from Example~\ref{ex:FBAS_fig7}.}\label{fig:FBAS_fig7}  
    \end{figure}
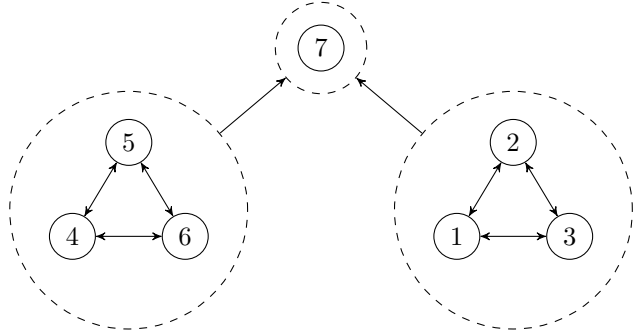
\end{ex}

The following result follows from the finiteness of the set $V$.

\begin{lem}\label{lm:maximal-SCC}
    There is at least one maximal SCC. If there is only one maximal SCC, then it is the greatest SCC. The greatest SCC is also maximal. 
\end{lem}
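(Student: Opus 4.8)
The plan is to forget about the graph almost entirely and argue with the finite partially ordered set of SCCs, ordered by reachability. The excerpt already records (in the discussion following Lemma~\ref{lem:SCC}) that this relation is a partial order, so in particular it is reflexive --- every SCC is reachable from itself via a trivial path --- and antisymmetric: if $D$ is reachable from $C$ and $C$ is reachable from $D$ for SCCs $C,D$, then $C=D$. I would make this antisymmetry explicit at the outset, since it is really the only non-formal ingredient: if both directions of reachability held with $C\neq D$, then Lemma~\ref{lem:SCC}(2) would make $C\cup D$ strongly connected, contradicting the maximality of $C$ in its definition as an SCC, once one also notes via Lemma~\ref{lem:SCC}(1) that $C$ and $D$ cannot properly contain one another. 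The set of SCCs is finite and nonempty (it partitions $V$, and $V$ itself is a quorum, hence nonempty), and all three assertions are then standard poset facts that hold precisely because the poset is finite.

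For the existence of a maximal SCC, I would fix any SCC $C_0$ and let $\mathcal{C}$ be the collection of SCCs reachable from $C_0$, which is finite and nonempty since it contains $C_0$. Pick $M\in\mathcal{C}$ from which no \emph{other} member of $\mathcal{C}$ is reachable; such an $M$ exists because $\mathcal{C}$ is finite and partially ordered by reachability. Then $M$ is in fact maximal in the whole poset of SCCs: if an SCC $D\neq M$ were reachable from $M$, transitivity of reachability would give $D$ reachable from $C_0$, hence $D\in\mathcal{C}$, contradicting the choice of $M$.

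For the second assertion, suppose $M$ is the unique maximal SCC and let $C$ be an arbitrary SCC. Applying the construction of the previous paragraph with $C_0:=C$ yields an SCC that is reachable from $C$ and maximal in the whole poset; by uniqueness this SCC equals $M$, so $M$ is reachable from $C$. As $C$ was arbitrary, $M$ is reachable from every other SCC, i.e., $M$ is the greatest SCC. For the third assertion, let $G$ be the greatest SCC and suppose some SCC $D\neq G$ is reachable from $G$; since $G$ is greatest, $G$ is reachable from $D$, and antisymmetry then forces $D=G$, a contradiction. Hence no other SCC is reachable from $G$, which is exactly what it means for $G$ to be maximal.

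I do not expect a serious obstacle here: the content is mostly the translation of ``maximal'' and ``greatest'' into poset language, plus the observation that finiteness of $V$ (hence of the set of SCCs) is what makes ``unique maximal implies greatest'' valid --- this can fail in infinite posets. The one step worth writing out carefully, rather than merely asserting, is the antisymmetry of reachability on SCCs, since that is where the actual structure of the trust graph and Lemma~\ref{lem:SCC} genuinely enter; everything after that is bookkeeping.
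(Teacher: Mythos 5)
Your proof is correct and follows exactly the route the paper intends: the paper states this lemma without proof, remarking only that it ``follows from the finiteness of the set $V$,'' and your argument is the natural elaboration of that remark via the finite poset of SCCs under reachability. The one substantive ingredient you rightly isolate --- antisymmetry of reachability on SCCs via Lemma~\ref{lem:SCC} --- is handled correctly, and your observation that finiteness is genuinely needed for ``unique maximal implies greatest'' is a worthwhile addition.
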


Let us now introduce a generalization of maximal SCCs.

\begin{definition}[trust cluster]\label{def:trust-cluster}
    A set $Z\subseteq V$ is called a \emph{trust cluster} of $(V,S)$ if it is closed under reachability in $G$, i.e., if for every $u\in Z$ and $v\in V$ reachable from $u$, also $v\in Z$.
\end{definition}

\begin{lem}\label{lem:trust-cluster}
    A set $Z\subseteq V$ is a trust cluster if and only if it is the union of strongly connected components that are closed under reachability. In particular, every maximal SCC is a trust cluster.
\end{lem}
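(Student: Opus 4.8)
The plan is to prove the two implications of the equivalence separately, and then read off the ``in particular'' statement from the easier one. Throughout I will use Lemma~\ref{lem:SCC} together with the fact, recorded after Definition~\ref{def:trust-graph}, that every node lies in a unique SCC, so that the SCCs partition $V$. I read the phrase ``union of strongly connected components that are closed under reachability'' as: $Z=\bigcup_{C\in\mathcal C}C$ for some set $\mathcal C$ of SCCs that is closed under the reachability relation on SCCs of Definition~\ref{def:SCC-props}.

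For the forward implication, suppose $Z$ is a trust cluster. First I would show $Z$ is a union of SCCs: for any $u\in Z$, every node of the SCC $C$ containing $u$ is reachable from $u$ (since $C$ is strongly connected), hence lies in $Z$ by closure; thus $C\subseteq Z$, and $Z$ is the union of the SCCs of its elements. Next I would check that this collection $\mathcal C$ of SCCs is closed under reachability: if $C\in\mathcal C$ and an SCC $D$ is reachable from $C$, pick $c\in C\subseteq Z$ and $d\in D$ with $d$ reachable from $c$; closure of $Z$ gives $d\in Z$, and then the argument of the previous sentence applied to $d$ gives $D\subseteq Z$, so $D\in\mathcal C$.

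For the reverse implication, suppose $Z=\bigcup_{C\in\mathcal C}C$ with $\mathcal C$ a set of SCCs closed under reachability. Take $u\in Z$ and $v$ reachable from $u$; let $C\in\mathcal C$ be the SCC of $u$ and $D$ the SCC of $v$. Then $D$ is reachable from $C$ (witnessed by $u$ and $v$), so $D\in\mathcal C$ by assumption, whence $v\in D\subseteq Z$; thus $Z$ is closed under reachability, i.e. a trust cluster. The ``in particular'' claim then follows: if $C$ is a maximal SCC, then by definition no other SCC is reachable from it, so $\mathcal C=\{C\}$ is trivially closed under reachability and the reverse implication applies. (Alternatively one argues directly: if $u\in C$ and $v$ is reachable from $u$, the SCC $D$ of $v$ is reachable from $C$, so maximality forces $D=C$ and $v\in C$.)

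I expect no serious obstacle; the argument is essentially bookkeeping with the definitions. The one point that needs a little care is the interchangeability of ``closed under reachability'' read at the node level (Definition~\ref{def:trust-cluster}) and at the SCC level (Definition~\ref{def:SCC-props}), and the two implications above are precisely what make this precise, using that each SCC is strongly connected so that membership of a single node of an SCC in $Z$ drags the whole SCC into $Z$.
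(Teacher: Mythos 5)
Your argument is correct; note that the paper states this lemma without proof, treating it as routine bookkeeping with Definitions~\ref{def:trust-graph}--\ref{def:trust-cluster} and Lemma~\ref{lem:SCC}, and your write-up is exactly the verification it implicitly relies on (in particular, the key step that membership of one node drags its whole SCC into $Z$, plus Lemma~\ref{lem:SCC}(2) to pass between node-level and SCC-level reachability). Both your reduction of the ``in particular'' claim to the reverse implication and your direct alternative are fine.
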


If a node $v$ is in a trust cluster, then the cluster contains every node that $v$ trusts transitively. Moreover, we have the following basic properties.
%Now let us state some basic properties of trust clusters. 

\begin{lem}\label{lm:trust-cluster}
    Let $Z$ be a trust cluster of $(V,S)$.
    \begin{enumerate}[(1)]
        \item $(Z, S|_Z)$ is a well-defined FBAS, where $S|_Z$ is the restriction of $S$ to $Z$.
        \item Every quorum of $(Z, S|_Z)$ is also a quorum of $(V,S)$.
        \item $Z$ is a quorum of $(V,S)$.
    \end{enumerate}
\end{lem}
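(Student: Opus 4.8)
The plan is to first isolate the one structural fact that makes all three parts essentially immediate: if $u\in Z$ and $s\in S(u)$, then $s\subseteq Z$. Indeed, for each $v\in s$ the trust graph $G$ contains the edge $(u,v)$, so $v$ is reachable from $u$ via a path of length one; since $Z$ is closed under reachability, $v\in Z$. Hence $S(u)\subseteq{\mathcal P}(Z)$ for every $u\in Z$, which means the restriction $S|_Z$ of $S$ to the domain $Z$ already takes values in ${\mathcal P}({\mathcal P}(Z))$ without any need to shrink the slices.

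With this in hand, part (1) is a direct verification of the FBAS axioms for $(Z,S|_Z)$: the set $Z$ is finite as a subset of $V$; for each $v\in Z$ the value $S|_Z(v)=S(v)$ is nonempty because $(V,S)$ is an FBAS, and by the structural fact it is a subset of ${\mathcal P}(Z)$; and for each $v\in Z$ and $s\in S|_Z(v)=S(v)$ we still have $v\in s$. The only point needing a word of care is that ``restriction'' here must not be read as ``intersect each slice with $Z$'' — that operation could in principle produce an empty family, or slices no longer containing $v$ — but the structural fact shows the slices already lie inside $Z$, so no such issue arises. (Throughout I take trust clusters to be nonempty, which is implicit once one speaks of $Z$ being a quorum in part (3); the empty set is vacuously closed under reachability but is excluded.)

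For part (2), let $Q$ be a quorum of $(Z,S|_Z)$. By definition $Q$ is nonempty, $Q\subseteq Z\subseteq V$, and for every $v\in Q$ there is a slice $s\in S|_Z(v)$ with $s\subseteq Q$. Since $S|_Z(v)=S(v)$, this same $s$ is a slice of $v$ in $(V,S)$, and still $s\subseteq Q$; thus $Q$ satisfies the definition of a quorum of $(V,S)$. Part (3) then follows with no further work: by part (1), $(Z,S|_Z)$ is an FBAS, and in any FBAS the whole node set is a quorum (as noted right after Definition~\ref{def:FBAS}, using that $S|_Z(v)\neq\emptyset$), so $Z$ is a quorum of $(Z,S|_Z)$; applying part (2) to the quorum $Q=Z$ shows $Z$ is a quorum of $(V,S)$. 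Alternatively one can argue (3) directly: for each $v\in Z$ pick any $s\in S(v)$, which is nonempty, and observe $s\subseteq Z$ by the structural fact.

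I do not anticipate a genuine obstacle; the entire lemma rests on the single observation that closure under reachability forces every quorum slice of a node in $Z$ to be contained in $Z$. The only thing to handle carefully is the bookkeeping around what $S|_Z$ means and the (harmless) nonemptiness conventions.
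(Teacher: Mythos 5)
Your proposal is correct and follows essentially the same route as the paper: the key observation in both is that closure of $Z$ under reachability forces every quorum slice of a node in $Z$ to lie inside $Z$ (so $S|_Z(v)=S(v)$), after which (2) and (3) are immediate. You simply spell out the one-step reachability argument and the nonemptiness conventions more explicitly than the paper does.
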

\begin{proof}
    (1)~Let $v\in Z$. Then every quorum slice of $v$ is contained in $Z$, i.e., $S(v)\subseteq \mathcal P(Z)$ or, equivalently, $S(v)\in \mathcal P(\mathcal P(Z))$.
    
    (2)~Follows from the equality $S|_Z(v)=S(v)$ for every $v\in Z$.
    
    (3)~By~(2) and the fact that $Z$ is a quorum of $(Z, S|_Z)$.
\end{proof}
Note that Lemma~\ref{lm:trust-cluster} also applies if $Z$ is a maximal or the greatest SCC because of Lemma~\ref{lem:trust-cluster}.

\subsection{The Stellar network}

\emph{Stellar Core} 
%\footnote{\url{https://github.com/stellar/stellar-core}}, 
%which is the ``backbone'' of the Stellar network, 
uses hierarchical quorum slices, but the construction is more sophisticated than in Example~\ref{ex:hierarchy}. In order to describe this construction we need the following definition, which is based on a \emph{rooted tree}.

\begin{definition}[quorum slice definition]
    A \emph{quorum slice definition} $d=(t,N,C)$ consists of a threshold $t\in\mathbb{N}$, a set of nodes $N$, and a set of children definitions $C$, which satisfy the following conditions:
    \begin{enumerate}
        \item Each child $c\in C$ is a quorum slice definition and the recursively induced graph is a rooted tree with root $d$ and a finite number of vertices.
        \item For two vertices $(t_1,N_1,C_1)\neq(t_2,N_2,C_2)$ of the induced tree, the nodes do not intersect, i.e., $N_1\cap N_2=\emptyset$.
        \item $0\leq t \leq |N| + |C|$.
    \end{enumerate}
\end{definition}

For describing how a quorum slice definition is translated to quorum slices we need the following definitions.

\begin{definition}[$k$-subsets]\label{def:k-subsets}
For a finite set $M$ and an integer $k\leq |M|$ we define the $k$-subsets of $M$ by $\mathcal{P}_k(M)=\{N\subseteq M\mid |N|=k\}\subseteq \mathcal P(M)$.
\end{definition}

\begin{definition}[product set union]
For sets of sets $M_1,\ldots,M_k$ we define their product set union by
    \begin{align*}
      \cup_\times(M_1,\ldots,M_k)
        &=\{m_1\cup\cdots\cup m_k\mid (m_1,\ldots,m_k)\in M_1\times\ldots\times M_k\}\\ 
        &\subseteq \mathcal P\left( \bigcup (M_1\cup \cdots \cup M_k)\right).
    \end{align*}
\end{definition}

\begin{definition}[quorum slice generation]
Let $d=(t,N,C)$ be a quorum slice definition with threshold $t$, nodes $N=\{n_1,\ldots,n_l\}$, and children definitions $C=\{c_1,\ldots,c_m\}$. The quorum slices $s(d)$ are generated recursively with the following set of $l+m$ quorum slices sets $$B_{N,C}=\bigl\{\{\!\{n_1\}\!\},\ldots,\{\!\{n_l\}\!\},s(c_1),\ldots,s(c_m)\bigr\}$$
by considering all $t$-subsets of $B_{N,C}$ and reducing them to a single set of quorum slices with the product set union:
$$s(d)=\bigcup \{\cup_\times (M_1,\ldots,M_t) \mid \{M_1,\ldots M_t\}\in \mathcal{P}_t(B_{N,C})\}.$$
\end{definition}

Note that the above recursion is well-defined and finite because the tree has by definition a finite number of vertices and thus leaf vertices, i.e., where the set of children definitions is empty.

In \emph{Stellar Core} a quorum slice definition is transformed with respect to the node it is running so that the node itself is always contained in the generated quorum slices. For a formal description we also need the following definition.

\begin{definition}[removal of a node]\label{def:node-removal}
Let $d=(t,N,C)$ be a quorum slice definition and let $v$ be a node. The quorum slice definition with $v$ removed is defined by
$$ R_v(d)=
    \begin{cases}
        (t-1,N\setminus \{v\},C),&\text{if $v\in N$},\\
        (t,N,\{R_v(c)\mid c\in C\}),&\text{otherwise}.
    \end{cases}
$$
\end{definition}

A \emph{Stellar Core} instance running on node $v$ that is given a quorum slice definition $d$ actually uses the quorum slice definition $(2,\{v\},\{R_v(d)\})$.

\begin{ex}[Stellar network]\label{ex:stellar-core}
    The current Stellar network 
    consists of 20 nodes in its greatest SCC (see Definition~\ref{def:SCC-props}), which are run by 6 organizations\footnote{\emph{Network transitive quorumset} taken from \url{https://stellarbeat.io} on Nov. 7, 2019.}. 
    The nodes are $V=A\cup B\cup C\cup D\cup E\cup F$ with
    \begin{align*}
        A&=\{a_1,a_2,a_3\},\quad
        B=\{b_1,b_2,b_3\},\quad
        C=\{c_1,c_2,c_3\},\\
        D&=\{d_1,d_2,d_3\},\quad
        E=\{e_1,e_2,e_3\},\quad
        F=\{f_1,f_2,f_3,f_4,f_5\}.
    \end{align*}
    The base quorum slice definition is
    $$d=(5,\emptyset,\{(2,A,\emptyset),\ldots,(2,E,\emptyset),(3,F,\emptyset)\}).$$
    Thus, all quorum slices need to cover 5 organizations, and within each organization two nodes must be present (except for $F$, which requires three nodes to be present). The quorum slices of $v\in V$ are given by 
    $$S(v)=s(d_v),$$ where $d_v=(2,\{v\},\{R_v(d)\})$ is the transformed quorum slice definition with $v$ moved to the top level. \\
    The nodes in $V\setminus F$ have \numprint{3132} quorum slices and the nodes in $F$ have \numprint{2673} quorum slices.
    A computation in \emph{Stellar Observatory} shows that the FBAS $(V,S)$ has \numprint{114688} quorums, and it has quorum intersection with the minimal size of an intersection given by 4 nodes. 
\end{ex}
\medskip

\section{Algorithmic treatment of quorums}\label{sec:alg-quorums}

In this section we will first show that the quorum intersection decision problem is NP-complete, and we will then discuss algorithms for enumerating the quorums of a given FBAS and for deciding whether an FBAS has quorum intersection.

\subsection{Checking quorum intersection is NP-complete}\label{sec:NPcomplete}

Lachowski~\cite{Lac19} showed that the problem of deciding whether a given FBAS $(V,S)$ has two disjoint quorums (Disjoint Quorums Problem) is NP-complete. His proof relies on a polynomial-time reduction from the NP-complete Set Splitting Problem~\cite{Garey79} to the Disjoint Quorums Problem. Given an instance of the Set Splitting Problem (a family $F$ of subsets of a given set $S$), his reduction builds an instance of the Disjoint Quorums Problem such that it is only possible to partition the set $S$ into subsets $S_1, S_2$ in a way that all elements of $F$ are split by this partition, if and only if there are disjoint quorums in the constructed FBAS.

Here we will show that the decision problem is NP-complete even if we restrict it to the class of \emph{simple} FBAS; see Definition~\ref{def:simple}. In order to fix the terminology, we say that an FBAS $(V,S)$ has \emph{quorum split} if there exist two disjoint quorums, and we let $\textsc{SimpleQuorumSplit}$ be the problem to decide whether a given \emph{simple} FBAS does have quorum split.   

\begin{thm}\label{thm:quorumsplit-np}
    \textsc{SimpleQuorumSplit} is NP-complete.
\end{thm}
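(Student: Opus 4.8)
I would first record the following description of quorums in a simple FBAS $(V,q,n)$: a nonempty set $Q\subseteq V$ is a quorum if and only if $|Q\cap q(v)|\ge n(v)$ for every $v\in Q$. Indeed, a slice of $v$ contained in $Q$ is exactly an $n(v)$-element subset of $Q\cap q(v)$ containing $v$, and since $v\in Q\cap q(v)$ whenever $v\in Q$, such a subset exists precisely when $|Q\cap q(v)|\ge n(v)$. Given this, a certificate for ``quorum split'' is a pair of subsets $Q_1,Q_2\subseteq V$, and a verifier checks in polynomial time that $Q_1,Q_2$ are nonempty, disjoint, and satisfy $|Q_i\cap q(v)|\ge n(v)$ for every $v\in Q_i$. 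Hence \textsc{SimpleQuorumSplit} is in NP.

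\textbf{NP-hardness via a gadget.} For hardness I would reduce from the Disjoint Quorums Problem on general FBAS, which Lachowski proved NP-complete (equivalently one could reduce from Set Splitting as he did, but reusing his result is shorter). Given an arbitrary FBAS $(V,S)$ with its slices listed explicitly, write $S(v)=\{s_1^v,\dots,s_{p_v}^v\}$ for each $v\in V$, where $p_v\ge 1$ and $v\in s_i^v$ for all $i$. Build a simple FBAS $(V',q,n)$ by adjoining one fresh gadget node $g_i^v$ for each pair $(v,i)$: set $V'=V\cup\{g_i^v\mid v\in V,\ 1\le i\le p_v\}$, and put $q(v)=\{v\}\cup\{g_1^v,\dots,g_{p_v}^v\}$ with $n(v)=2$ for original nodes, and $q(g_i^v)=\{g_i^v\}\cup s_i^v$ with $n(g_i^v)=|s_i^v|+1$ for gadget nodes. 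One checks $v\in q(v)$, $g_i^v\in q(g_i^v)$, and $1\le n(\cdot)\le |q(\cdot)|$ in each case, so $(V',q,n)$ is a valid simple FBAS, and it is computable in polynomial time from the input.

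\textbf{The quorum dictionary.} The crux is the correspondence between quorums. A gadget node acts as an \emph{AND gate}: since $n(g_i^v)=|q(g_i^v)|$, if $g_i^v$ lies in a quorum $Q'$ then all of $s_i^v\subseteq Q'$ (in particular $v\in Q'$, as $v\in s_i^v$). An original node acts as an \emph{OR gate}: since $n(v)=2$ and $v\in q(v)$, if $v\in Q'$ then some $g_i^v\in Q'$. Chaining these, for any quorum $Q'$ of $(V',q,n)$ the set $Q:=Q'\cap V$ is nonempty (some gadget node in $Q'$ forces its nonempty slice, hence a member of $V$, into $Q'$; and if $Q'$ contains an original node the same conclusion is immediate) and is a quorum of $(V,S)$ (each $v\in Q$ forces some $g_i^v\in Q'$, which forces $s_i^v\subseteq Q$). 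Conversely, given a quorum $Q$ of $(V,S)$, pick for each $v\in Q$ a slice $s_{i(v)}^v\subseteq Q$ and set $Q'=Q\cup\{g_{i(v)}^v\mid v\in Q\}$; this is a quorum of $(V',q,n)$ with $Q'\cap V=Q$. Because every gadget node is ``owned'' by exactly one original node, this dictionary preserves disjointness in both directions: disjoint quorums of $(V,S)$ yield disjoint quorums of $(V',q,n)$, and two disjoint quorums of $(V',q,n)$ restrict to two disjoint nonempty quorums of $(V,S)$. Thus $(V,S)$ has two disjoint quorums iff $(V',q,n)$ has quorum split, which finishes the reduction and the proof.

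\textbf{Expected main obstacle.} The NP-membership check and the translation bookkeeping are routine; the delicate part is designing the gadget so that it is genuinely \emph{simple} --- each node's slice set must be \emph{all} $n$-subsets of some set --- while faithfully implementing both the disjunction over a node's slices and the conjunction of nodes within a slice, and, above all, ensuring the auxiliary nodes cannot manufacture spurious disjoint quorums. This is exactly what the threshold choices ($n$ equal to out-degree for a conjunction, $n=2$ for a disjunction) plus single-ownership of gadget nodes are arranged to guarantee, and one must verify carefully that restricting a quorum of $(V',q,n)$ to $V$ always lands on a quorum of $(V,S)$ and never on the empty set.
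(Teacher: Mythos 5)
Your proof is correct, but it takes a genuinely different route to NP-hardness than the paper. The paper reduces directly from \textsc{3-SAT}, building a bespoke simple FBAS whose disjoint quorums encode satisfying assignments; you instead reduce from Lachowski's Disjoint Quorums Problem on \emph{general} FBAS by showing that an arbitrary FBAS can be simulated by a simple one. Your gadget is sound: setting $n(g_i^v)=|q(g_i^v)|$ makes the gadget node's unique slice equal to $\{g_i^v\}\cup s_i^v$ (a conjunction), setting $n(v)=2$ over $\{v\}\cup\{g_1^v,\dots,g_{p_v}^v\}$ makes $v$'s slices exactly the pairs $\{v,g_i^v\}$ (a disjunction over the original slices), the validity constraints $v\in q(v)$, $g_i^v\in q(g_i^v)$, $1\le n(\cdot)\le|q(\cdot)|$ all hold (using $p_v\ge1$, which the FBAS definition guarantees), and the single-ownership of gadget nodes together with the observation that every quorum of $(V',q,n)$ meets $V$ makes the disjointness correspondence go through in both directions. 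The construction is polynomial in the explicit slice representation. What each approach buys: yours is shorter and modular, and as a by-product establishes the independently interesting fact that general FBAS embed into simple FBAS with only polynomial blowup while preserving quorums and their disjointness; its only cost is that it inherits correctness from Lachowski's reduction (which the paper cites and accepts). The paper's 3-SAT reduction is self-contained relative to a textbook NP-complete problem and exhibits an explicit hard family of simple FBAS, at the price of a longer case analysis.
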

\begin{proof}
	\textsc{SimpleQuorumSplit} is obviously in NP because it can be checked in polynomial time whether a set is a quorum (see Lemma~\ref{lm:isQuorum} below) and, therefore, whether two witness sets are indeed disjoint quorums.

	It remains to show that \textsc{SimpleQuorumSplit} is NP hard. We provide a polynomial-time reduction of the \textsc{3-SAT} \cite{Kar72} problem to \textsc{SimpleQuorumSplit}.
	Consider an instance $\varphi = (l_1^1\vee l_1^2\vee l_1^3)\wedge \cdots \wedge (l_m^1\vee l_m^2\vee l_m^3)$ of \textsc{3-SAT}, where the literals
	$l_i^j$ are either variables or negated variables taken from the set $x_1, \ldots, x_r$ of variables. We construct a simple FBAS $(V, q, n)$ that has quorum split if and only if $\varphi$ is satisfiable. Let
    \begin{align}
        V &=\{\star\}\cup\{a_k, x_k, \neg x_k\mid 1\leq k\leq r\}\cup\{b_i, c_i^1, c_i^2, c_i^3\mid 1\leq i\leq m\},\nonumber\\
        q(\star) &=\{\star\}\cup \{b_i\mid 1\leq i\leq m\}\;\mbox{and $n(\star)=m + 1$,}\label{proof:quorumsplit-np-def-star}\\
        q(a_k) &=\{a_k,x_k,\neg x_k\}\;\mbox{and $n(a_k)=2$ for $1\leq k\leq r$},\label{proof:quorumsplit-np-def-ak}\\
        q(l) &=\bigl\{l, a_{(k\bmod r)+1}\bigr\}\cup\bigl\{c_i^j\mid l_i^j = l, 1\leq i\leq m, 1\leq j\leq 3\bigr\}\; \mbox{and}\label{proof:quorumsplit-np-def-ql}\\
        n(l)& =|q(l)|\;\mbox{for $1\leq k\leq r$ and $l\in\{x_k, \neg x_k\}$},\\
        q(b_i) &=\{b_i, c_i^1, c_i^2, c_i^3\}\;\mbox{and $n(b_i)=2$ for 
        $1\leq i\leq m$,}\label{proof:quorumsplit-np-def-qb}\\
        q(c_i^j) &=\{c_i^j, \star, a_1\}\; \mbox{and $n(c_i^j)=2$ for $1\leq i\leq m$ 
        and $1\leq j\leq 3$.}\label{proof:quorumsplit-np-def-qc}
    \end{align}

	Obviously, $(V,q,n)$ is a valid simple FBAS and can be constructed in time polynomial in the size of $\varphi$. It remains to show that $(V, q, n)$ has quorum split if and only if $\varphi$ is satisfiable.

	``$\Leftarrow$'': Suppose that $\varphi$ is satisfiable and let $\alpha:\{x_1, \ldots, x_r\}\rightarrow \{\mathit{true}, \mathit{false}\}$ be a variable assignment that satisfies $\varphi$.
	Let
	\begin{align*}
		L=\{x_k\mid \alpha(x_k)=\mathit{false},\; 1\leq k\leq r\}\cup\{\neg x_k\mid \alpha(x_k)=\mathit{true},\; 1\leq k\leq r\}
	\end{align*}
	be the literals that are false under $\alpha$. Consider the two sets
	\begin{align*}
		Q_1 &=\{a_k\mid 1\leq k\leq r\}\cup L\cup\{c_i^j\mid l_i^j\in L, 1\leq i\leq m, 1\leq j\leq 3\},\\
		Q_2 &=\{\star\}\cup \{b_i\mid 1\leq i\leq m\}\cup\{c_i^j\mid l_i^j\not\in L, 1\leq i\leq m, 1\leq j\leq 3\}.
    \end{align*}
	Clearly, $Q_1$ and $Q_2$ are disjoint. Moreover, $Q_1$ and $Q_2$ are quorums of $(V,q,n)$ because for every $1\leq i\leq m, 1\leq j\leq 3$ and $1\leq k\leq r$ we have:
	\begin{itemize}
		\item $x_k\in L$ or $\neg x_k\in L$, and hence by~\eqref{proof:quorumsplit-np-def-ak} either $\{a_k, x_k\}$ or $\{a_k, \neg x_k\}$ is a quorum slice of $a_k$ contained in $Q_1$,
		\item $q(l)\subseteq Q_1$ for every $l\in L$ by~\eqref{proof:quorumsplit-np-def-ql}, and therefore every quorum slice of $l$ is contained in $Q_1$,
		\item if $l_i^j\in L$, then $\{c_i^j, a_1\}$ is a quorum slice of $c_i^j$ contained in $Q_1$ by~\eqref{proof:quorumsplit-np-def-qc},
		\item $\{\star, b_1, \ldots, b_m\}$ is a quorum slice of $\star$ contained in $Q_2$ by~\eqref{proof:quorumsplit-np-def-star},
		\item since $\alpha$ is a satisfying variable assignment for $\varphi$, there is a $1\leq j\leq 3$ such that $l_i^j\not\in L$; then $\{b_i, c_i^j\}$ is a quorum slice of $b_i$ contained in $Q_2$ by~\eqref{proof:quorumsplit-np-def-qb},
		\item if $l_i^j\not\in L$, then $\{c_i^j, \star\}$ is a quorum slice of $c_i^j$ contained in $Q_2$ by~\eqref{proof:quorumsplit-np-def-qc}.
	\end{itemize}

	``$\Rightarrow$'': Let $Q_1$ and $Q_2$ be disjoint quorums in $(V, q, n)$. We construct a satisfying variable assignment $\alpha$ for $\varphi$.
	First note that for every quorum $Q$ we have $\star\in Q$ or $a_1\in Q$, which follows from this list of observations:
	\begin{itemize}
		\item if $a_k\in Q$, then $x_k\in Q$ or $\neg x_k\in Q$ by~\eqref{proof:quorumsplit-np-def-ak},
		\item if $x_k\in Q$ or $\neg x_k \in Q$, then $a_{(k\bmod r)+1}\in Q$ by~\eqref{proof:quorumsplit-np-def-ql},
		\item if $b_i\in Q$, then $c_i^1\in Q$, $c_i^2\in Q$ or $c_i^3\in Q$ by~\eqref{proof:quorumsplit-np-def-qb},
		\item if $c_i^j\in Q$, then $\star \in Q$ or $a_1\in Q$ by~\eqref{proof:quorumsplit-np-def-qc}.
	\end{itemize}
	Since $Q_1$ and $Q_2$ are disjoint, we assume without loss of generality that $a_1\in Q_1$ and $\star\in Q_2$. We immediately have $\{b_1, \ldots, b_m\}\subseteq Q_2$ due to~\eqref{proof:quorumsplit-np-def-star}, and from the list of
	observations above we derive that $\{a_1, \ldots, a_r\}\subseteq Q_1$ and that, for every $1\leq k\leq r$, there is an $l_k \in \{x_k, \neg x_k\} \cap Q_1$. 
	
	Now set $\alpha(x_k)=\mathit{true}$ if $l_k=\neg x_k$, and set $\alpha(x_k)=\mathit{false}$ if $l_k=x_k$. In order to show that $\alpha$ satisfies $\varphi$, it
	suffices to show that for every $1\leq i\leq m$, there is a $1\leq j\leq 3$ such that $l_i^j$ is $\mathit{true}$ under $\alpha$. Since $Q_2$ is a quorum and $b_i\in Q_2$,
	there is a $1\leq j\leq 3$ such that $c_i^j\in Q_2$ by~\eqref{proof:quorumsplit-np-def-qb}. We show that $l_i^j$ is $\mathit{true}$ under~$\alpha$. Observe that $l_i^j\not\in Q_1$,
	otherwise we would have $c_i^j\in Q_1$ as $Q_1$ is a quorum and due to~\eqref{proof:quorumsplit-np-def-ql}, contradicting $c_i^j\in Q_2$. There is a $1\leq k\leq r$ with $l_i^j=x_k$ or $l_i^j=\neg x_k$.
	Then, $l_i^j\not=l_k$ as $l_k\in Q_1$, i.e., $l_k=\neg x_k$ if $l_i^j=x_k$, and $l_k=x_k$ if $l_i^j=\neg x_k$. By the definition of $\alpha$, $l_i^j$ is $\mathit{true}$ under $\alpha$.
\end{proof}

In the following two subsections we will discuss algorithms for
enumerating the quorums of a given FBAS and for deciding whether
an FBAS has quorum intersection. These algorithms are based on the work of Lachowski~\cite{Lac19}. Lachowski describes them only on a conceptual level which gives room for interpretation when implementing both algorithms. For this purpose we base our discussion additionally on the current implementation in \emph{Stellar Core}\/\footnote{\url{https://github.com/stellar/stellar-core/blob/b30a7023e6b51843ac2d3763d04ef7f0ad56ad73/src/herder/QuorumIntersectionCheckerImpl.h}}. We remark that the original implementation of the quorum intersection algorithm in \emph{Stellar Core} was erroneous\/\footnote{\url{https://github.com/stellar/stellar-core/issues/2267}}. The misinterpretation of Lachowski's account might be partially due to the fact that he merely gave a sketch of these algorithms. For these reasons our main purpose here is to give an in-depth and precise treatment of both algorithms.

\subsection{Quorum enumeration}\label{sec:quorum-enumeration}
    In order to describe the computational cost of the algorithms, we need to define a measure for the size of the input. For the purposes of this definition we distinguish between simple and general FBAS.
    
    \begin{definition}\label{def:fbas-size}
        The \emph{slice size} $\Vert v\Vert$ of a node $v$ is defined as
        \begin{enumerate}[(1)]
            \item $\Vert v\Vert = \sum\nolimits_{s\in S(v)}|s|$,
            for a general FBAS $(V,S)$, and
            \item $\Vert v\Vert = |q(v)|$, for a simple FBAS $(V, q, n)$.
        \end{enumerate}
        The \emph{size} of an FBAS $F=(V,S)$ or $F=(V,q,n)$, respectively, is defined as $\Vert F\Vert =|V| + \sum_{v\in V}\Vert v\Vert$.
    \end{definition}
    
    The most fundamental operation for all algorithms discussed in this section is to decide whether a given node has a quorum slice contained in a given subset of $V$, see function {\sc containsSlice} in Figure~\ref{fig:alg-contains-slice}.
    
    \begin{lem}\label{lm:alg-contains-slice}
        Given $U\subseteq V$ and $v\in U$ the function {\sc containsSlice} decides in time $O(\Vert v\Vert)$ whether $v$ has a quorum slice contained in $U$.
    \end{lem}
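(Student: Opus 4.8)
The plan is to verify that the procedure {\sc containsSlice} of Figure~\ref{fig:alg-contains-slice} is correct and to bound its running time, treating separately the two cases of Definition~\ref{def:fbas-size}: a general FBAS $(V,S)$ and a simple FBAS $(V,q,n)$. Throughout I would assume, as is done in the implementation, that the argument set $U$ is supplied in a representation (a Boolean array indexed by $V$, or a bitset) for which the test ``$w\in U$?'' costs $O(1)$ and which is maintained by the caller rather than rebuilt inside the call; this is essential, since constructing a size-$|V|$ representation of $U$ would cost $O(|V|)$, which need not be $O(\Vert v\Vert)$.

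\emph{General FBAS.} Here $v$ has a quorum slice contained in $U$ if and only if some $s\in S(v)$ satisfies $s\subseteq U$, so {\sc containsSlice} iterates over the slices $s\in S(v)$, and for each $s$ scans the nodes $w\in s$ testing whether $w\in U$, aborting the scan of $s$ as soon as some $w\notin U$ is found and moving on to the next slice; it returns \emph{true} the first time a slice is found with all of its nodes in $U$, and \emph{false} otherwise. Correctness is immediate from Definition~\ref{def:FBAS}. For the running time, while scanning a slice $s$ each node $w\in s$ is inspected at most once, so the total number of membership tests is at most $\sum_{s\in S(v)}|s|=\Vert v\Vert$; together with the $O(|S(v)|)\le O(\Vert v\Vert)$ loop overhead this yields $O(\Vert v\Vert)$.

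\emph{Simple FBAS.} Here the quorum slices of $v$ are exactly the $n(v)$-element subsets of $q(v)$ that contain $v$ (Definition~\ref{def:simple}). Since $v\in U$ by hypothesis and $v\in q(v)$, such a subset contained in $U$ exists if and only if $|q(v)\cap U|\ge n(v)$: if the intersection has at least $n(v)$ elements, one may take $v$ together with $n(v)-1$ further elements of $q(v)\cap U$; conversely, any slice of $v$ contained in $U$ is itself an $n(v)$-element subset of $q(v)\cap U$. Hence {\sc containsSlice} makes a single pass over $q(v)$, counts the elements lying in $U$, and compares the count with $n(v)$, which is $|q(v)|=\Vert v\Vert$ membership tests plus $O(1)$ extra work, i.e.\ $O(\Vert v\Vert)$.

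The only point that requires any care—the ``main obstacle'' such as it is—is precisely this bookkeeping about the input representation: one must ensure the procedure never touches data outside $S(v)$ (respectively outside $q(v)$ and $n(v)$), and in particular that the $O(1)$-query structure for $U$ is provided and not recomputed, so that the bound is the claimed $O(\Vert v\Vert)$ and not $O(\Vert v\Vert+|V|)$. Everything else is a direct unfolding of Definitions~\ref{def:FBAS}, \ref{def:simple} and~\ref{def:fbas-size}.
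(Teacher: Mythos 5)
Your proof is correct and follows essentially the same route as the paper's: correctness of the general case by direct unfolding of the definition, correctness of the simple case via the equivalence with $|q(v)\cap U|\geq n(v)$ (where you helpfully make explicit that $v\in U\cap q(v)$ guarantees the required slice contains $v$), and the time bound from $O(|s|)$ per slice respectively $O(|q(v)|)$ for the intersection. The remarks on the $O(1)$-membership representation of $U$ are a reasonable implementation-level elaboration of what the paper leaves implicit, not a different argument.
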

    \begin{proof}
       Figure~\ref{fig:alg-contains-slice} defines the function {\sc containsSlice} for general as well as for simple FBAS. The correctness of the former version is obvious. For the correctness of the latter version note that $U$ contains a slice of $v$ if and only if there are $n(v)$ nodes in $q(v)$ that are also in $U$ if and only if $q(v) \cap U$ has at least $n(v)$ elements.
       
       The complexity follows from the fact that $s\subseteq U$ can be decided in time $O(|s|)$ and that $U\cap q(v)$ can be computed in time $O(|q(v)|)$.
    \end{proof}
    
    \begin{figure}[tbh]
        \begin{center}
            \begin{tabular}{rl}
                \hline \multicolumn{2}{l}{{\sc containsSlice} (for general FBAS)}\\\hline
                \multicolumn{2}{l}{\emph{Input:} FBAS $(V, S)$; $U\subseteq V$; $v\in U$}\\
                \multicolumn{2}{l}{\emph{Output:} whether $v$ has a quorum slice contained in $U$}\\
                1&{\bf for} $s$ {\bf in} $S(v)$\\
                2&\qquad{\bf if} $s\subseteq U$ {\bf return} $\mathit{true}$\\
                3&{\bf return} $\mathit{false}$\\\hline\\
                \hline \multicolumn{2}{l}{{\sc containsSlice} (for simple FBAS)}\\\hline
                \multicolumn{2}{l}{\emph{Input:} simple FBAS $(V, q, n)$; $U\subseteq V$; $v\in U$}\\
                \multicolumn{2}{l}{\emph{Output:} whether $v$ has a quorum slice contained in $U$}\\
                1&{\bf return} $|U\cap q(v)|\geq n(v)$\\\hline
            \end{tabular}
        \end{center}
        \caption{Algorithm for the function {\sc containsSlice} for both general and simple FBAS.}\label{fig:alg-contains-slice}
    \end{figure}
    
    In the sequel we will not distinguish between general and simple FBAS anymore. To this end we will refer to any FBAS simply using the variable~$F$. We can now use the function {\sc containsSlice} to build more interesting algorithms, see Figure~\ref{fig:alg-is-quorum}.
    \begin{lem}\label{lm:isQuorum}
        Given a nonempty set $U\subseteq V$ the function {\sc isQuorum} decides in time $O(\Vert F\Vert)$ whether $U$ is a quorum of the FBAS $F$.
    \end{lem}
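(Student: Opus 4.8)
The plan is to prove correctness and the running-time bound for the evident algorithm, namely the one that turns Definition~\ref{def:FBAS} directly into code. Concretely, I expect {\sc isQuorum}$(F,U)$ to loop over the elements $v$ of $U$, invoke {\sc containsSlice}$(F,U,v)$ for each of them (the precondition $v\in U$ of Lemma~\ref{lm:alg-contains-slice} being met, since we iterate over $U$), return $\mathit{false}$ as soon as some call returns $\mathit{false}$, and return $\mathit{true}$ otherwise; nonemptiness of $U$ is guaranteed by hypothesis and needs no separate test.

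For correctness I would argue straight from the definition. A nonempty $U\subseteq V$ is a quorum of $F$ exactly when for every $v\in U$ there is a slice $s\in S(v)$ with $s\subseteq U$, and by Lemma~\ref{lm:alg-contains-slice} the value returned by {\sc containsSlice}$(F,U,v)$ is precisely the truth value of that existential statement. Hence the algorithm returns $\mathit{true}$ if and only if the statement holds for all $v\in U$, i.e., if and only if $U$ is a quorum. Nothing further is needed: the definition of quorum constrains only the members of $U$, so nodes outside $U$ are irrelevant.

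For the complexity I would combine the per-node cost from Lemma~\ref{lm:alg-contains-slice} with a telescoping estimate. Each call {\sc containsSlice}$(F,U,v)$ runs in time $O(\Vert v\Vert)$, and there are at most $|U|\le|V|$ of them, one per element of $U$. Assuming $U$ is represented so that membership tests and iteration cost $O(1)$ per element (e.g., as a characteristic vector over $V$), the loop overhead is $O(|U|)$. Summing, the running time is $O(|V| + \sum_{v\in U}\Vert v\Vert) \le O(|V| + \sum_{v\in V}\Vert v\Vert) = O(\Vert F\Vert)$.

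The only genuinely delicate point is the representation assumption hidden in the $O$-notation: the bound $O(\Vert F\Vert)$ holds only if ``$v\in U$'' can be tested in $O(1)$ --- this is used both to drive the loop and, inside {\sc containsSlice}, to check $s\subseteq U$ or to form $q(v)\cap U$ --- and if iterating over $U$ costs $O(|U|)$. This is exactly why Definition~\ref{def:fbas-size} carries the additive $|V|$ term: it absorbs the loop overhead even when $U$ is tiny or the slices are small, so the bound holds uniformly. I expect spelling out this data-structure convention (consistently with the other algorithms in this section) to be the only subtlety in the author's proof; everything else is immediate.
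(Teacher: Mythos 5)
Your proposal is correct and matches the paper's argument: the algorithm in Figure~\ref{fig:alg-is-quorum} is exactly the loop you describe, and the paper's proof likewise deduces correctness ``immediately from the definition of quorums'' and bounds the running time by $O(\sum_{v\in U}\Vert v\Vert)\leq O(\Vert F\Vert)$ via Lemma~\ref{lm:alg-contains-slice}. The only difference is that you make explicit the data-structure convention hiding in the $O$-notation, which the paper leaves implicit.
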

    \begin{proof}
        The correctness of {\sc isQuorum} follows immediately from the definition of quorums. It requires time $O(\sum_{v\in U}\Vert v\Vert)\leq O(\Vert F\Vert)$ due to Lemma~\ref{lm:alg-contains-slice}.
    \end{proof}
    
    \begin{figure}[tbh]
        \begin{center}
            \begin{tabular}{rl}
                \hline \multicolumn{2}{l}{{\sc isQuorum}}\\\hline
                \multicolumn{2}{l}{\emph{Input:} FBAS $F$ with set of nodes $V$; $U\subseteq V$ nonempty}\\
                \multicolumn{2}{l}{\emph{Output:} whether $U$ is a quorum in $F$}\\
                1&{\bf for} $v$ {\bf in} $U$\\
                2&\qquad{\bf if not} $\textsc{containsSlice}(F, U, v)$ {\bf return} $\mathit{false}$\\
                3&{\bf return} $\mathit{true}$\\\hline\\
                \hline \multicolumn{2}{l}{{\sc greatestQuorum}}\\\hline
                \multicolumn{2}{l}{\emph{Input:} FBAS $F$ with set of nodes $V$; $U\subseteq V$, $W\subseteq U$ ($\emptyset$ by default)}\\
                \multicolumn{2}{l}{\emph{Output:} greatest quorum $Q$ with $W\subseteq Q\subseteq U$ if it exists, $\emptyset$ otherwise}\\
                1&$U_1\leftarrow U$, $i\leftarrow 1$\\
                2&{\bf repeat}\\
                3&\qquad$U_{i+1}\leftarrow\emptyset$\\
                4&\qquad{\bf for} $v$ {\bf in} $U_i$\\
                5&\qquad\qquad{\bf if} $\textsc{containsSlice}(F, U_i, v)$\\
                6&\qquad\qquad\qquad $U_{i+1}\leftarrow U_{i+1}\cup\{v\}$\\
                7&\qquad\qquad{\bf else}\\
                8&\qquad\qquad\qquad{\bf if} $v \in W$ {\bf return} $\emptyset$\\
                9&\qquad{\bf if} $U_{i+1}=U_{i}$ {\bf or} $U_{i+1}=\emptyset$ {\bf return} $U_{i+1}$\\
                10&\qquad$i\leftarrow i+1$\\\hline
            \end{tabular}
        \end{center}
        \caption{Algorithm for {\sc isQuorum} and {\sc greatestQuorum}.}
        \label{fig:alg-is-quorum}
    \end{figure}
    
    Clearly the union of two quorums is again a quorum. Thus, if a set $U\subseteq V$ contains at least one quorum, then the union of all quorums contained in $U$ is again a quorum. This quorum is the greatest with respect to the subset relation among all quorums contained in $U$. The following lemma states that the greatest quorum contained in a set of nodes can be computed efficiently via the function {\sc greatestQuorum}, see Figure~\ref{fig:alg-is-quorum}. This function is slightly more general: given two sets $U\subseteq V$ and $W\subseteq U$ it determines the greatest quorum $Q$ with $W\subseteq Q\subseteq U$. Note that we specify this last argument $W$ to be the empty set by default whenever we do not provide it explicitly when calling {\sc greatestQuorum} in other algorithms defined in the remainder of this paper.
    
    \begin{lem}\label{lm:alg-greatest-quorum}
        Given sets $U\subseteq V$ and $W\subseteq U$ the function {\sc greatestQuorum} decides whether there is a quorum $Q$ with $W\subseteq Q\subseteq U$ and, if so, computes the greatest such quorum in time $O(|V|\cdot \Vert F\Vert)$.
    \end{lem}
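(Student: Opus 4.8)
The plan is to treat {\sc greatestQuorum} as a monotone fixpoint iteration that repeatedly discards nodes having no quorum slice inside the current candidate set. I would begin with the structural observations that make the algorithm well-behaved: lines~3--6 only ever add elements of $U_i$ to $U_{i+1}$, so $U_1\supseteq U_2\supseteq\cdots$; and since the loop continues past line~10 only when $U_{i+1}\neq U_i$ and $U_{i+1}\neq\emptyset$ --- that is, when $U_{i+1}\subsetneq U_i$ --- the cardinalities strictly decrease, so the iteration halts after $O(|V|)$ passes, either returning $\emptyset$ at line~8 or returning $U_{i+1}$ at line~9.

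The core of the proof is the invariant: \emph{for every quorum $Q'$ with $W\subseteq Q'\subseteq U$ and every index $i$ reached by the algorithm, $Q'\subseteq U_i$}. I would prove it by induction on $i$, with base case $Q'\subseteq U=U_1$ (using $W\subseteq U$). For the inductive step, assuming $Q'\subseteq U_i$: each $v\in Q'$ has, because $Q'$ is a quorum, a slice $s\in S(v)$ with $s\subseteq Q'\subseteq U_i$, so $\textsc{containsSlice}(F,U_i,v)$ returns $\mathit{true}$ by Lemma~\ref{lm:alg-contains-slice} and $v$ enters $U_{i+1}$; hence $Q'\subseteq U_{i+1}$. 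The same computation shows no $v\in W\subseteq Q'$ ever triggers the early return at line~8, so if any such $Q'$ exists the algorithm must reach line~9 and return a set $U_{i+1}\supseteq Q'$, which is nonempty because quorums are nonempty. Consequently, whenever the algorithm returns $\emptyset$ (at line~8, or at line~9 with $U_{i+1}=\emptyset$), no quorum $Q'$ with $W\subseteq Q'\subseteq U$ can exist.

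It remains to handle the case where the algorithm returns a nonempty $Q:=U_{i+1}=U_i$ at line~9. Since $U_{i+1}=U_i$, every $v\in U_i$ passed the test at line~5, i.e., has a quorum slice contained in $Q=U_i$, so $Q$ is a quorum; $Q\subseteq U_1=U$ is immediate; and a short side-induction gives $W\subseteq U_j$ for every reached index $j$ (true for $j=1$ since $W\subseteq U$, and preserved because a node of $W$ failing the slice test would have caused a return of $\emptyset$ at line~8 before $U_{j+1}$ is produced), so $W\subseteq Q$. Thus $Q$ is itself a quorum with $W\subseteq Q\subseteq U$, and by the invariant it contains every such quorum, hence is the greatest. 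For the complexity, one pass of the \textbf{repeat} loop calls $\textsc{containsSlice}(F,U_i,v)$ once for each $v\in U_i\subseteq V$, costing $\sum_{v\in V}O(\Vert v\Vert)=O(\Vert F\Vert)$ by Lemma~\ref{lm:alg-contains-slice}, plus $O(|V|)$ for assembling $U_{i+1}$ and testing the stopping conditions; with $O(|V|)$ passes the total is $O(|V|\cdot\Vert F\Vert)$.

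The part I expect to require the most care is not any isolated calculation but the coordination of the set $W$ with the two distinct routes to returning $\emptyset$: the inductive statement must simultaneously certify that a genuine admissible quorum is never destroyed, that the line-8 exit fires only when no admissible quorum exists, and that a nonempty returned set really contains $W$. Packaging these into one clean invariant (or the minimal pair above), rather than a proliferation of case distinctions, is where the write-up needs attention.
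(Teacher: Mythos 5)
Your proposal is correct and follows essentially the same route as the paper's proof: the monotone shrinking of the $U_i$ giving termination in at most $|V|$ passes, an induction showing every quorum $Q'$ with $W\subseteq Q'\subseteq U$ survives into each $U_i$ (hence the line-8 exit and an empty return certify nonexistence), the fixpoint $U_{i+1}=U_i$ being a quorum containing $W$, and the per-pass cost $O(\Vert F\Vert)$ from Lemma~\ref{lm:alg-contains-slice}. The only difference is presentational — you bundle the paper's separate observations into a single invariant.
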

    \begin{proof}
        Observe that the $i$-th iteration of the loop in lines~2--10 constructs the set $U_{i+1}$. The construction is performed in lines~3--8. Whenever we refer to one of the sets $U_{i+1}$ in this proof, we refer to its final value that it attains after its construction.
    
        First observe that $U_{i+1}\subseteq U_i$ for every $i$ by construction of $U_{i+1}$. Hence the loop in lines~2--10 will terminate in line~9 after at most $|U|\leq |V|$ iterations. By Lemma~\ref{lm:alg-contains-slice} each iteration takes time $O(\sum_{v\in U_i}\Vert v\Vert + |U_i|)\leq O(\Vert F\Vert)$. This proves the time complexity of {\sc greatestQuorum}.
        
        The set returned by {\sc greatestQuorum} is either empty or a quorum contained in $U$ because the condition $U_{i+1}=U_i$ on line~9 is equivalent to the condition that $\textsc{containsSlice}(F, U_i, v)$ returns $\mathit{true}$ for every $v\in U_i$ -- compare with the function {\sc isQuorum}. Furthermore, if the algorithm does not return the empty set in line~8, then it is easy to show by induction that $W$ is contained in every constructed $U_i$: (i)~it is obvious for $U_1$ and (ii)~for every $U_{i+1}$ and $v\in W$ the element $v$ is added to $U_{i+1}$ in line~6. We conclude that if the algorithm does not return the empty set, then it returns a quorum contained in $U$ that contains $W$.
        Conversely, whenever $U$ does not contain a quorum that contains $W$, then {\sc greatestQuorum} returns the empty set as expected.
        
        Finally, let us consider the case that $U$ contains a quorum that contains $W$. Given an arbitrary quorum $Q$ with $W\subseteq Q\subseteq U$ we will show by induction that $Q\subseteq U_i$ for every constructed $U_i$ and that the algorithm does not return in line~8. This implies that {\sc greatestQuorum} returns a quorum that is a superset of $Q$; since $Q$ is arbitrary, {\sc greatestQuorum} returns even the greatest quorum in $U$ that contains $W$. Now let us carry out the inductive proof. Trivially, $Q\subseteq U_1=U$. Let $i$ with $Q\subseteq U_i$. For every $v\in Q$ there is a quorum slice of $v$ contained in $Q$ and therefore also in $U_i$ -- hence, $v\in U_{i+1}$. We conclude $Q\subseteq U_{i+1}$.
    \end{proof}

    Let us now consider an algorithm for enumerating all quorums of an FBAS. Note that every such algorithm has worst-case time complexity of at least $O(2^{|V|})$ because there can be exponentially many quorums. In fact, there are FBAS such that every nonempty subset of $V$ is a quorum; see Example~\ref{ex:trivial}.
    
    We will show that quorums can be enumerated with polynomial-time delay using the function {\sc enumerateQuorums} in Figure~\ref{fig:alg-quorum-enumeration}. This result is originally due to Lachowski~\cite{Lac19}, but the algorithm we present here uses a more efficient approach than the one described in his proof  of~\cite[Proposition~2]{Lac19}.
    \begin{lem}
        The function {\sc enumerateQuorums} enumerates all quorums of an FBAS with time delay $O(|V|^2\cdot\Vert F\Vert)$.
    \end{lem}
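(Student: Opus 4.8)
The plan is to read off the recursive structure of \textsc{enumerateQuorums} (Figure~\ref{fig:alg-quorum-enumeration}) as a binary backtracking search over pairs $(U,W)$ with $W\subseteq U\subseteq V$, where such a call is responsible for emitting exactly the quorums $Q$ with $W\subseteq Q\subseteq U$ and the top-level call is $(V,\emptyset)$; this is meaningful because, as noted after Definition~\ref{def:FBAS}, $V$ is always a quorum. Each call first computes $Q^{\ast}=\textsc{greatestQuorum}(F,U,W)$ in time $O(|V|\cdot\Vert F\Vert)$ by Lemma~\ref{lm:alg-greatest-quorum}; if $Q^{\ast}=\emptyset$ the range of quorums is empty and the call returns at once, and if $W=Q^{\ast}$ then $Q^{\ast}$ is the unique quorum in the range and is output. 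Otherwise one picks a node $v\in Q^{\ast}\setminus W$ and recurses on the two disjoint subcases ``contains $v$'' and ``avoids $v$'', i.e.\ on $(Q^{\ast},W\cup\{v\})$ and on $(Q^{\ast}\setminus\{v\},W)$; the universe is shrunk from $U$ to $Q^{\ast}$ using the fact --- again from the observations after Definition~\ref{def:FBAS} --- that a union of quorums is a quorum, so every quorum of the current range already lies inside $Q^{\ast}$. Correctness is then a routine induction on the potential $|U\setminus W|$: the two branches partition the quorums of the range according to whether they contain $v$, the base cases are immediate, every quorum is emitted exactly once, and since $Q^{\ast}=V$ at the root, \textsc{enumerateQuorums} enumerates precisely the quorums of $F$.

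For the delay bound I would establish three structural facts. First, each recursive call performs $O(|V|\cdot\Vert F\Vert)$ work, dominated by its (at most two) \textsc{greatestQuorum} evaluations. Second, the potential $|U\setminus W|$ strictly decreases along every recursion edge --- in the ``contains $v$'' branch $|W|$ grows, in the ``avoids $v$'' branch $|U|$ shrinks --- and it starts below $|V|$, so the recursion tree has depth at most $|V|$. Third, the recursion tree is binary and a call with empty range returns immediately, so every ``dead'' subtree is in fact a single leaf. Since the search runs depth-first, the work between two consecutive outputs --- and likewise before the first output and after the last --- is charged to a walk that climbs to some common ancestor and descends again, visiting $O(|V|)$ tree nodes and, by binarity, at most one dead leaf per visited node. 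Multiplying $O(|V|)$ visited nodes by the $O(|V|\cdot\Vert F\Vert)$ cost of each gives the claimed delay $O(|V|^{2}\cdot\Vert F\Vert)$.

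I expect the crux to be the third fact and its use: one has to pin down that a fruitless recursive call is recognised by a single \textsc{greatestQuorum} evaluation and then aborts, and then convert ``dead subtrees are single leaves and branching is binary'' into a genuine \emph{worst-case} bound on the gap between outputs rather than merely an amortized one (this is why the depth-first traversal and the depth bound from the potential are both needed). A smaller subtlety is to verify that replacing $U$ by $Q^{\ast}$, and handing $Q^{\ast}\setminus\{v\}$ down the ``avoids $v$'' branch, discards no quorum of the current range --- precisely the place where closure of quorums under union enters.
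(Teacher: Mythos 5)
Your argument analyses a different algorithm from the one the lemma is about. In Figure~\ref{fig:alg-quorum-enumeration}, \textsc{traverseQuorums}$(F,U,R)$ computes the greatest quorum $Q$ of its range once, \emph{outputs $Q$ immediately in line~3 of every live call}, and then branches into up to $|Q\setminus U|\leq |V|$ children via the loop in lines~5--7, the $i$-th child being responsible for the quorums that contain $v_1,\dots,v_{i-1}$ but not $v_i$. It is not a binary search that emits a quorum only once the lower bound has grown to meet the greatest quorum; the binary scheme you describe is much closer to \textsc{traverseMinQuorums} in Figure~\ref{fig:alg-quorum-intersection}. The correctness half of your proof essentially survives this discrepancy, because the paper's $n$-ary decomposition is exactly the flattening of your repeated ``contains $v$ / avoids $v$'' split: the paper argues by strong induction on $|R|$ using the disjoint union $\{Q'\mid U\subseteq Q'\subset Q\}=\bigcup_{i=1}^{n}\{Q'\mid U\cup\{v_1,\dots,v_{i-1}\}\subseteq Q'\subseteq Q\setminus\{v_i\}\}$, together with the same union-of-quorums observation you invoke to shrink the universe to $Q$.

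The delay half does not survive. Your bound rests on two structural facts that hold only for your reformulation: branching is binary, so each node on the climb contributes at most one dead leaf, and every output sits at the end of an always-live leftmost descent. In the actual algorithm a single live node can have up to $|V|$ dead children, and outputs occur eagerly at internal nodes; applying your climb-and-descend accounting to the real recursion tree gives up to $|V|$ dead calls per climbed level over $|V|$ levels, i.e.\ only a bound of $O(|V|^3\cdot\Vert F\Vert)$. The paper's argument is instead local to each call: every invocation either returns early or outputs its \emph{first} quorum within $O(|V|\cdot\Vert F\Vert)$ (the output is in line~3, right after the single \textsc{greatestQuorum} call of Lemma~\ref{lm:alg-greatest-quorum}), and a live call makes at most $|V|$ recursive calls, whence the next output follows within $O(|V|^2\cdot\Vert F\Vert)$. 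To close the gap you must either carry out the delay analysis for the algorithm as written, or prove that your binary variant is equivalent to it in a way that transfers the delay bound; neither is currently done.
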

    \begin{proof}
        It suffices to show that for every $U\subseteq V$ and $R\subseteq V\setminus U$, the function call $\textsc{traverseQuorums}(F, U, R)$ enumerates all quorums $Q$ of $F$ such that $U\subseteq Q\subseteq U\cup R$ with time delay $O(|V|^2\cdot\Vert F\Vert)$.

        First let us prove the correctness of the algorithm by strong induction on $|R|$. Let $U\subseteq V$ and $R\subseteq V\setminus U$ and suppose that the correctness of {\sc traverseQuorums} holds for every $U'$ and $R'$ disjoint from $U'$ with $|R'|< |R|$.

        Consider the case that the function call returns early in line~2 and therefore does not enumerate any quorums. Then $Q=\emptyset$; by Lemma~\ref{lm:alg-greatest-quorum} this implies that $U\cup R$ does not contain any quorums that are supersets of $U$ and therefore there are no quorums to enumerate -- the function behaves correctly.

        Now consider the case that the algorithm does not return early in line~2. Then $Q$ is a quorum with $U\subseteq Q\subseteq U\cup R$ and line~3 will correctly output this quorum. All remaining quorums $Q'$ with $U\subseteq Q'\subseteq U\cup R$ will be proper subsets of $Q$ as $Q$ is the greatest quorum contained in $U\cup R$. Therefore, in order to enumerate all remaining quorums it is sufficient to consider candidates in the set $\{Q'\mid U\subseteq Q'\subset Q\}$. Suppose that the set $Q\setminus U$ has $n$ elements and that they are enumerated in the order $v_1, \ldots, v_n$ in the loop in lines~5 to~7. Observe that we can decompose the set $\{Q'\mid U\subseteq Q'\subset Q\}$ disjointly as follows:
        \begin{align}
            \{Q'\mid U\subseteq Q'\subset Q\} &= \big\{Q'\mid U\subseteq Q'\subseteq Q\setminus\{v_1\}\big\}\nonumber\\
            &\phantom{=\ }\mbox{}\cup \big\{Q'\mid U\cup\{v_1\}\subseteq Q'\subseteq Q\setminus\{v_2\}\big\}\nonumber\\
            &\phantom{=\ }\mbox{}\cup \cdots\cup\big\{Q'\mid U\cup\{v_1, \ldots, v_{n-1}\}\subseteq Q'\subseteq Q\setminus\{v_n\}\big\}\nonumber\\
            &=\bigcup\nolimits_{i=1}^n\big\{Q'\mid U\cup\{v_1, \ldots, v_{i-1}\}\subseteq Q'\subseteq Q\setminus\{v_i\}\big\}.\label{eq:proof-enum-quorums}
        \end{align}
        The $i$-th component of this union stands for all sets $Q'$ that contain the nodes $v_1, \ldots, v_{i-1}$, but not the node $v_i$. Thus,~\eqref{eq:proof-enum-quorums} is a disjoint union. Let $W_i$ be the value of $W$ at the beginning of the $i$-th iteration of the loop in lines~5 to~7. Clearly, $W_i=\{v_{i}, \ldots, v_n\}$. Then the above decomposition can be written as
        \begin{align}
            \{Q'\mid U\subseteq Q'\subset Q\} &= \bigcup\nolimits_{i=1}^n\big\{Q'\mid Q\setminus W_i\subseteq Q'\subseteq Q\setminus\{v_i\}\big\}.\label{eq:proof-enum-quorums2}
        \end{align}
        Together with the induction hypothesis this proves the correctness of the algorithm because the $i$-th call to {\sc traverseQuorums} in line~6 will enumerate the quorums in the $i$-the component of the right-hand side of~\eqref{eq:proof-enum-quorums2}.

        For the proof of the time complexity note that every call to the function {\sc traverseQuorums} will either return early or output its first quorum after time $O(|V|\cdot \Vert F\Vert)$ by Lemma~\ref{lm:alg-greatest-quorum}. If it outputs a quorum, then it will make at most $|V|$ subsequent recursive calls to {\sc traverseQuorums}. Therefore the call to {\sc traverseQuorums} will either quit without outputting any other quorums, or it will output the next quorum after at most $O(|V|^2\cdot \Vert F\Vert)$ steps.
    \end{proof}

    \begin{figure}[tbh]
        \begin{center}
            \begin{tabular}{rl}
                \hline \multicolumn{2}{l}{{\sc enumerateQuorums}}\\\hline
                \multicolumn{2}{l}{\emph{Input:} FBAS $F$ with set of nodes $V$}\\
                \multicolumn{2}{l}{\emph{Output:} enumerate all quorums of $F$}\\
                1&$\textsc{traverseQuorums}(F, \emptyset, V)$\\\hline\\
                \hline \multicolumn{2}{l}{{\sc traverseQuorums}}\\\hline
                \multicolumn{2}{l}{\emph{Input:} FBAS $F$ with set of nodes $V$; $U\subseteq V$; $R\subseteq V\setminus U$}\\
                \multicolumn{2}{l}{\emph{Output:} enumerate all quorums $Q$ of $F$ with $U\subseteq Q\subseteq U\cup R$}\\
                1&$Q\leftarrow\textsc{greatestQuorum}(F, U\cup R, U)$\\
                2&{\bf if} $Q=\emptyset$ {\bf return}\\
                3&{\bf output} $Q$\\
                4&$W\leftarrow Q\setminus U$\\
                5&{\bf for} $v$ {\bf in} $Q\setminus U$\\
                6&\qquad$\textsc{traverseQuorums}(F, Q\setminus W, W\setminus\{v\})$\\
                7&\qquad$W\leftarrow W\setminus\{v\}$\\\hline
            \end{tabular}
        \end{center}
        \caption{Algorithm for {\sc enumerateQuorums}.}
        \label{fig:alg-quorum-enumeration}
    \end{figure}
    
    \begin{ex}
        \label{ex:quorum-enumeration-computation}
        In this experiment we study the computation time of the {\sc enumerateQuorums} algorithm (see Figure~\ref{fig:alg-quorum-enumeration}) implemented in \emph{Stellar Observatory} in comparison with the number of quorums of a sequence of FBAS that are constructed similarly to the current maximal SCC of the Stellar network; see Example~\ref{ex:stellar-core}. We consider $n$ organizations $A_1,\ldots,A_n$ with $3$ nodes each, and define an FBAS $(V,S)$ 
        %$|A_i|=3$ for $i=1,\ldots,n$. 
        by $V=A_1\cup\cdots\cup A_n$, so that $|V|=3n$, and quorum slices $$S(v)=s(d_v),$$ where $d_v=(2,\{v\},\{R_v(d)\})$ and 
        $d=(t,\emptyset,\{(2,A_1,\emptyset),\ldots,(2,A_n,\emptyset)\})$ for a $t\in\N$.
        Essentially, $t$ organizations are required on the root level and then $2$ nodes are required within each organization. Analogous to the Stellar network (cf.~Example~\ref{ex:stellar-core}), we transform the quorum slice definition for each node $v\in V$ to $d_v$ by moving $v$ to the root level and removing it from its original position with $R_v$; see Definition~\ref{def:node-removal}.
        
        We consider the root level thresholds $t=n-1$ and $t=\lfloor\frac{2}{3}n\rfloor+1$. In both cases the number of quorums as well as the computation time of the {\sc enumerateQuorums} algorithm grow exponentially with $n$; see the dashed line~\eqref{fig:enumerate-quorums-1:quorums} 
        and the solid line~\eqref{fig:enumerate-quorums-1:time} in Figure~\ref{fig:quorum-enumeration-computation}, respectively. The code for reproducing the results is available on GitHub\footnote{\url{https://github.com/andrenarchy/stellar-experiments}}.
    \end{ex}

    \begin{figure}[htbp]
        \centering
        \setlength{\figurewidth}{0.75\textwidth}
        \setlength{\figureheight}{0.65\textwidth}
        \begin{subfigure}[b]{\textwidth}
            \centering
            % This file was created by tikzplotlib v0.8.7.

\begin{tikzpicture}

\begin{axis}[
width=\figurewidth,
height=\figureheight,
log basis y={10},
tick align=outside,
tick pos=left,
x grid style={white!69.01960784313725!black},
xlabel={Number of organizations \(\displaystyle n\)},
xmin=2.65, xmax=10.35,
xtick style={color=black},
y grid style={white!69.01960784313725!black},
ylabel={Computation time (s)},
ymin=0.001, ymax=10000,
ymode=log,
ytick style={color=black}
]
\addplot [thick, plot_color_0, mark=*, mark size=3, mark options={solid}]
table {%
3 0.00344014167785645
4 0.0254566669464111
5 0.173870086669922
6 1.17016506195068
7 6.65367412567139
8 40.9504787921906
9 273.812614440918
10 1218.99195337296
};
\label{fig:enumerate-quorums-1:time}
\end{axis}

\begin{axis}[
width=\figurewidth,
height=\figureheight,
axis y line=right,
log basis y={10},
tick align=outside,
x grid style={white!69.01960784313725!black},
xmin=2.65, xmax=10.35,
xtick pos=left,
xtick style={color=black},
y grid style={white!69.01960784313725!black},
ylabel={Number of quorums},
ymin=100, ymax=100000000,
ymode=log,
ytick pos=right,
ytick style={color=black}
]
\addplot [thick, plot_color_1, dotted, mark=*, mark size=3, mark options={solid}]
table {%
3 256
4 1280
5 6144
6 28672
7 131072
8 589824
9 2621440
10 11534336
};
\label{fig:enumerate-quorums-1:quorums}
\end{axis}

\end{tikzpicture}
            \caption{Root threshold $n-1$}
        \end{subfigure}
        \par\bigskip
        \begin{subfigure}[b]{\textwidth}
            \centering
            % This file was created by tikzplotlib v0.8.7.
\begin{tikzpicture}

\begin{axis}[
width=\figurewidth,
height=\figureheight,
log basis y={10},
tick align=outside,
tick pos=left,
x grid style={white!69.01960784313725!black},
xlabel={Number of organizations \(\displaystyle n\)},
xmin=2.65, xmax=10.35,
xtick style={color=black},
y grid style={white!69.01960784313725!black},
ylabel={Computation time (s)},
ymin=0.001, ymax=100000,
ymode=log,
ytick style={color=black}
]
\addplot [thick, plot_color_0, mark=*, mark size=3, mark options={solid}]
table {%
3 0.00114274024963379
4 0.0265226364135742
5 0.177913665771484
6 1.15317392349243
7 21.6138091087341
8 142.821984052658
9 880.701012849808
10 14708.7879264355
};
\end{axis}

\begin{axis}[
width=\figurewidth,
height=\figureheight,
axis y line=right,
log basis y={10},
tick align=outside,
x grid style={white!69.01960784313725!black},
xmin=2.65, xmax=10.35,
xtick pos=left,
xtick style={color=black},
y grid style={white!69.01960784313725!black},
ylabel={Number of quorums},
ymin=10, ymax=1000000000,
ymode=log,
ytick pos=right,
ytick style={color=black}
]
\addplot [thick, plot_color_1, dotted, mark=*, mark size=3, mark options={solid}]
table {%
3 64
4 1280
5 6144
6 28672
7 475136
8 2424832
9 12058624
10 184549376
};
\end{axis}

\end{tikzpicture}
            \caption{Root threshold $\lfloor\frac{2}{3}n\rfloor+1$}
        \end{subfigure}
        
        \caption{Computation time of {\sc enumerateQuorums} in seconds~\eqref{fig:enumerate-quorums-1:time} and number of quorums~\eqref{fig:enumerate-quorums-1:quorums} 
        for the FBAS described in Example~\ref{ex:quorum-enumeration-computation} consisting of $n$ organizations with $3$ nodes each. Note that the y-axes are scaled logarithmically.}
        \label{fig:quorum-enumeration-computation}
    \end{figure}
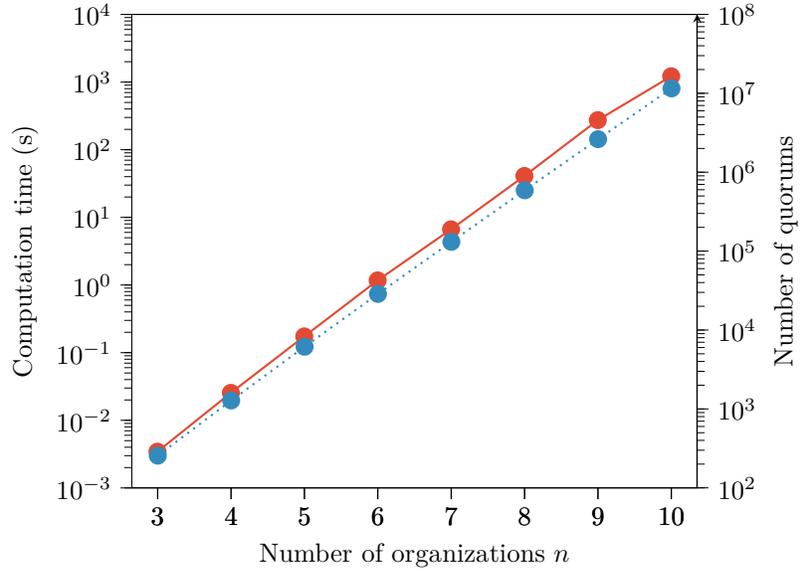
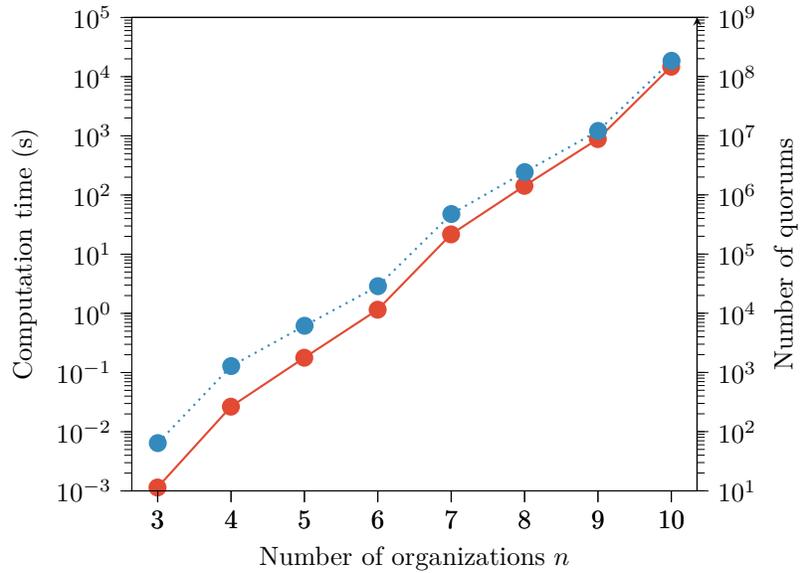

\subsection{Quorum intersection}\label{sec:quorum-intersection}

    We have shown in Section~\ref{sec:NPcomplete} that the problem to decide whether a given FBAS has quorum intersection is NP-complete. The brute-force approach to solve this problem is to first enumerate all quorums (see the previous section), and then to compare all pairs of quorums. This method has complexity $2^{O(|V|)}$ as the number of quorums can be exponential in the number of nodes.
    
    In this section we will discuss an algorithm to decide quorum intersection that still has exponential time complexity but that is drastically more efficient than the brute-force approach. This algorithm is due to Lachowski~\cite{Lac19} and is based on the following key ideas:
    \begin{enumerate}[(1)]
        \item Preprocessing the given FBAS can considerably reduce its size before enumerating the quorums.
        \item Instead of enumerating all quorums it is sufficient to enumerate only minimal quorums up to a certain size.
    \end{enumerate}

\subsubsection{Strongly connected components}\label{sec:quorum-intersection-scc}
    In this section we will make use of the trust graph as defined in Definition~\ref{def:trust-graph}. In the following lemma we use the notation $S|_C$, which is the restriction of $S$ to $C$; see Lemma~\ref{lm:trust-cluster}.
    \begin{lem}\label{lm:SCCs}
        \begin{enumerate}[(1)]
            \item If $(V, S)$ has quorum intersection, then the greatest SCC of its trust graph exists.
            \item Let $Q\subseteq V$ be a quorum of $(V,S)$. Then there is an SCC $D$ such that $D\cap Q$ is a quorum of $(V,S)$.
            \item If $C$ is the greatest SCC and no other SCC contains a quorum of $(V,S)$, then $(C, S|_C)$ has quorum intersection if and only if $(V,S)$ has quorum intersection.
        \end{enumerate}
    \end{lem}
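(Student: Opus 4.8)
I'll prove the three parts largely independently, then note where they chain together. The main tools are Lemma~\ref{lem:SCC} (disjointness and the "reachability lifts to all of $D$" property), Lemma~\ref{lm:trust-cluster} (restriction to trust clusters yields a well-defined FBAS whose quorums lift), and the basic fact that a union of quorums is a quorum.

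For part (1): I'd argue by contradiction. Suppose $(V,S)$ has quorum intersection but there is no greatest SCC. By Lemma~\ref{lm:maximal-SCC} there is at least one maximal SCC; since there is no greatest one, there must be (at least) two distinct maximal SCCs, say $C_1 \neq C_2$. The key observation is that a maximal SCC is a trust cluster (Lemma~\ref{lem:trust-cluster}), hence by Lemma~\ref{lm:trust-cluster}(3) each $C_i$ is itself a quorum of $(V,S)$. But $C_1$ and $C_2$ are disjoint by Lemma~\ref{lem:SCC}(1), contradicting quorum intersection. Hence a greatest SCC exists.

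For part (2): Let $Q$ be a quorum. Consider the SCCs that meet $Q$; among these pick one, $D$, that is minimal with respect to the reachability partial order on SCCs restricted to $Q$-meeting SCCs (finiteness of $V$ guarantees a minimal one exists). I claim $D \cap Q$ is a quorum. Take any $v \in D \cap Q$. Since $Q$ is a quorum, $v$ has a slice $s \subseteq Q$. For each $w \in s$, $w$ lies in $Q$ and hence in some $Q$-meeting SCC $D_w$; moreover $w$ is reachable from $v$ (it's a direct trust edge), so $D_w$ is reachable from $D$. By minimality of $D$, either $D_w = D$, or $D_w$ is not below $D$ --- but reachability from $D$ forces $D_w$ to be below-or-equal $D$ in the partial order, so $D_w = D$, i.e. $w \in D$. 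Hence $s \subseteq D \cap Q$, proving $D \cap Q$ is a quorum. The subtle point I need to get right is the direction of the partial order and the use of Lemma~\ref{lem:SCC}(2) to ensure $D_w$ reachable-from $D$ together with minimality really forces equality; this is the step I'd write most carefully.

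For part (3): Assume $C$ is the greatest SCC and no other SCC contains a quorum. First, $C$ is a maximal SCC (Lemma~\ref{lm:maximal-SCC}), hence a trust cluster, so $(C, S|_C)$ is a well-defined FBAS (Lemma~\ref{lm:trust-cluster}(1)) and its quorums are exactly quorums of $(V,S)$ contained in $C$ (Lemma~\ref{lm:trust-cluster}(2), plus $S|_C(v) = S(v)$ for $v \in C$). For the "only if" direction: if $(V,S)$ has quorum intersection, any two quorums of $(C, S|_C)$ are quorums of $(V,S)$, so they intersect --- $(C,S|_C)$ has quorum intersection. For the "if" direction: suppose $(C, S|_C)$ has quorum intersection and let $Q_1, Q_2$ be quorums of $(V,S)$. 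By part (2), each $Q_i$ contains a set $D_i \cap Q_i$ that is a quorum of $(V,S)$, where $D_i$ is an SCC. By hypothesis no SCC other than $C$ contains a quorum, so $D_i \cap Q_i \subseteq D_i$ forces $D_i = C$ (an SCC $D_i$ containing the quorum $D_i \cap Q_i$ must be $C$); actually I should phrase this as: $D_i \cap Q_i$ is a quorum contained in the SCC $D_i$, so $D_i$ is an SCC "containing a quorum", whence $D_i = C$. Thus $C \cap Q_i$ is a quorum of $(V,S)$ contained in $C$, i.e. a quorum of $(C,S|_C)$. By quorum intersection in $(C,S|_C)$, $(C\cap Q_1) \cap (C \cap Q_2) \neq \emptyset$, hence $Q_1 \cap Q_2 \neq \emptyset$. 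This establishes quorum intersection of $(V,S)$.

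The main obstacle is part (2): everything else is bookkeeping with the cited lemmas, but part (2) requires genuinely choosing the right SCC (a reachability-minimal one among those meeting $Q$) and then running the argument that a slice cannot "escape downward" out of a minimal SCC. Parts (1) and (3) then follow cleanly, with (3) leaning essentially on (2).
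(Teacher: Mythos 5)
Your proposal is correct and follows essentially the same route as the paper: part (1) via disjoint maximal SCCs being (disjoint) quorums, part (2) via picking a $Q$-meeting SCC from which no other $Q$-meeting SCC is reachable and showing a slice cannot escape it, and part (3) via part (2) together with the identification of quorums of $(C,S|_C)$ with quorums of $(V,S)$ contained in $C$. The only difference is cosmetic: you call the chosen SCC in (2) ``minimal'' for your orientation of the reachability order, whereas the paper's Definition~\ref{def:SCC-props} would call it ``maximal''; the underlying condition is identical.
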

    \begin{proof}
        (1)~If there are distinct maximal SCC, then there are disjoint quorums due to Lemma~\ref{lm:trust-cluster}(3). Hence, there is only one maximal SCC and it is even the greatest SCC by Lemma~\ref{lm:maximal-SCC}.
        
        (2)~Let $\mathcal D$ be the set of all SCCs that intersect $Q$. Then $\mathcal D$ is nonempty as $Q$ is nonempty. There is at least one SCC $D\in\mathcal D$ such that no other SCC in $\mathcal D$ is reachable from $D$. We show that $D\cap Q$ is a quorum of $(V,S)$. Let $v\in D\cap Q$. Then there is a slice $s\in S(v)$ with $s\subseteq Q$. Suppose that $s\not\subseteq D$ and let $u\in s\setminus D$. Thus, $u$ is in an SCC $D'$ different from $D$. Clearly, $D'\in \mathcal D$ because $u\in s\subseteq Q$. It follows that $D'$ is reachable from $D$, contrary to the construction of $D$. Thus, our assumption $s\not\subseteq D$ was wrong. Since $s\subseteq D\cap Q$, the set $D\cap Q$ is a quorum of $(V,S)$.
        
        (3) ``$\Rightarrow$'': Suppose that $(V,S)$ has disjoint quorums $Q_1$ and $Q_2$. Because of (2), there are SCCs $D_1$ and $D_2$ such that $D_1\cap Q_1$ and $D_2\cap Q_2$ are quorums of $(V,S)$. Then $D_1=D_2=C$ by assumption. The definition of $(C, S|_C)$ immediately implies that $C\cap Q_1$ and $C\cap Q_2$ are not just quorums of $(V,S)$ but also of $(C, S|_C)$. Therefore $(C, S|_C)$ does not have quorum intersection.
        
        ``$\Leftarrow$'': If $Q_1$ and $Q_2$ are disjoint quorums of $(C, S|_C)$, then they are also disjoint quorums of $(V,S)$ by Lemma~\ref{lm:trust-cluster}(2).
    \end{proof}
    
    Using the results of Lemma~\ref{lm:SCCs} we can preprocess the FBAS and reduce its size before running the actual algorithm to decide quorum intersection. To this end we determine the strongly connected components of the trust graph using, e.g., Tarjan's algorithm~\cite{Tar72}. If there is no greatest SCC, then Lemma~\ref{lm:SCCs}(1) implies that there are disjoint quorums. Next, we can check whether any of the SCCs different from the greatest SCC $C$ contain a quorum by using the function {\sc greatestQuorum}; see Lemma~\ref{lm:alg-greatest-quorum}. If this is the case, then there are disjoint quorums because the greatest SCC is itself a quorum by Lemma~\ref{lm:trust-cluster}(3). Otherwise, Lemma~\ref{lm:SCCs}(3) allows to reduce the quorum intersection decision to the FBAS $(C, S|_C)$.

\subsubsection{Minimal quorums}
    Instead of enumerating all quorums of $(V,S)$ and comparing them pairwise, we can instead enumerate all minimal quorums $Q$ up to size $|V|/2$ and check whether $V\setminus Q$ contains a quorum; see the function {\sc quorumIntersection} in Figure~\ref{fig:alg-quorum-intersection}. For now assume that $\textsc{traverseMinQuorums}(F, \emptyset, V)$ enumerates all minimal quorums up to size $|V|/2$; we will prove its correctness in Lemma~\ref{lm:traverseMinQuorums}.
    
    \begin{lem}
        The function {\sc quorumIntersection} decides whether $F$ has quorum intersection and returns a pair of disjoint quorums if it does not.
    \end{lem}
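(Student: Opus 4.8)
The plan is to establish correctness of {\sc quorumIntersection} by reducing the quorum intersection property to a statement about \emph{minimal} quorums. The key observation is the following: an FBAS $F=(V,S)$ fails to have quorum intersection if and only if there exist disjoint quorums $Q_1, Q_2$, and in that case we may assume $|Q_1|\leq |V|/2$, and we may further shrink $Q_1$ to a minimal quorum $Q_1'\subseteq Q_1$ (which still satisfies $|Q_1'|\leq |V|/2$ and is disjoint from $Q_2$). Conversely, if some minimal quorum $Q$ with $|Q|\leq |V|/2$ has the property that $V\setminus Q$ contains a quorum, then that quorum together with $Q$ witnesses a quorum split. So the correctness of {\sc quorumIntersection} will follow once we know (i)~that $\textsc{traverseMinQuorums}(F,\emptyset,V)$ indeed enumerates exactly the minimal quorums of $F$ up to size $|V|/2$ — which is deferred to Lemma~\ref{lm:traverseMinQuorums} and may be assumed here — and (ii)~that for each such minimal quorum $Q$, the algorithm correctly tests whether $V\setminus Q$ contains a quorum, for which it calls $\textsc{greatestQuorum}(F, V\setminus Q)$ and relies on Lemma~\ref{lm:alg-greatest-quorum}.

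First I would argue the ``completeness'' direction: suppose $F$ does not have quorum intersection, so there are disjoint quorums $Q_1, Q_2$. Since $Q_1\cap Q_2=\emptyset$, at least one of them, say $Q_1$, satisfies $|Q_1|\leq |V|/2$. Now I would invoke the general fact (already used in the discussion preceding the lemma, and provable directly from the definition of a quorum) that every quorum contains a minimal quorum: starting from $Q_1$, repeatedly remove a node whose removal leaves a quorum, until no such node exists; the result $Q_1'$ is a minimal quorum with $Q_1'\subseteq Q_1$, hence $|Q_1'|\leq |V|/2$ and $Q_1'\cap Q_2=\emptyset$, so $Q_2\subseteq V\setminus Q_1'$. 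By assumption on {\sc traverseMinQuorums}, the quorum $Q_1'$ is enumerated by the algorithm, and when it is, the test on $V\setminus Q_1'$ finds a quorum (namely some quorum contained in the nonempty candidate set $V\setminus Q_1'$, whose existence is guaranteed by $Q_2$, and which {\sc greatestQuorum} detects). Hence {\sc quorumIntersection} reports that $F$ does not have quorum intersection and returns the pair $(Q_1', \textsc{greatestQuorum}(F, V\setminus Q_1'))$, which are disjoint quorums.

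Next I would argue the ``soundness'' direction: if {\sc quorumIntersection} reports a quorum split, it does so because for some enumerated minimal quorum $Q$ (with $|Q|\leq |V|/2$) the call $\textsc{greatestQuorum}(F, V\setminus Q)$ returned a nonempty set $Q'$; by Lemma~\ref{lm:alg-greatest-quorum}, $Q'$ is then a quorum with $Q'\subseteq V\setminus Q$, so $Q\cap Q'=\emptyset$ and $Q, Q'$ are genuinely disjoint quorums, i.e., $F$ really does not have quorum intersection, and the returned pair is correct. Conversely, if the loop completes without finding such a pair, then no minimal quorum up to size $|V|/2$ can be extended to a quorum split; combined with the completeness direction this means no quorum split exists at all, so $F$ has quorum intersection and the algorithm correctly reports so.

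The main obstacle I anticipate is not any single step but rather the careful bookkeeping around the size bound $|V|/2$: one must be sure that restricting attention to minimal quorums of size at most $|V|/2$ loses no witness, and in particular that in a quorum split at least one side has size $\leq |V|/2$ and that passing to a minimal sub-quorum only decreases the size — both are elementary but need to be stated cleanly. The remaining subtlety is that this lemma's proof genuinely depends on Lemma~\ref{lm:traverseMinQuorums} (correctness of the enumeration of minimal quorums), so the argument here is really modular: it shows that {\sc quorumIntersection} is correct \emph{given} that {\sc traverseMinQuorums} does what it claims, and the harder combinatorial work is isolated in that separate lemma.
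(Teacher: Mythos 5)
Your proposal is correct and follows essentially the same route as the paper's proof: soundness via Lemma~\ref{lm:alg-greatest-quorum} applied to the returned pair, and completeness by taking the smaller of two disjoint quorums (hence of size at most $|V|/2$), shrinking it to a minimal quorum, and noting that the other quorum survives inside the complement. The only difference is that you spell out the existence of a minimal sub-quorum explicitly, which the paper merely asserts.
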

    \begin{proof}
         Suppose that {\sc quorumIntersection} returns a pair $(Q,Q')$. Thus $Q$ is a minimal quorum and $Q'=\textsc{greatestQuorum}(F, V\setminus Q)\not=\emptyset$. Then by Lemma~\ref{lm:alg-greatest-quorum}, $Q'$ is a quorum which is obviously disjoint from $Q$. Hence, $(V,S)$ does not have quorum intersection.
        
        Now suppose that $(V,S)$ does not have quorum intersection. We will prove that {\sc quorumIntersection} does not return $\mathit{true}$ but two disjoint quorums instead. There are two disjoint quorums $Q_1$ and $Q_2$. Without loss of generality, $|Q_1|\leq |Q_2|$, which implies $|Q_1|\leq |V|/2$. There is a minimal quorum $Q$ in $Q_1$, possibly $Q_1$ itself. Clearly, $Q$ will be one of the quorums enumerated by $\textsc{traverseMinQuorums}(F, \emptyset, V)$ in line~1. Since $Q_2$ is a quorum contained in $V\setminus Q$, the set $Q'$ computed in line~2 will not be empty and therefore {\sc quorumIntersection} will not return $\mathit{true}$.
    \end{proof}
    
    \begin{figure}[tb]
        \begin{center}
            \begin{tabular}{rl}
                \hline \multicolumn{2}{l}{{\sc quorumIntersection}}\\\hline
                \multicolumn{2}{l}{\emph{Input:} FBAS $F$ with set of nodes $V$}\\
                \multicolumn{2}{l}{\emph{Output:} $\mathit{true}$ if $F$ has quorum intersection; otherwise two disjoint quorums}\\
                1&{\bf for} $Q$ {\bf in} $\textsc{traverseMinQuorums}(F, \emptyset, V)$\\
                2&\qquad$Q'\leftarrow\textsc{greatestQuorum}(F, V\setminus Q)$\\
                3&\qquad{\bf if} $Q'\not=\emptyset$ {\bf return} $(Q,Q')$\\
                4&{\bf return $\mathit{true}$}\\\hline\\
                \hline \multicolumn{2}{l}{{\sc traverseMinQuorums}}\\\hline
                \multicolumn{2}{l}{\emph{Input:} FBAS $F$ with set of nodes $V$; $U\subseteq V$; $R\subseteq V\setminus U$}\\
                \multicolumn{2}{l}{\emph{Output:} enumerate all minimal quorums $Q$ of $F$ with}\\
                \multicolumn{2}{l}{\phantom{\emph{Output:}} $U\subseteq Q\subseteq U\cup R$ and $|Q|\leq |V|/2$}\\
                1&{\bf if} $|U|>|V|/2$ {\bf return}\\
                2&$Q=\textsc{greatestQuorum}(F, U)$\\
                3&{\bf if} $Q\not=\emptyset$\\
                4&\qquad{\bf if} $U=Q$ {\bf and not} $\textsc{containsProperSubQuorum}(F, U)$\\
                5&\qquad\qquad{\bf output} $U$\\
                6&{\bf else}\\
                7&\qquad{\bf if} $R\not=\emptyset$ {\bf and} $\textsc{greatestQuorum}(F, U\cup R, U)\not=\emptyset$\\
                8&\qquad\qquad$v\leftarrow \text{pick from $R$}$\\
                9&\qquad\qquad$\textsc{traverseMinQuorums}(F, U, R\setminus\{v\})$\\
                10&\qquad\qquad$\textsc{traverseMinQuorums}(F, U\cup\{v\}, R\setminus\{v\})$\\\hline
                \\
                \hline \multicolumn{2}{l}{{\sc containsProperSubQuorum}}\\\hline
                \multicolumn{2}{l}{\emph{Input:} FBAS $F$ with set of nodes $V$; $U\subseteq V$}\\
                \multicolumn{2}{l}{\emph{Output:} whether there is a quorum $Q\subsetneq U$}\\
                1&{\bf for} $v$ {\bf in} $U$\\
                2&\qquad{\bf if} $\textsc{greatestQuorum}(F, U\setminus \{v\})\not=\emptyset$ {\bf return} $\mathit{true}$\\
                3&{\bf return $\mathit{false}$}\\\hline
            \end{tabular}
        \end{center}
        \caption{Algorithm for {\sc quorumIntersection}.}
        \label{fig:alg-quorum-intersection}
    \end{figure}
    
    Now let us prove the correctness of {\sc traverseMinQuorums}; see Figure~\ref{fig:alg-quorum-intersection}. We remark that line~8 of {\sc traverseMinQuorums} nondeterministically picks an element from $R$. In practice we would use some heuristics to decide what element to pick next in order to minimize the run-time. These details shall not concern us in the present paper.
    \begin{lem}\label{lm:traverseMinQuorums}
       Given two sets $U\subseteq V$ and $R\subseteq V\setminus U$, the function invocation $\textsc{traverseMinQuorums}(F, U, R)$ enumerates every minimal quorum $Q$ of $F$ such that $U\subseteq Q\subseteq U\cup R$ and $|Q|\leq |V|/2$. Furthermore, every such quorum is enumerated exactly once.
    \end{lem}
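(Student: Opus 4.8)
The plan is a strong induction on $|R|$, establishing in one stroke that $\textsc{traverseMinQuorums}(F,U,R)$ (i)~only outputs minimal quorums $Q$ with $U\subseteq Q\subseteq U\cup R$ and $|Q|\le|V|/2$, (ii)~outputs every such quorum, and (iii)~outputs none of them twice. The induction is well-founded because the two recursive calls in lines~9 and~10 pass $R\setminus\{v\}$, which has cardinality $|R|-1$ and is still disjoint from the new first-argument set. For the base case $R=\emptyset$ any candidate $Q$ must equal $U$, so the invocation should output $U$ iff $U$ is a minimal quorum of size at most $|V|/2$; lines~1--5 do exactly this (the \textbf{else} branch is dead, its guard requiring $R\neq\emptyset$), and no output is duplicated because line~5 fires at most once.

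Before the inductive step I would isolate two structural facts about a single invocation. By Lemma~\ref{lm:alg-greatest-quorum}, the set $Q$ computed in line~2 is the greatest quorum contained in $U$, and it is empty precisely when $U$ contains no nonempty quorum; hence ``$U=Q$'' in line~4 is equivalent to ``$U$ is a quorum'', and conjoining it with ``$\textsc{containsProperSubQuorum}(F,U)$ is $\mathit{false}$'' makes line~4 equivalent to ``$U$ is a minimal quorum''. Thus line~5 outputs $U$ exactly when $U$ is a minimal quorum, in which case $|U|\le|V|/2$ since line~1 was passed; this yields soundness, and also shows that an invocation outputs at most one set and never both outputs and recurses (line~5 lies in the branch guarded by $Q\neq\emptyset$, the recursion in the \textbf{else} branch). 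The second fact, which I expect to be the main obstacle, is justifying that silently terminating is correct when line~2 yields a nonempty $Q$ but line~4 fails: then $U$ is not a minimal quorum, yet some nonempty quorum $P$ lies inside $U$, so for any $Q'$ with $U\subsetneq Q'$ we have $P\subsetneq Q'$ and $Q'$ is not minimal, while $U$ itself is not minimal either; hence there is no minimal quorum in the window $U\subseteq Q'\subseteq U\cup R$ and outputting nothing is correct. (Symmetrically, when line~4 holds, $U$ is the \emph{unique} minimal quorum in that window.)

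For completeness, fix a minimal quorum $Q$ with $U\subseteq Q\subseteq U\cup R$ and $|Q|\le|V|/2$. Since $|U|\le|Q|\le|V|/2$, line~1 does not return. If the greatest quorum contained in $U$ is nonempty, it is a nonempty quorum inside $U\subseteq Q$, so minimality of $Q$ forces it to equal $Q$ and hence $U=Q$; then line~4 holds and line~5 outputs $U=Q$. Otherwise $U$ contains no nonempty quorum, so $U\subsetneq Q$, whence $R\neq\emptyset$; and since $Q$ is a quorum with $U\subseteq Q\subseteq U\cup R$, Lemma~\ref{lm:alg-greatest-quorum} gives $\textsc{greatestQuorum}(F,U\cup R,U)\neq\emptyset$, so line~7's guard holds and some $v\in R$ is picked. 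If $v\notin Q$, then $Q$ is a minimal quorum with $U\subseteq Q\subseteq U\cup(R\setminus\{v\})$ and $|Q|\le|V|/2$, so the induction hypothesis makes the call in line~9 output $Q$; if $v\in Q$, then $U\cup\{v\}\subseteq Q\subseteq(U\cup\{v\})\cup(R\setminus\{v\})$ and the call in line~10 outputs $Q$. Either way $Q$ is enumerated.

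Finally, for uniqueness across the recursion tree: every quorum output by the call in line~9 is contained in $U\cup(R\setminus\{v\})$ and hence omits $v$, whereas every quorum output by the call in line~10 contains $v$, so the two subtrees enumerate disjoint families of quorums; by the induction hypothesis neither subtree repeats a quorum, and since an invocation that outputs at line~5 does not recurse, an invocation's own output (if any) is disjoint from all of its descendants'. Combining the three parts completes the induction, proving that $\textsc{traverseMinQuorums}(F,U,R)$ enumerates exactly the minimal quorums $Q$ with $U\subseteq Q\subseteq U\cup R$ and $|Q|\le|V|/2$, each exactly once.
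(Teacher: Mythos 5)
Your proof is correct and follows essentially the same route as the paper: induction on $|R|$, with the same three-way case analysis (early return, $U$ contains a quorum so the window holds at most the one candidate $U$, and $U$ contains no quorum so the line-7 guard and the split on $v$ drive the recursion). The only difference is expository: you make the ``exactly once'' claim explicit via the disjointness of the two recursive windows, which the paper leaves largely implicit.
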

    \begin{proof}
        We give a proof by induction over $|R|$. Before we proceed to the proof of the induction base and induction step, we will look at two special cases first.

        First consider the case $|U|>|V|/2$. Then there are no sets $Q$ with $U\subseteq Q\subseteq U\cup R$ and $|Q|\leq |V|/2$. Moreover, {\sc traverseMinQuorums} returns early in line~1 and, thus, does not enumerate any sets; i.e., the function behaves correctly.
        
        Next suppose that $|U|\leq |V|/2$ and that $U$ contains a quorum. In this case the function call will branch into line~4 because $Q$ is nonempty by Lemma~\ref{lm:alg-greatest-quorum}. Observe that the function {\sc containsProperSubQuorum} defined in Figure~\ref{fig:alg-quorum-intersection} correctly decides whether a set $U$ of nodes contains a proper subset that is a quorum. We conclude that the function {\sc traverseMinQuorums} will output $U$ in line~5 if and only if $U$ itself is a quorum (i.e., $U=Q$) and it is a minimal quorum. This is obviously correct behavior. The call to {\sc traverseMinQuorums} will then finish and not enumerate any other quorum. This is also correct because there cannot be any other minimal quorum $Q$ with $U\subseteq Q$.
        
        Let us now proceed to the actual inductive proof. It is sufficient to consider the case that $|U|\leq |V|/2$ and that $U$ does not contain a quorum -- the latter condition implies that $U$ itself is not a quorum. This case also implies that {\sc traverseMinQuorums} branches into line~7.
        
        For the base case $|R| = 0$ we have $R=\emptyset$ and {\sc traverseMinQuorums} will not branch into line~8, i.e., it does not enumerate any sets. This is correct behavior as the only set $Q$ satisfying $U\subseteq Q\subseteq U\cup R$ is $Q=U$ and it is not a quorum.
        
        For the induction step let $|R|>0$. Then $R\not=\emptyset$. We distinguish two cases. If $\textsc{greatestQuorum}(F, U\cup R, U)=\emptyset$, then (i)~there is no quorum $Q$ with $U\subseteq Q\subseteq U\cup R$ by Lemma~\ref{lm:alg-greatest-quorum} and (ii)~{\sc traverseMinQuorums} does not branch into line~8 and does not enumerate any quorums; hence {\sc traverseMinQuorums} behaves correctly. If $\textsc{greatestQuorum}(F, U\cup R, U)\not=\emptyset$, then {\sc traverseMinQuorums} branches into line 8, picks a node $v$ from $R$ and makes two recursive calls to {\sc traverseMinQuorums}. By induction, the first call enumerates all minimal quorums $Q$ with $U\subseteq Q\subseteq U\cup R$ that do not contain $v$ and the second call all those minimal quorums that do contain~$v$.
    \end{proof}

    \begin{ex}
        \label{ex:quorum-intersection-computation}
        In this experiment we study the computation time of the {\sc quorumIntersection} algorithm (see Figure~\ref{fig:alg-quorum-intersection}) implemented in \emph{Stellar Observatory} for the same sequence of FBAS as in Example~\ref{ex:quorum-enumeration-computation}. All FBAS of this form with $n>2$ have quorum intersection, and the computation time of the {\sc quorumIntersection} algorithm grows exponentially; see Figure~\ref{fig:quorum-intersection-computation}.
    \end{ex}

    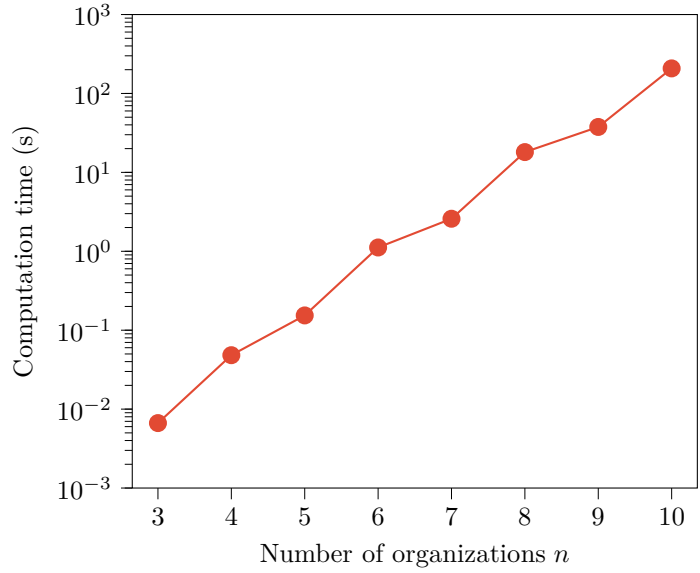
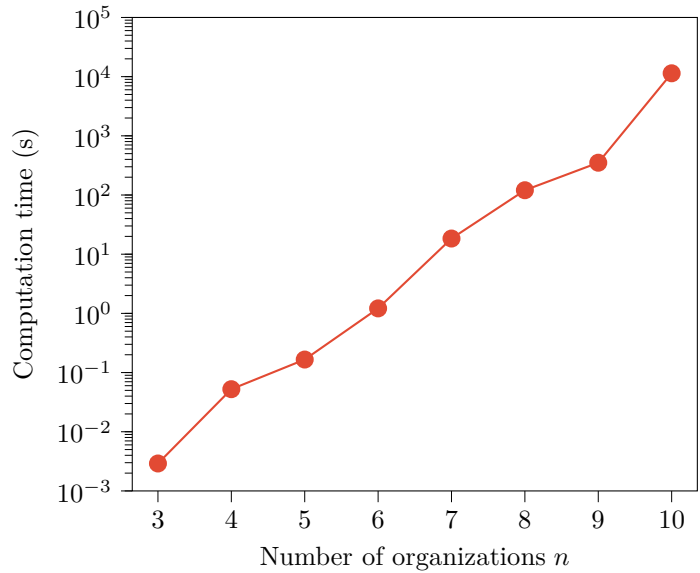
\begin{figure}[htbp]
        \centering
        \setlength{\figurewidth}{0.75\textwidth}
        \setlength{\figureheight}{0.65\textwidth}
        \begin{subfigure}[b]{\textwidth}
            \centering
            % This file was created by tikzplotlib v0.8.7.
\begin{tikzpicture}

\begin{axis}[
width=\figurewidth,
height=\figureheight,
log basis y={10},
tick align=outside,
tick pos=left,
x grid style={white!69.01960784313725!black},
xlabel={Number of organizations \(\displaystyle n\)},
xmin=2.65, xmax=10.35,
xtick style={color=black},
y grid style={white!69.01960784313725!black},
ylabel={Computation time (s)},
ymin=0.001, ymax=1000,
ymode=log,
ytick style={color=black}
]
\addplot [thick, plot_color_0, mark=*, mark size=3, mark options={solid}]
table {%
3 0.0066533088684082
4 0.0482308864593506
5 0.153916120529175
6 1.11756634712219
7 2.5849506855011
8 18.0781228542328
9 37.639146566391
10 207.770203113556
};
\label{fig:quorum-intersection-1:time}
\end{axis}

\end{tikzpicture}
            \caption{Root threshold $n-1$}
        \end{subfigure}
        \par\bigskip
        \begin{subfigure}[b]{\textwidth}
            \centering
            % This file was created by tikzplotlib v0.8.7.
\begin{tikzpicture}

\begin{axis}[
width=\figurewidth,
height=\figureheight,
log basis y={10},
tick align=outside,
tick pos=left,
x grid style={white!69.01960784313725!black},
xlabel={Number of organizations \(\displaystyle n\)},
xmin=2.65, xmax=10.35,
xtick style={color=black},
y grid style={white!69.01960784313725!black},
ylabel={Computation time (s)},
ymin=0.001, ymax=100000,
ymode=log,
ytick style={color=black}
]
\addplot [thick, plot_color_0, mark=*, mark size=3, mark options={solid}]
table {%
3 0.00290799140930176
4 0.0523691177368164
5 0.165960788726807
6 1.21394491195679
7 18.3959000110626
8 120.473617315292
9 351.319445133209
10 11395.9807598591
};
\end{axis}

\end{tikzpicture}
            \caption{Root threshold $\lfloor\frac{2}{3}n\rfloor+1$}
        \end{subfigure}
        
        \caption{Computation time of {\sc quorumIntersection} in seconds
        for the FBAS described in Example~\ref{ex:quorum-enumeration-computation} consisting of $n$ organizations with $3$ nodes each. Note that the y-axes are scaled logarithmically.}
        \label{fig:quorum-intersection-computation}
    \end{figure}

\section{Intactness}\label{sec:intact}
    A node $v\in V$ can be either \emph{well-behaved} or \emph{ill-behaved}\/\footnote{Alternative terms used in this context, which mean the same as ill-behaved, are \emph{faulty} (see, e.g.,~\cite[Section~3.1]{LokLosMaz19}) and \emph{Byzantine} (see, e.g.~\cite[Definition~1]{LosGafMaz19}).}.  
    This is a local property, which is independent of the quorum function $S$, and which other nodes cannot control. An important question is whether ill-behaved nodes can negatively impact other (well-behaved) nodes. If this happens, these other nodes are \emph{befouled} by the ill-behaved nodes. This intuitive idea will be made precise in Definitions~\ref{def:delete},~\ref{def:DSet} and~\ref{def:intact} below, 
    which correspond to the original development of Mazi\`eres~\cite[p.~9--10]{Maz16}:
    
    \begin{definition}[FBAS without $D$]\label{def:delete}
        Let $(V,S)$ be an FBAS and let $D\subseteq V$. The \emph{FBAS without $D$} is defined by
        $(V,S)^D=(V\setminus D,S^D)$, where 
        $$S^D(v)=\{s\setminus D\mid s\in S(v)\}~\text{for all}~v\in V\setminus D.$$
    \end{definition}
    
    Thus, forming $(V,S)^D$ amounts to deleting the nodes in $D$ from $V$ and from all quorum slices. The next result of Mazi\`eres will be useful in our discussion below.
    
    \begin{lem}[Theorem~1 in \cite{Maz16}]\label{lm:theo-1-maz16}
        Let $Q\subseteq V$ be a quorum in $(V,S)$ and let $D\subseteq V$. If $Q\setminus D\not=\emptyset$, then $Q\setminus D$ is a quorum in $(V,S)^D$.
    \end{lem}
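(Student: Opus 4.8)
The plan is to verify the quorum property directly from the definitions, since the construction of $(V,S)^D$ is tailored precisely so that slices restrict well. First I would record that $(V,S)^D=(V\setminus D, S^D)$ is itself a well-formed FBAS, so that the phrase ``quorum in $(V,S)^D$'' makes sense: for every $v\in V\setminus D$ we have $S(v)\neq\emptyset$, hence $S^D(v)=\{s\setminus D\mid s\in S(v)\}\neq\emptyset$; and if $s\in S(v)$ then $v\in s$ by Definition~\ref{def:FBAS} and $v\notin D$, so $v\in s\setminus D$. Thus every restricted slice still contains its owner. This is a one-line observation but worth stating before the main argument.

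Next comes the substantive step. By hypothesis $Q\setminus D\neq\emptyset$, and clearly $Q\setminus D\subseteq V\setminus D$, so $Q\setminus D$ is a nonempty subset of the node set of $(V,S)^D$. It remains to check the slice condition. I would fix an arbitrary $v\in Q\setminus D$. Then $v\in Q$, and since $Q$ is a quorum in $(V,S)$ there is a slice $s\in S(v)$ with $s\subseteq Q$. Form $s\setminus D$; by definition of $S^D$ we have $s\setminus D\in S^D(v)$, and from $s\subseteq Q$ we get $s\setminus D\subseteq Q\setminus D$. Hence $v$ has a quorum slice of $(V,S)^D$ contained in $Q\setminus D$. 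Since $v\in Q\setminus D$ was arbitrary, $Q\setminus D$ is a quorum in $(V,S)^D$.

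I expect no real obstacle here; the argument is a short unwinding of definitions. The only points that require a little care are: not forgetting the nonemptiness clause in the definition of a quorum (this is exactly where the assumption $Q\setminus D\neq\emptyset$ is used), and quoting the correct notion of slice for the reduced system, namely $S^D(v)=\{s\setminus D\mid s\in S(v)\}$, rather than, say, only those slices $s$ with $s\cap D=\emptyset$.
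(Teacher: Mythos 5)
Your proof is correct and complete: the paper itself does not prove this lemma (it is quoted as Theorem~1 of Mazi\`eres~\cite{Maz16}), and your direct unwinding of Definition~\ref{def:delete} and the quorum definition is exactly the standard argument, including the two points that genuinely need care (the nonemptiness clause and the fact that $S^D(v)$ consists of the \emph{truncated} slices $s\setminus D$, not just the slices disjoint from $D$). Nothing to add.
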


    The following two corollaries are an immediate consequence of  Definition~\ref{def:delete}. The second one is a consequence of Lemma~\ref{lm:trust-cluster}(2).
    
    \begin{cor}\label{cor:delete}
        Let $D\subseteq V$ and $E\subseteq V\setminus D$. Then $((V,S)^D)^E=(V,S)^{D\cup E}$.
    \end{cor}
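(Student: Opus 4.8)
The plan is to prove the identity by unwinding Definition~\ref{def:delete} twice and comparing the two sides as FBAS, i.e.\ comparing their node sets and their quorum functions. First I would record that the hypothesis $E\subseteq V\setminus D$ is exactly what makes the left-hand side $((V,S)^D)^E$ meaningful: by Definition~\ref{def:delete} the FBAS $(V,S)^D$ has node set $V\setminus D$, and forming ``the FBAS without $E$'' of it requires $E$ to be a subset of that node set. Similarly $D\cup E\subseteq V$, so the right-hand side is well-defined too.

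For the node sets, I would simply note that $((V,S)^D)^E$ has node set $(V\setminus D)\setminus E$, and since set difference satisfies $(V\setminus D)\setminus E = V\setminus(D\cup E)$, this agrees with the node set of $(V,S)^{D\cup E}$.

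For the quorum functions, I would fix an arbitrary node $v\in V\setminus(D\cup E)$ and compute. Applying Definition~\ref{def:delete} once, $(S^D)^E(v)=\{t\setminus E\mid t\in S^D(v)\}$; applying it again, $S^D(v)=\{s\setminus D\mid s\in S(v)\}$, so substituting gives $(S^D)^E(v)=\{(s\setminus D)\setminus E\mid s\in S(v)\}$. The only nontrivial ingredient is the elementary identity $(s\setminus D)\setminus E=s\setminus(D\cup E)$, which turns the right-hand side into $\{s\setminus(D\cup E)\mid s\in S(v)\}=S^{D\cup E}(v)$. Hence the two quorum functions coincide, and the two FBAS are equal.

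I do not expect any real obstacle: the corollary is essentially the statement that iterated node deletion is deletion of the union, which reduces to the associativity-type law for set difference. The only point requiring a little care is bookkeeping the domains of the quorum functions so that both sides are genuinely compared over the same node set $V\setminus(D\cup E)$; once that is settled, the verification is a one-line set manipulation.
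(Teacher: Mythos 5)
Your proof is correct and is exactly the routine verification the paper has in mind: the paper gives no explicit proof, stating only that the corollary is an immediate consequence of Definition~\ref{def:delete}, and your unwinding of the definition via $(V\setminus D)\setminus E=V\setminus(D\cup E)$ and $(s\setminus D)\setminus E=s\setminus(D\cup E)$ is precisely that immediate consequence.
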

    
    \begin{cor}\label{cor:trust-cluster-deleted-fbas}
        Let $Z$ be a trust cluster of $(V,S)$. Then $(V,S)^{V\setminus Z}=(Z, S|_Z)$ and every quorum of $(V,S)^{V\setminus Z}$ is also a quorum of $(V,S)$.
    \end{cor}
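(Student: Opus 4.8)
The plan is to unwind Definition~\ref{def:delete} with the choice $D=V\setminus Z$ and then exploit the defining property of a trust cluster, namely that deleting the nodes outside $Z$ leaves every quorum slice of a node in $Z$ untouched.

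First I would check that the two FBAS have the same underlying node set: $(V,S)^{V\setminus Z}$ has node set $V\setminus(V\setminus Z)=Z$, which is exactly the node set of the (well-defined, by Lemma~\ref{lm:trust-cluster}(1)) FBAS $(Z,S|_Z)$. For the quorum functions, fix $v\in Z$ and a slice $s\in S(v)$. By the definition of the trust graph $G$ there is an edge $(v,u)$ for every $u\in s$, so each node of $s$ is reachable from $v$; since $Z$ is closed under reachability this forces $s\subseteq Z$. Consequently $s\setminus(V\setminus Z)=s\cap Z=s$, and therefore
$$S^{V\setminus Z}(v)=\{\,s\setminus(V\setminus Z)\mid s\in S(v)\,\}=S(v)=S|_Z(v).$$
Since this holds for every $v\in Z$, we obtain the identity $(V,S)^{V\setminus Z}=(Z,S|_Z)$.

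The second assertion is then immediate: a quorum of $(V,S)^{V\setminus Z}$ is, by the identity just proved, a quorum of $(Z,S|_Z)$, and every quorum of $(Z,S|_Z)$ is a quorum of $(V,S)$ by Lemma~\ref{lm:trust-cluster}(2).

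I do not expect any real obstacle here; the entire argument is a bookkeeping unwinding of the definitions. The one point that deserves a moment's care is distinguishing \emph{direct} reachability (a node $u$ lying in some slice $s\in S(v)$, which yields an out-edge of $v$ in $G$) from transitive reachability: it is the former that already places all of $s$ inside $Z$, and once that observation is in place the rest follows mechanically.
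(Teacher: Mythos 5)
Your proof is correct and matches the paper's (unstated) reasoning: the paper declares the corollary an immediate consequence of Definition~\ref{def:delete} together with Lemma~\ref{lm:trust-cluster}, whose part~(1) already records the key fact you re-derive, namely that every quorum slice of a node in $Z$ is contained in $Z$, so deletion of $V\setminus Z$ leaves the slices unchanged. The second assertion then follows from Lemma~\ref{lm:trust-cluster}(2) exactly as you say.
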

    
    The following notion of dispensible sets (DSets) is an important stepping stone for the definition of intactness.
    
    \begin{definition}[DSet]\label{def:DSet}
        Let $(V,S)$ be an FBAS. A set $D\subseteq V$ is a \emph{dispensible set} or \emph{DSet}, when
        \begin{itemize}
            \item  $(V,S)^D$ has quorum intersection, and
            \item either $V\setminus D$ is a quorum in $(V,S)$ or $D=V$.
        \end{itemize}
    \end{definition}
    
    By this definition, $D=V$ is always a DSet, since $(V,S)^V=(\emptyset,S^D)$ with $S^D:\emptyset\rightarrow \{\{\emptyset\}\}$, consistently with Definition~\ref{def:FBAS}, and this FBAS trivially has quorum intersection. Whenever an FBAS $(V,S)$ has quorum intersection, $D=\emptyset$ is a DSet. An extreme case arises when $S(v)=\{V\}$ for all $v\in V$, since then $\emptyset$ and $V$ are the only DSets.
    
    \begin{ex}\label{ex:FBAS_fig7_Dsets}
        Consider the FBAS $(V,S)$ from Example~\ref{ex:FBAS_fig7}. Since this FBAS has quorum intersection, $\emptyset$ and $V$ are DSets. For every DSet $D\neq \emptyset$ it is necessary that $V\setminus D$ is a quorum in $(V,S)$. The only candidates are $\{1,2,3\}$, $\{4,5,6\}$, and $V\setminus\{7\}$. For each of these sets we see that the corresponding FBAS $(V,S)^D$ indeed has quorum intersection, and hence the DSets of $(V,S)$ are given by
        $$\emptyset, \quad\{1,2,3\},\quad \{4,5,6\},\quad 
        V\setminus\{7\},\quad V.$$ 
    \end{ex}
    
    We can now formalize the intuitive idea described at the beginning of this section. 
    
    \begin{definition}[intact and befouled]\label{def:intact}
        Let $B\subseteq V$ be the set of ill-behaved nodes in the FBAS $(V,S)$. A node $v$ is called \emph{$B$-intact} when there exists a \emph{single} DSet $D$ which contains \emph{all} ill-behaved nodes but not $v$, i.e., $B\subseteq D\subseteq V\setminus\{v\}$. If $B$ is obvious from the context, then we just say that $v$ is \emph{intact} instead of $B$-intact. If $v$ is not intact, it is called \emph{befouled}.
    \end{definition}
    
    Note that if a node $v$ is intact, then the definition immediately implies that it is well-behaved. Conversely, if $v$ is ill-behaved, then it is befouled. The interesting object of study are the well-behaved nodes that are befouled.
    
     We have shown in Theorem~\ref{thm:quorumsplit-np} that the problem \textsc{SimpleQuorumSplit} is NP-complete. Since the problem to decide whether a given set $D\subseteq V$ is a DSet involves the problem to decide whether $(V,S)^D$ has quorum intersection, the DSet decision problem is computationally at least as hard. We will deal with the algorithmic treatment of intactness in Section~\ref{sec:algo-intact} below.

\subsection{Intact nodes in symmetric simple FBAS}
    While the computation of quorums and DSets in general is computationally challenging, the quorums of a symmetric simple FBAS can be completely characterized, see Lemma~\ref{lem:simplequorum}. The following result shows that a simple characterization also exists for DSets of symmetric simple FBAS. 
    
    \begin{lem}\label{lem:simpledelete}
        Let $(V,k)$ be a symmetric simple FBAS, and let $D\subseteq V$. Then the set of all quorums of $(V,k)^D$ is given by $\bigl\{Q\subseteq V\setminus D\,\mid\,|Q|\geq \max\{1,k-|D|\}\bigr\}$.
        Moreover, $(V,k)^D$ has quorum intersection if and only if $\max\{1,k-|D|\}>(|V|-|D|)/2$.
    \end{lem}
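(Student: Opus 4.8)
The plan is to compute the quorum slices of the reduced FBAS $(V,k)^D$ explicitly, then read off the quorums from them, and finally obtain the quorum intersection criterion by reducing to Lemma~\ref{lem:simplequorum} (or by repeating its pigeonhole argument). Throughout I abbreviate $V'=V\setminus D$, $n'=|V'|=|V|-|D|$, and $k'=\max\{1,k-|D|\}$, and I first dispose of the degenerate case $D=V$: then $V'=\emptyset$, so $(V,k)^D$ has no quorums and vacuously has quorum intersection, which matches the statement since the claimed quorum collection $\{Q\subseteq\emptyset\mid|Q|\geq 1\}$ is empty and the claimed inequality $\max\{1,k-|V|\}>0$ reads $1>0$. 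So from now on assume $D\subsetneq V$, i.e., $n'\geq 1$.

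\emph{Slices of $(V,k)^D$.} For $v\in V'$ we have $S(v)=\{s\subseteq V\mid v\in s,\ |s|=k\}$, hence $S^D(v)=\{s\setminus D\mid v\in s,\ |s|=k\}$. If $s$ is such a slice, then $v\in s\setminus D$ forces $|s\cap D|\leq k-1$, and trivially $|s\cap D|\leq|D|$, so $|s\setminus D|=k-|s\cap D|$ lies between $k-\min\{k-1,|D|\}=k'$ and $k$. Conversely, given any $U\subseteq V'$ with $v\in U$ and $k'\leq|U|\leq k$, one adjoins $k-|U|$ nodes of $D$ to $U$ to obtain a slice $s$ of $v$ with $s\setminus D=U$; this is possible precisely because $k-|U|\leq k-k'=\min\{k-1,|D|\}\leq|D|$. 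Therefore $S^D(v)=\{U\subseteq V'\mid v\in U,\ k'\leq|U|\leq k\}$, which is nonempty since $n'\geq k'$ (using $k\leq|V|$), so $(V,k)^D$ is a genuine FBAS.

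\emph{Quorums and quorum intersection.} If $Q$ is a quorum of $(V,k)^D$, then it is nonempty and each $v\in Q$ has a slice contained in $Q$; every such slice has at least $k'$ elements, so $|Q|\geq k'$. Conversely, if $Q\subseteq V'$ with $|Q|\geq k'$, then for each $v\in Q$ any subset $s$ of $Q$ containing $v$ with $|s|=\min\{|Q|,k\}$ satisfies $k'\leq|s|\leq k$, hence $s\in S^D(v)$ and $s\subseteq Q$; so $Q$ is a quorum. Since $k'\geq 1$, the quorums of $(V,k)^D$ are exactly the sets $Q\subseteq V'$ with $|Q|\geq k'$, which is precisely the quorum collection of the symmetric simple FBAS $(V',k')$ — and $(V',k')$ is valid because $1\leq k'\leq n'$. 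By Lemma~\ref{lem:simplequorum} it follows that $(V,k)^D$ has quorum intersection if and only if $k'>n'/2=(|V|-|D|)/2$.

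\emph{Main obstacle.} There is no serious difficulty; the one place demanding care is the slice computation, specifically the ``conversely'' direction, where one must verify that every set of the asserted size range really does arise as $s\setminus D$ for an actual slice $s$. This rests on the counting identity $k-k'=\min\{k-1,|D|\}$ together with the elementary fact that constructing such an $s$ never requires more padding nodes from $D$ than $D$ contains.
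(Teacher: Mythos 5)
Your proof is correct and follows essentially the same route as the paper's: characterize which subsets of $V\setminus D$ are quorums of $(V,k)^D$ by examining the reduced slices, then apply the pigeonhole criterion of Lemma~\ref{lem:simplequorum}. The only (harmless) stylistic difference is that you compute the full reduced slice set $S^D(v)=\{U\subseteq V\setminus D\mid v\in U,\ \max\{1,k-|D|\}\leq|U|\leq k\}$ up front, whereas the paper constructs a single suitable slice in each of the two cases $|D|>k-1$ and $|D|\leq k-1$; your explicit handling of $D=V$ and the reduction of the intersection claim to $(V\setminus D,\max\{1,k-|D|\})$ are both sound.
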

    \begin{proof}
        If $Q\subseteq V\setminus D$ is any set with $|Q|<\max\{1,k-|D|\}$ then either $Q$ is empty and hence not a quorum, or $1\leq |Q|<k-|D|$. Let $v\in Q$, and let $s$ be any quorum slice of $v$ in $(V,k)$. From $|s|=k$ we obtain $|s\setminus D|\geq |s|-|D|=k-|D|>|Q|$, which shows that $s\setminus D$ is not contained in $Q$, and hence $Q$ cannot be a quorum.
        
        Now let $Q\subseteq V\setminus D$ with $|Q|\geq \max\{1,k-|D|\}$; then $Q$ is nonempty. Consider a node $v\in Q$.
        
        (i) If $|D|>k-1$, then there exist distinct $d_1,\dots,d_{k-1}\in D$. Then $\{v\}\cup \{d_1,\dots,d_{k-1}\}$ is a quorum slice of $v$ in $(V,k)$ and, thus, $\{v\}$ is a quorum slice of $v$ in $(V,k)^D$ that is contained in $Q$. 
        
        (ii) If $|D|\leq k-1$, then $|Q|\geq k-|D|$ implies that there exist distinct elements $q_1,\dots,q_{k-|D|-1}\in Q\setminus\{v\}$. Then $\{v\}\cup\{q_1,\dots,q_{k-|D|-1}\}\cup D$ 
        is a quorum slice of $v$ in $(V,k)$, and thus $\{v\}\cup\{q_1,\dots,q_{k-|D|-1}\}$ is a quorum slice of $v$ in $(V,k)^D$ that is contained in $Q$.
        
        This concludes the proof that $Q$ is a quorum of $(V,k)^D$. The statement that $(V,k)^D$ has quorum intersection if and only if $\max\{1,k-|D|\}>(|V|-|D|)/2$ can be proved similarly to Lemma~\ref{lem:simplequorum}.
    \end{proof}
    
    We remark that deleting nodes from a symmetric simple FBAS according to Definition~\ref{def:delete} does not preserve the property to be symmetric and simple. To show this consider the example symmetric simple FBAS $(V,k)$ with $V=\{a,b,c\}$ and $k=2$. Let $D=\{c\}$. Then the quorum slices of $a$ in $(V,k)$ are $\{a,b\}$ and $\{a,c\}$ and the quorum slices of $a$ in $(V,k)^D$ are $\{a, b\}$ and $\{a\}$. These last two quorum slices have different sizes so that we can conclude that $(V,k)^D$ is not simple and therefore not symmetric.
    
    Using the Lemmas~\ref{lem:simpledelete} and~\ref{lem:simplequorum} we now characterize the DSets of a simple symmetric FBAS.
    
    \begin{lem}\label{lem:symmDSet}
        Let $(V,k)$ be a symmetric simple FBAS. A set $D\subseteq V$ is a DSet if and only if one of the following holds:
        \begin{enumerate}[(1)]
            \item $D=V$, or
            \item $k=1$ and $|D|=|V|-1$, or
            \item $|D|\leq\min\{|V|-k,\,2k-|V|-1\}$.
        \end{enumerate}
    \end{lem}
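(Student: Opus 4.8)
The plan is to unwind Definition~\ref{def:DSet} into two arithmetic conditions on the integers $|D|$, $|V|$, $k$, and then carry out a short case analysis. Write $n=|V|$ and $m=|D|$. By definition, $D$ is a DSet exactly when $(V,k)^D$ has quorum intersection and, moreover, either $D=V$ or $V\setminus D$ is a quorum of $(V,k)$. The case $D=V$ is always a DSet (as noted right after Definition~\ref{def:DSet}), giving alternative~(1); so I would henceforth assume $D\neq V$, i.e.\ $m<n$. By Lemma~\ref{lem:simplequorum}, $V\setminus D$ is a quorum of $(V,k)$ iff $n-m\geq k$, i.e.\ $m\leq n-k$; by Lemma~\ref{lem:simpledelete}, $(V,k)^D$ has quorum intersection iff $\max\{1,k-m\}>(n-m)/2$. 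So for $D\neq V$ the statement ``$D$ is a DSet'' is equivalent to the conjunction $m\leq n-k$ and $\max\{1,k-m\}>(n-m)/2$.

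I would then split on the value of $\max\{1,k-m\}$. \textbf{If $m\leq k-1$}, then $\max\{1,k-m\}=k-m$, and $k-m>(n-m)/2$ is equivalent over the integers to $m\leq 2k-n-1$. Since $k\leq n$ is built into the definition of a symmetric simple FBAS, we have $2k-n-1\leq k-1$, so the bound $m\leq 2k-n-1$ already implies $m\leq k-1$; hence in this branch ``$D$ is a DSet'' is equivalent to $m\leq\min\{n-k,\,2k-n-1\}$, which is alternative~(3) (and it forces $m\leq n-k<n$, so indeed $D\neq V$). \textbf{If $m\geq k$}, then $\max\{1,k-m\}=1$, and $1>(n-m)/2$ is equivalent to $m\geq n-1$; together with $m<n$ this gives $m=n-1$, and substituting into $m\leq n-k$ forces $k\leq 1$, i.e.\ $k=1$ --- this is alternative~(2). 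Combining both branches with alternative~(1) yields the claimed characterization, and the converse implications (each of (1), (2), (3) gives a DSet) are already contained in these equivalences; the only overlap among (1), (2), (3) occurs in the trivial case $|V|=1$ and is harmless.

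I do not anticipate a genuine obstacle: the proof is essentially bookkeeping built on Lemmas~\ref{lem:simplequorum} and~\ref{lem:simpledelete}. The two points that need care are (a)~correctly reducing the strict inequality $\max\{1,k-m\}>(n-m)/2$ in each branch, being mindful that $(n-m)/2$ need not be an integer, and (b)~observing that the quorum-intersection constraint $m\leq 2k-n-1$ subsumes the case hypothesis $m\leq k-1$, which is exactly what lets alternative~(3) be stated cleanly without reference to the case split.
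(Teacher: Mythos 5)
Your proposal is correct and follows essentially the same route as the paper: both unwind Definition~\ref{def:DSet} via Lemmas~\ref{lem:simplequorum} and~\ref{lem:simpledelete} into the conditions $|D|\leq|V|-k$ and $\max\{1,k-|D|\}>(|V|-|D|)/2$, and then resolve the maximum by cases (the paper splits on which term of the max exceeds $(|V|-|D|)/2$, you split on which term realizes the max --- an immaterial difference). Your observation that $|D|\leq 2k-|V|-1$ subsumes the case hypothesis $|D|\leq k-1$ is a nice touch that the paper leaves implicit.
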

    \begin{proof}
        By definition, $D$ is a DSet if and only if 
        \begin{enumerate}[(i)]
            \item $D=V$ or
            \item $\max\{1,k-|D|\}>(|V|-|D|)/2$, i.e., $(V,k)^D$ has quorum intersection (cf. Lemma~\ref{lem:simpledelete}) and \\$|V|-|D|\geq k$, i.e., $V\setminus D$ is a quorum in $(V,k)$ (cf. Lemma~\ref{lem:simplequorum}).
        \end{enumerate}
        The equivalence is now shown by the following implications:
        
        ``(ii) $\Rightarrow$ (1), (2) or (3)'': If $\max\{1,k-|D|\}>(|V|-|D|)/2$, then either (a) $1>(|V|-|D|)/2$, or (b) $k-|D|>(|V|-|D|)/2$. In case (a) we have either $V=D$ and hence statement (1), or $|V|-|D|=1$, which together with $|V|-|D|\geq k$ yields (2). In case (b) we have $|D|<2k-|V|$, which together with $|D|\leq |V|-k$ yields (3).
        
        ``(2) or (3) $\Rightarrow$ (ii)'':  Clearly, (2) implies (ii). Moreover, (3) yields $|D|\leq |V|-k$ and hence $|V|-|D|\geq k$, as well as $|D|\leq 2k-|V|-1$ or equivalently $k-|D|>(|V|-|D|)/2$, so that we again obtain (ii).
    \end{proof}
    
    Lemma~\ref{lem:symmDSet} shows that if $(V,k)$ is a symmetric simple FBAS with $1<k=|V|$ or $1<k\leq (|V|+1)/2$, then $(V,k)$ has only the trivial DSets $D=V$ and possibly $D=\emptyset$.  
    
    We say that an FBAS $(V,S)$ is \emph{resilient} against $m\geq 0$ ill-behaved nodes if no well-behaved node becomes befouled whenever there are at most $m$ ill-behaved nodes. We now ask about the minimum size of $V$ such that for a suitable $k$ there is a symmetric simple FBAS $(V,k)$ that is resilient against $m$ ill-behaved nodes. We only want to consider FBAS that have quorum intersection and therefore we require $k>|V|/2$ by Lemma~\ref{lem:simplequorum}. In particular, $k>1$ if we exclude trivial FBAS.
    
    We know that a node $v$ is intact when all ill-behaved nodes (at most $m$) are contained in a single DSet that does not contain $v$. By Lemma~\ref{lem:symmDSet}, the relevant DSets of $(V,k)$ are given by \emph{all} sets $D\subseteq V\setminus\{v\}$ which satisfy $m\leq |D|\leq \min\{|V|-k, 2k-|V|-1\}$, and hence $|V|$ and $k$ must satisfy $m\leq \min\{|V|-k, 2k-|V|-1\}$. Maximizing the right hand side over $k$ by equating the two expressions yields $k=2|V|/3+1/3$, and then $m\leq |V|-k$ leads to $|V|\geq 3m+1$. Thus, the minimum number of nodes is $|V|=3m+1$, and the corresponding constant is $k=2(3m+1)/3+1/3=2m+1$. 
    
    We thus have shown the following result, which is consistent with traditional Byzantine fault tolerance; see, e.g.,~\cite{CasLis02},~\cite[Theorem~4]{BraTou85}, and Example~\ref{ex:Byzantine} above.
    
    \begin{thm}\label{thm:SymmetricByzantine}
        The minimum number of nodes of a symmetric simple FBAS $(V,k)$ that is resilient against (at most) $m$ ill-behaved nodes is $3m+1$, and this FBAS has constant $k=2m+1$.
    \end{thm}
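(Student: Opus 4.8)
The plan is to convert ``resilient against $m$ ill-behaved nodes'' into a purely numerical condition on $|V|$ and $k$ via the DSet characterization in Lemma~\ref{lem:symmDSet}, and then to optimize over $k$. Following the setup preceding the theorem, I require $(V,k)$ to have quorum intersection (so $k>|V|/2$ by Lemma~\ref{lem:simplequorum}) and, in order to exclude trivial FBAS, $k\geq2$; I also assume $m\geq1$ (so that $k=2m+1\geq3$ makes non-triviality automatic), the case $m=0$ giving the single-node FBAS $(\{v\},1)$, consistent with $3m+1=1$ and $2m+1=1$. The assumption $k\geq2$ will be used repeatedly to discard case~(2) of Lemma~\ref{lem:symmDSet}.

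The central step is the equivalence: such an FBAS $(V,k)$ is resilient against $m$ ill-behaved nodes if and only if $m\leq\min\{|V|-k,\,2k-|V|-1\}$. For ``$\Leftarrow$'', given any set $B$ of at most $m$ ill-behaved nodes and any well-behaved $v\in V\setminus B$, the set $D:=B$ satisfies $|D|\leq m\leq\min\{|V|-k,\,2k-|V|-1\}$, hence is a DSet by Lemma~\ref{lem:symmDSet}(3), and $B\subseteq D\subseteq V\setminus\{v\}$; so $v$ is $B$-intact by Definition~\ref{def:intact}, and since $B$ and the well-behaved node $v$ were arbitrary, no well-behaved node is befouled. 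For ``$\Rightarrow$'', I would first observe that a resilient FBAS must satisfy $|V|>m$: otherwise, since $k\geq2$ forces $|V|\geq2$, one may take $B$ with $|B|=|V|-1\leq m$ and $v$ the single remaining node, and resilience would force a DSet $D$ with $B\subseteq D\subseteq V\setminus\{v\}=B$, i.e.\ $D=B$ with $|D|=|V|-1$; but $|V|-1>|V|-2\geq|V|-k\geq\min\{|V|-k,\,2k-|V|-1\}$ together with $k\geq2$ shows that $D$ is not a DSet, a contradiction. Hence $|V|>m$, so one may pick $B$ with $|B|=m$ and $v\in V\setminus B$; resilience yields a DSet $D$ with $B\subseteq D\subseteq V\setminus\{v\}$, and since $v\notin D$ we have $D\neq V$, so case~(3) of Lemma~\ref{lem:symmDSet} applies (case~(2) being excluded by $k\geq2$), giving $m=|B|\leq|D|\leq\min\{|V|-k,\,2k-|V|-1\}$.

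With the equivalence established, the theorem reduces to finding the smallest $|V|$ that admits an integer $k$ with $m\leq|V|-k$ and $m\leq2k-|V|-1$. These rewrite as $\tfrac{|V|+m+1}{2}\leq k\leq|V|-m$, and a short calculation (a parity check covering the rounding of the left endpoint) shows that an integer $k$ in this range exists precisely when $|V|\geq3m+1$; at the boundary $|V|=3m+1$ the left endpoint equals $2m+1\in\N$ and coincides with the right endpoint, so $k=2m+1$ is forced. One then checks directly that $k=2m+1$ satisfies $2\leq k\leq|V|$ and $k>|V|/2$, so $(V,2m+1)$ with $|V|=3m+1$ is a legitimate non-trivial symmetric simple FBAS with quorum intersection, and by the equivalence it is resilient against $m$ ill-behaved nodes. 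This establishes both that the minimum number of nodes is $3m+1$ and that the constant must then be $k=2m+1$.

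I expect the conceptual content to be concentrated entirely in the equivalence of the second paragraph; the optimization in the third paragraph is the one-line computation already sketched in the discussion preceding the statement. The only delicate points are the bookkeeping around edge cases — extracting $|V|>m$ first so that a well-behaved node $v$ can genuinely be chosen outside a size-$m$ set of ill-behaved nodes, and the recurring appeal to $k\geq2$ (non-triviality) to rule out case~(2) of Lemma~\ref{lem:symmDSet} — but neither poses a real obstacle.
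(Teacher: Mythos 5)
Your proposal is correct and follows essentially the same route as the paper, which establishes the theorem in the paragraph preceding its statement by reducing resilience to the condition $m\leq\min\{|V|-k,\,2k-|V|-1\}$ via Lemma~\ref{lem:symmDSet} and then optimizing over $k$. Your version is simply more careful where the paper is informal: you prove both directions of the reduction (in particular the ``$\Rightarrow$'' direction, which the paper asserts with ``hence $|V|$ and $k$ must satisfy\dots''), handle the edge cases $m=0$ and $|V|>m$, and work with integer $k$ directly rather than equating the two expressions over the reals, which is a slightly cleaner way to land on $k=2m+1$.
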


\subsection{Comparison of intactness definitions}\label{sec:intact-compare}
    In the recent publications~\cite{LokLosMaz19,LosGafMaz19}, quorums and intactness are defined differently from~\cite{Maz16}, and in this subsection we will compare both variants. Let us first give an intuition for the main differences:
    \begin{itemize}
        \item In the definitions of~\cite{Maz16}, which we adopted in this paper, the main notions are introduced in the following order: First, FBAS, quorums, quorum intersection, and DSets are defined without any reference to ill-behaved nodes. Then one postulates a set of ill-behaved nodes and introduces the concept of an intact node, which depends on the set of ill-behaved nodes.
        \item In~\cite{LokLosMaz19,LosGafMaz19}, the set of intact nodes is directly ``baked into'' an FBAS, i.e., in these papers FBAS are richer structures that already define their intact nodes. Then quorums and quorum intersection directly incorporate the set of intact nodes into their definition. Finally, the notion of \emph{intact sets} is introduced.
    \end{itemize}
    The advantage of the first paradigm is that FBAS can be viewed as structures agnostic to the set of ill-behaved nodes (which is usually unknown in practice). An advantage of the second paradigm is its simplicity -- it does not require definitions of DSets or FBAS of the form $(V,S)^D$.
    
    In order to mathematically compare the two paradigms, we need the following definition.
    \begin{definition}[$B$-quorum and $B$-intact set] \label{def:B-intact-set}
        Let $(V,S)$ be an FBAS and $B\subseteq V$. A \emph{$B$-quorum} is a nonempty set $Q\subseteq V$ such that for every $v\in Q\setminus B$ there is a quorum slice $s\in S(v)$ with $s\subseteq Q$. A subset $U\subseteq V\setminus B$ is called a \emph{$B$-intact set} if 
        \begin{itemize}
            \item $U$ is a $B$-quorum or $U=\emptyset$,
            \item for all $B$-quorums $Q$, $Q'$ that intersect $U$, also $Q\cap Q'$ intersects $U$.
        \end{itemize}
    \end{definition}
    Note that in Definition~\ref{def:intact} we have defined $B$-intact \emph{nodes}, whereas Definition~\ref{def:B-intact-set} introduces $B$-intact \emph{sets} (as in~\cite[Section~3.1]{LokLosMaz19} and~\cite[Section~5]{LosGafMaz19}). 
    Intuitively, we expect that a node is $B$-intact if and only if it is contained in a $B$-intact set. We will show that this is actually not the case, and that the former property is strictly stronger than the latter.
    
    For the following discussion we fix a set $B\subseteq V$, the set of ill-behaved nodes.
    \begin{lem}\label{lm:intactness-comp-inequal2}
        If a node $v$ is $B$-intact, then it is contained in a $B$-intact set.
    \end{lem}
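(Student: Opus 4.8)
The plan is to exhibit an explicit $B$-intact set containing $v$, namely $U := V \setminus D$, where $D$ is a DSet witnessing that $v$ is $B$-intact, so that $B \subseteq D \subseteq V \setminus \{v\}$ (Definition~\ref{def:intact}). First observe that $v \in U$, and since $B \subseteq D$ we have $U \subseteq V \setminus B$, as required by Definition~\ref{def:B-intact-set}. So it remains to check the two bullets defining a $B$-intact set.

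For the first bullet, note that $v \notin D$ forces $D \neq V$, so by Definition~\ref{def:DSet} the set $U = V \setminus D$ is a quorum of $(V,S)$; in particular $U$ is nonempty. Every quorum is trivially a $B$-quorum, because the quorum slice condition holds for \emph{all} nodes of $U$ and a $B$-quorum only demands it for the nodes in $U \setminus B$. This gives the first bullet.

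For the second bullet, the key observation is a $B$-quorum analog of Lemma~\ref{lm:theo-1-maz16}: if $Q$ is a $B$-quorum with $Q \cap U \neq \emptyset$, then $Q \setminus D$ is a quorum of the FBAS $(V,S)^D$. Indeed, pick any $w \in Q \setminus D = Q \cap U$; since $B \subseteq D$ we have $w \notin B$, so the $B$-quorum property yields a slice $s \in S(w)$ with $s \subseteq Q$, and then $s \setminus D \in S^D(w)$ (Definition~\ref{def:delete}) with $s \setminus D \subseteq Q \setminus D$; as $Q \setminus D$ is nonempty this makes it a quorum of $(V,S)^D$. Now let $Q, Q'$ be two $B$-quorums each meeting $U$. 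By the observation, $Q \setminus D$ and $Q' \setminus D$ are quorums of $(V,S)^D$, which has quorum intersection because $D$ is a DSet; hence $(Q \setminus D) \cap (Q' \setminus D) \neq \emptyset$, that is, $Q \cap Q'$ meets $U$. This establishes the second bullet, so $U$ is a $B$-intact set containing $v$.

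The argument is short and I do not anticipate a real obstacle: the only points requiring a little care are choosing the right candidate set ($U = V \setminus D$, rather than, say, the greatest quorum inside $V \setminus D$) and verifying the elementary restriction lemma above, which is essentially Mazi\`eres' Theorem~1 transported to $B$-quorums and is even slightly easier since the $B$-quorum condition is vacuous on $D \supseteq B$. One should also double-check the trivial edge cases, i.e.\ that $v \notin D$ indeed guarantees $D \neq V$ so that $V \setminus D$ is genuinely a nonempty quorum; these are immediate.
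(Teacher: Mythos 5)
Your proposal is correct and follows essentially the same route as the paper's proof: the same candidate set $U=V\setminus D$, the same auxiliary observation that a $B$-quorum $Q$ meeting $U$ yields a quorum $Q\setminus D$ of $(V,S)^D$, and the same appeal to quorum intersection of $(V,S)^D$. Your explicit checks of the edge cases ($U\subseteq V\setminus B$ and $D\neq V$) are a small but welcome addition of detail.
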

    \begin{proof}
        Let us first prove the following auxiliary statement:
        \begin{itemize}[(A)]
            \item Let $D\subseteq V$ contain $B$ and let $Q$ be a $B$-quorum with $Q\not\subseteq D$. Then $Q\setminus D$ is a quorum in $(V,S)^D$.
        \end{itemize}
        Obviously, $Q\setminus D$ is nonempty. Let $v\in Q\setminus D$. Then in particular $v\not\in B$. Hence, there is an $s\in S(v)$ with $s\subseteq Q$ and, therefore, $s\setminus D\subseteq Q\setminus D$. We conclude that $Q\setminus D$ is a quorum in $(V,S)^D$. This concludes the proof of Statement~(A).
    
        Now let us proceed with the proof of this lemma. Let $v$ be $B$-intact. Then there is DSet $D$ containing $B$ but not $v$. Let $U=V\setminus D$. We will show that $U$ is a $B$-intact set. Clearly, $U$ is a quorum because $U$ is nonempty and due to the definition of $D$. Then $U$ is also a $B$-quorum. Let $Q,Q'$ be $B$-quorums that intersect $U$. By Statement~(A), $Q\setminus D$ and $Q'\setminus D$ are quorums in $(V, S)^D$. Since $D$ is a DSet we obtain that $\emptyset\not=(Q\setminus D)\cap (Q'\setminus D)=(Q\cap Q')\setminus D$, i.e., $Q\cap Q'$ intersect $U$.
    \end{proof}
    
    The following example shows that the converse of Lemma~\ref{lm:intactness-comp-inequal2} does not hold in general, i.e., we cannot conclude that every member of a $B$-intact set is $B$-intact itself.
    \begin{ex}
        Let $V=\{a,b,c,d\}$, 
        \begin{align*}
            S(a)&=\{\{a,b\}\},&
            S(b)&=\{\{a,b,c,d\}\},\\
            S(c)&=\{\{b,c\},\{c,d\}\},&
            S(d)&=\{\{b,d\}, \{c,d\}\},
        \end{align*}
        and let $B=\{a\}$. Then there are two quorums, $\{c,d\}$ and $\{a,b,c,d\}$, and four $B$-quorums, $\{a\}$, $\{a,c,d\}$, $\{c,d\}$ and $\{a,b,c,d\}$, so that $\{c,d\}$ is a $B$-intact set.\\
        On the other hand, the only DSet that contains $B$ is $V$ itself, and hence 
        no node in $V$ is $B$-intact. (Observe that $\{a,b\}$ is not a DSet as $(V, S)^{\{a,b\}}$ has the quorums $\{c\}$ and $\{d\}$ and therefore has no quorum intersection.)
    \end{ex}
    
    We will now introduce an FBAS property under which the converse of  Lemma~\ref{lm:intactness-comp-inequal2} is guaranteed to hold. 
    %(see Theorem~\ref{thm:intactness-comp-equal}).
    % this has been taken from the Stellar Kernel paper
    % interestingly, this property that I came up with when writing the Kernel paper, is very similar to the definition of
    % Personal Byzantine Quorum Systems (PBQS) from the paper Stellar Consensus by Instantiation
    % the main difference, however, is that here the property is based on slices whereas for PBQS it is based on quorums.
    \begin{definition}[subslice property]
        We say that $(V, S)$ has the \emph{$B$-subslice property} if for every $v\in V\setminus B$, $s_v\in S(v)$ and $u\in s_v\setminus B$ there is some $s_u\in S(u)$ with $s_u\subseteq s_v$.
    \end{definition}
    We remark that the $B$-subslice property reminds of the main property of Personal Byzantine Quorum Systems \cite[Property~1]{LosGafMaz19}. It is easy to see that the $B$-subslice property holds for symmetric simple FBAS for every set $B$. Moreover, this property holds for many example FBAS presented in this paper such as Examples~\ref{ex:trivial}, \ref{ex:two}, \ref{ex:FBAS_fig7} and~\ref{ex:hierarchy}. More interestingly, it is easy to show that also the greatest SCC of the Stellar network (see Example~\ref{ex:stellar-core}) has the $B$-subslice property for every set $B$ of ill-behaved nodes.
    
    \begin{thm}\label{thm:intactness-comp-equal}
        Let $(V,S)$ have the $B$-subslice property and let $v\in V$. Then the following statements are equivalent:
        \begin{enumerate}[(1)]
            \item $v$ is $B$-intact.
            \item $v$ is contained in a $B$-intact set.
        \end{enumerate}
    \end{thm}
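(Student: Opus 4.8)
The plan is to prove the two implications separately. The implication $(1)\Rightarrow(2)$ is already established in Lemma~\ref{lm:intactness-comp-inequal2}, which does not even use the $B$-subslice property, so nothing new is needed there; the entire content lies in $(2)\Rightarrow(1)$. For that direction, suppose $v$ lies in a $B$-intact set $U$. Since $v\in U$, the set $U$ is nonempty, hence a $B$-quorum; and because $U\subseteq V\setminus B$ we have $U\setminus B=U$, so the $B$-quorum condition for $U$ is literally the quorum condition of Definition~\ref{def:FBAS}, i.e.\ $U$ is a quorum of $(V,S)$. I would then set $D:=V\setminus U$. Immediately $B\subseteq D$ (as $U\cap B=\emptyset$), $v\notin D$ (as $v\in U$), and $V\setminus D=U$ is a quorum of $(V,S)$. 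Thus, to conclude that $D$ is a DSet, and therefore that $v$ is $B$-intact, it only remains to show that $(V,S)^D$ has quorum intersection.

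The key step, and the only place where the $B$-subslice property is used, is a \emph{lifting claim} that is a converse to Statement~(A) in the proof of Lemma~\ref{lm:intactness-comp-inequal2}: if $(V,S)$ has the $B$-subslice property and $B\subseteq D$, then for every quorum $R$ of $(V,S)^D$ there is a $B$-quorum $Q$ of $(V,S)$ with $Q\setminus D=R$. To prove it, for each $w\in R$ pick a slice $s_w\in S(w)$ with $s_w\setminus D\subseteq R$ (such $s_w$ exists because $R$ is a quorum of $(V,S)^D$ and $S^D(w)=\{s\setminus D\mid s\in S(w)\}$), and put $Q:=R\cup\bigcup_{w\in R}s_w$. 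Then $Q\setminus D=R$, since each $s_w\setminus D\subseteq R$; and $Q$ is a $B$-quorum, because every $w\in R$ has the slice $s_w\subseteq Q$, while every $u\in Q\setminus(R\cup B)$ lies in some $s_w$ with $w\in R\subseteq V\setminus B$, so the $B$-subslice property hands us a slice $s_u\in S(u)$ with $s_u\subseteq s_w\subseteq Q$.

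Granting the claim, I would argue by contradiction: if $(V,S)^D$ fails quorum intersection, there are disjoint quorums $R_1,R_2$ of $(V,S)^D$, both nonempty subsets of $V\setminus D=U$. Lifting them yields $B$-quorums $Q_1\supseteq R_1$ and $Q_2\supseteq R_2$ with $Q_i\setminus D=R_i$, hence $Q_i\cap U=R_i$. Then $Q_1$ and $Q_2$ both intersect $U$, yet $Q_1\cap Q_2\cap U=R_1\cap R_2=\emptyset$, contradicting the assumption that $U$ is a $B$-intact set. Therefore $(V,S)^D$ has quorum intersection, $D=V\setminus U$ is a DSet with $B\subseteq D\subseteq V\setminus\{v\}$, and $v$ is $B$-intact. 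I expect the main obstacle to be getting the lifting claim exactly right, in particular verifying that the constructed $Q$ is genuinely a $B$-quorum (this is precisely where the $B$-subslice property is indispensable, as the earlier example on $V=\{a,b,c,d\}$ shows the equivalence fails without it) and that $Q\setminus D=R$ on the nose rather than merely $Q\setminus D\supseteq R$.
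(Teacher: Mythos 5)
Your proposal is correct and follows essentially the same route as the paper: reduce to showing $D=V\setminus U$ is a DSet, and lift quorums of $(V,S)^D$ to $B$-quorums of $(V,S)$ via the union of slices, invoking the $B$-subslice property exactly where the paper does. The only differences are cosmetic — you phrase the quorum-intersection step as a contradiction rather than directly, and your $Q:=R\cup\bigcup_{w\in R}s_w$ coincides with the paper's $\bigcup_{w\in R}s_w$ since each slice contains its own node.
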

    \begin{proof}
        The direction (1) $\Rightarrow$ (2) follows from Lemma~\ref{lm:intactness-comp-inequal2}.
        
        Let us now prove the direction (2) $\Rightarrow$ (1) and suppose that $v$ is contained in a $B$-intact set $U$. It suffices to show that $D=V\setminus U$ is a DSet because $v\in U$ and $U\cap B=\emptyset$. By the last property, if $U$ is a $B$-quorum, then it is also a quorum. Hence, the only non-trivial property that we need to show is that $(V,S)^D$ has quorum intersection. Let $Q, Q'$ be quorums in $(V,S)^D$. We will show that there are $B$-quorums $P$, $P'$ with $P\setminus D= Q$ and $P'\setminus D=Q'$. We only prove that there is such a set $P$; the proof for the existence of $P'$ follows by symmetry.
        
        For every $v\in Q$ there is a slice $s_v\in S(v)$ such that $s_v\setminus D\subseteq Q$. Let $P=\bigcup_{v\in Q}s_v$. We will show that (i)~$P$ is a $B$-quorum and that (ii)~$P\setminus D=Q$.
        
        (i)~Let $u\in P\setminus B$. Then there is some $v\in Q$ with $u\in s_v$. Since $v\in Q\subseteq V\setminus D$, we have $v \not\in B$ and $u\not\in B$. Then there is some $s_u\in S(u)$ with $s_u\subseteq s_v\subseteq P$ by the $B$-subslice property.
        
        (ii)~$P\setminus D = \bigcup_{v\in Q}s_v\setminus D=Q$ because $v\in s_v\setminus D\subseteq Q$ for every $v\in Q$.
        
        This concludes the proof for the existence of $P$ and $P'$. Since $Q$ and $Q'$ are contained in $U$ and nonempty, the sets $P$ and $P'$ intersect $U$. Thus, also $P\cap P'$ intersects $U$ because $U$ is a $B$-intact set. Thus, $\emptyset\not=(P\cap P')\cap U=(P\cap P')\setminus D=(P\setminus D)\cap (P'\setminus D)=Q\cap Q'$. We conclude that $(V,S)^D$ has quorum intersection.
    \end{proof}

\section{Algorithmic treatment of intactness}\label{sec:algo-intact}
    Let us now study the problem to algorithmically determine the set of $B$-intact nodes given a set $B$ of ill-behaved nodes. We point out that in a real-life situation one does usually not know the set $B$ and would not be able to apply such an algorithm. However, it is useful for analyzing a given FBAS and to simulate certain failure and attack scenarios.
    
    The techniques we use in this section are based on methods and algorithms developed in Section~\ref{sec:alg-quorums}. Throughout this section we will assume that we only consider FBAS $(V,S)$ that have quorum intersection.

\subsection{Strongly connected components}\label{sec:algo-intact-scc}
    In this section we revisit the trust graph of the FBAS and show how it affects the intactness of nodes in an FBAS. We will see that analyzing the structure of the SCCs of the trust graph allows us to prune major parts of the FBAS when determining intactness of nodes.
    
    Let us first present two rather technical properties about trust clusters and DSets. The first one ensures that every DSet still behaves like a DSet within a trust cluster. The second one states that every DSet within a trust cluster can be extended to a DSet for the whole FBAS.
    
    \begin{lem}\label{lm:SCC-intact-down}
        Let $Z$ be a trust cluster of $(V,S)$ and let $D$ be a DSet of $(V,S)$. Then $D\cap Z$ is a DSet of $(Z, S|_Z)$.
    \end{lem}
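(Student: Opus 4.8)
The plan is to verify the two defining conditions of a DSet for $D \cap Z$ inside the FBAS $(Z, S|_Z)$: first that $(Z, S|_Z)^{D\cap Z}$ has quorum intersection, and second that either $Z \setminus (D\cap Z)$ is a quorum of $(Z, S|_Z)$ or $D \cap Z = Z$. The main tool is the observation that deleting $D\cap Z$ from $(Z, S|_Z)$ and restricting $(V,S)^D$ to the trust cluster $Z \setminus D$ should give the same FBAS; more precisely, I would first show that $Z \setminus D$ is a trust cluster of $(V,S)^D$ (a node $u \in Z\setminus D$ that reaches $w$ in the trust graph of $(V,S)^D$ also reaches $w$ in the trust graph of $(V,S)$ since deleting nodes and shrinking slices only removes edges, so $w \in Z$; and $w \notin D$ because the remaining FBAS has node set $V\setminus D$). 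Hence by Corollary~\ref{cor:trust-cluster-deleted-fbas} applied to $(V,S)^D$ we get $((V,S)^D)^{(V\setminus D)\setminus(Z\setminus D)} = (Z\setminus D, S^D|_{Z\setminus D})$. Using Corollary~\ref{cor:delete} to rewrite $((V,S)^D)^{(V\setminus D)\setminus(Z\setminus D)}$ as $(V,S)^{D \cup ((V\setminus D)\setminus(Z\setminus D))} = (V,S)^{D \cup (V\setminus Z)}$, and separately computing $(Z, S|_Z)^{D\cap Z}$, I would check that both equal $(V,S)^{(V\setminus Z)\cup D}$, so that $(Z, S|_Z)^{D\cap Z} = (Z\setminus D, S^D|_{Z\setminus D})$ as FBAS.

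With that identification in hand, quorum intersection of $(Z, S|_Z)^{D\cap Z}$ follows from quorum intersection of $(V,S)^D$: by Lemma~\ref{lm:theo-1-maz16}, if $Q_1, Q_2$ are two quorums of $(Z\setminus D, S^D|_{Z\setminus D})$, then since $Z\setminus D$ is a trust cluster of $(V,S)^D$, Lemma~\ref{lm:trust-cluster}(2) (applied to $(V,S)^D$) makes $Q_1, Q_2$ quorums of $(V,S)^D$, which intersect because $D$ is a DSet. For the second condition: if $D = V$ then $Z \subseteq V = D$ so $D\cap Z = Z$ and we are done; otherwise $V\setminus D$ is a quorum of $(V,S)$, and I would argue $Z \setminus (D\cap Z) = Z \setminus D = (V\setminus D) \cap Z$ is a quorum of $(Z, S|_Z)$. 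This is exactly Lemma~\ref{lm:SCCs}(2)-style reasoning restricted to a trust cluster: since $Z$ is closed under reachability, for any $v \in (V\setminus D)\cap Z$ the slice $s \in S(v)$ witnessing membership in the quorum $V\setminus D$ satisfies $s \subseteq V\setminus D$ and also $s \subseteq Z$ (as $v \in Z$ and $Z$ is reachability-closed), hence $s \subseteq (V\setminus D)\cap Z$; thus $(V\setminus D)\cap Z$ is a quorum of $(V,S)$ contained in $Z$, and by Lemma~\ref{lm:trust-cluster}(2) read in reverse (equality $S|_Z(v) = S(v)$ for $v\in Z$, as in the proof of Lemma~\ref{lm:trust-cluster}) it is a quorum of $(Z, S|_Z)$.

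I expect the main obstacle to be the bookkeeping in the first paragraph: making the identification $(Z, S|_Z)^{D\cap Z} = (V,S)^{(V\setminus Z)\cup D}$ fully rigorous requires being careful that the order of deletion does not matter (Corollary~\ref{cor:delete}) and that restricting slices to $Z$ and then deleting $D\cap Z$ produces the same slice sets as deleting $(V\setminus Z)\cup D$ from the start — this is a routine but fiddly set-theoretic check on each $S(v)$ for $v \in Z\setminus D$, namely $(s \cap Z)\setminus(D\cap Z) = s \setminus ((V\setminus Z)\cup D)$. One should also double-check the degenerate case $Z\setminus D = \emptyset$, where $(Z,S|_Z)^{D\cap Z}$ is the empty FBAS, which trivially has quorum intersection, and $D\cap Z = Z$ makes the second condition hold vacuously. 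Everything else is a direct application of the lemmas and corollaries already established.
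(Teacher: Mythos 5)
Your proof is correct and follows essentially the same route as the paper's: the same reduction to the two DSet conditions, the same identification of $(Z,S|_Z)^{D\cap Z}$ with $((V,S)^{D})^{(V\setminus D)\setminus(Z\setminus D)}$ via Corollary~\ref{cor:delete}, and the same use of the fact that $Z\setminus D$ is a trust cluster of $(V,S)^D$ so that its quorums lift to $(V,S)^D$. The only cosmetic differences are that for the second condition the paper invokes Lemma~\ref{lm:theo-1-maz16} where you argue directly that the witnessing slices lie inside $Z$ (and your citation of that lemma in the quorum-intersection step should really be Lemma~\ref{lm:trust-cluster}(2)); both variants are fine.
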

    \begin{proof}
        By Corollary~\ref{cor:trust-cluster-deleted-fbas} $(Z, S|_Z)=(V, S)^{V\setminus Z}$. Thus, we need to show that
        \begin{enumerate}[(1)]
            \item $((V, S)^{V\setminus Z})^{D\cap Z}$ has quorum intersection.
            \item $Z\setminus (D\cap Z)$ is a quorum in $(V, S)^{V\setminus Z}$ or $D\cap Z=Z$. 
        \end{enumerate}
        (1)~The equality $(V\setminus Z)\cup(D\cap Z)=D\cup((V\setminus D)\setminus (Z\setminus D))$ and Corollary~\ref{cor:delete} imply $((V, S)^{V\setminus Z})^{D\cap Z}=((V, S)^D)^{(V\setminus D)\setminus (Z\setminus D)}$. Let us refer to this FBAS as $F'$. It is straightforward to show that $Z\setminus D$ is a trust cluster of $(V, S)^D$. Then Corollary~\ref{cor:trust-cluster-deleted-fbas} implies that every quorum of $F'$ is also a quorum of $(V,S)^D$. Thus, $F'$ has quorum intersection because $(V,S)^D$ does have quorum intersection.
        
        (2)~We will show the equivalent statement that $Z\setminus D$ is a quorum in $(V, S)^{V\setminus Z}$ or $Z\subseteq D$. Since $D$ is a DSet we have either that $V\setminus D$ is a quorum in $(V,S)$ or that $D=V$. We are done in the latter case as it implies $Z\subseteq D$. Consider the former case and suppose that $Z\not\subseteq D$. We will show that $Z\setminus D$ is a quorum in $(V, S)^{V\setminus Z}$. From $Z\not\subseteq D$ we obtain $(V\setminus D)\setminus (V\setminus Z)\not=\emptyset$. Then Lemma~\ref{lm:theo-1-maz16} yields that $(V\setminus D)\setminus (V\setminus Z)=Z\setminus D$ is a quorum in $(V,S)^{V\setminus Z}$.
    \end{proof}
    
    \begin{lem}\label{lm:SCC-intact-up}
        Let $Z$ be a trust cluster of $(V,S)$ and let $D$ be a DSet of $(Z, S|_Z)$. Then $D\cup (V\setminus Z)$ is a DSet of $(V, S)$.
    \end{lem}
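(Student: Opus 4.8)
The plan is to write $D' := D \cup (V\setminus Z)$ and verify directly the two defining conditions of Definition~\ref{def:DSet} for $D'$ in $(V,S)$: that $(V,S)^{D'}$ has quorum intersection, and that either $V\setminus D'$ is a quorum of $(V,S)$ or $D'=V$.

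The one substantive observation is that $(V,S)^{D'}$ is nothing but $(Z,S|_Z)^D$. Since $D$ is a DSet of $(Z,S|_Z)$ we have $D\subseteq Z$, so $V\setminus Z$ and $D$ are disjoint and Corollary~\ref{cor:delete} (applied with the sets $V\setminus Z$ and $D$) gives $(V,S)^{D'}=\bigl((V,S)^{V\setminus Z}\bigr)^{D}$. By Corollary~\ref{cor:trust-cluster-deleted-fbas}, $(V,S)^{V\setminus Z}=(Z,S|_Z)$, hence $(V,S)^{D'}=(Z,S|_Z)^{D}$. As $D$ is a DSet of $(Z,S|_Z)$, the FBAS $(Z,S|_Z)^{D}$ has quorum intersection by definition, and therefore so does $(V,S)^{D'}$. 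This settles the first condition.

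For the second condition I would note that $V\setminus D'=(V\setminus D)\cap Z=Z\setminus D$, and split on whether $D=Z$. If $D=Z$ then $D'=V$ and there is nothing left to prove. If $D\subsetneq Z$, then the DSet property of $D$ in $(Z,S|_Z)$ forces $Z\setminus D$ to be a quorum of $(Z,S|_Z)$; by Corollary~\ref{cor:trust-cluster-deleted-fbas} (equivalently Lemma~\ref{lm:trust-cluster}(2)), this makes $Z\setminus D=V\setminus D'$ a quorum of $(V,S)$ as well. Both bullet points of Definition~\ref{def:DSet} now hold, so $D'=D\cup(V\setminus Z)$ is a DSet of $(V,S)$.

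The proof is essentially bookkeeping built on the corollaries already established; I do not expect any real obstacle. The only points that warrant explicit care are verifying the disjointness hypothesis of Corollary~\ref{cor:delete} (which is exactly where $D\subseteq Z$ is used) and isolating the degenerate case $D=Z$ so that the second DSet condition is read through its ``$D'=V$'' alternative rather than through the quorum alternative.
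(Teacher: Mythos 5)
Your proof is correct and follows essentially the same route as the paper's: both use Corollary~\ref{cor:delete} together with Corollary~\ref{cor:trust-cluster-deleted-fbas} to identify $(V,S)^{D\cup(V\setminus Z)}$ with $(Z,S|_Z)^{D}$, and both handle the second DSet condition by splitting on $D=Z$ versus $Z\setminus D$ being a quorum of $(Z,S|_Z)$ that lifts to $(V,S)$. No gaps.
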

    \begin{proof}
        Our assumption is equivalent to the statement that $D$ is a DSet of $(V, S)^{V\setminus Z}$ due to Corollary~\ref{cor:trust-cluster-deleted-fbas}. We need to show that
        \begin{enumerate}[(1)]
            \item $(V, S)^{D\cup (V\setminus Z)}$ has quorum intersection.
            \item $V\setminus (D\cup (V\setminus Z))$ is a quorum in $(V, S)$ or $D\cup (V\setminus Z)=V$. 
        \end{enumerate}
        Statement~(1)~follows from Corollary~\ref{cor:delete} and the fact that $((V,S)^{V\setminus Z})^D$ has quorum intersection, which is a consequence of our assumption. Next let us prove Statement~(2). We have that $Z\setminus D$ is a quorum in $(V, S)^{V\setminus Z}$ or $D=Z$. In the former case Corollary~\ref{cor:trust-cluster-deleted-fbas} yields that $V\setminus (D\cup (V\setminus Z))=Z\setminus D$ is a quorum in $(V,S)$. The latter case clearly implies $D\cup (V\setminus Z)=V$.
    \end{proof}
    We can now combine the previous two lemmas to the following result.
    \begin{thm}\label{theo:intact-SCC}
        Let $Z$ be a trust cluster, $v\in Z$ and $B\subseteq V$.
        The node $v$ is $B$-intact in $(V,S)$ if and only if $v$ is $(B\cap Z)$-intact in $(Z,S|_Z)$.
    \end{thm}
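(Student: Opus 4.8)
The plan is to prove the two implications separately, in each case by writing down the witnessing DSet explicitly and invoking one of the two preceding lemmas; no idea beyond Lemmas~\ref{lm:SCC-intact-down} and~\ref{lm:SCC-intact-up} should be needed. Recall first that $(Z,S|_Z)$ is a genuine FBAS by Lemma~\ref{lm:trust-cluster}(1), so the assertion ``$v$ is $(B\cap Z)$-intact in $(Z,S|_Z)$'' is meaningful.

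For the forward direction I would start from a DSet $D$ of $(V,S)$ with $B\subseteq D\subseteq V\setminus\{v\}$, as provided by Definition~\ref{def:intact}, and set $D' = D\cap Z$. Lemma~\ref{lm:SCC-intact-down} immediately gives that $D'$ is a DSet of $(Z,S|_Z)$. The remaining work is pure bookkeeping: $B\subseteq D$ yields $B\cap Z\subseteq D\cap Z = D'$; $v\notin D$ yields $v\notin D'$; and $D'\subseteq Z$ by construction. Hence $B\cap Z\subseteq D'\subseteq Z\setminus\{v\}$, so $v$ is $(B\cap Z)$-intact in $(Z,S|_Z)$.

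For the converse I would start from a DSet $D'$ of $(Z,S|_Z)$ with $B\cap Z\subseteq D'\subseteq Z\setminus\{v\}$ and set $D = D'\cup(V\setminus Z)$. Lemma~\ref{lm:SCC-intact-up} gives that $D$ is a DSet of $(V,S)$. Again the rest is bookkeeping: writing $B = (B\cap Z)\cup(B\setminus Z)$ we get $B\subseteq D'\cup(V\setminus Z) = D$, while $v\in Z$ gives both $v\notin V\setminus Z$ and $v\notin D'$, hence $v\notin D$. Thus $B\subseteq D\subseteq V\setminus\{v\}$ and $v$ is $B$-intact in $(V,S)$.

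I do not expect a genuine obstacle here, since the two lemmas do all the structural work; the only points requiring care are the use of $v\in Z$ in the converse (so that $v$ survives the passage through $V\setminus Z$) and keeping the distinction between $B$ and $B\cap Z$ straight. One could additionally remark that the maps $D\mapsto D\cap Z$ and $D'\mapsto D'\cup(V\setminus Z)$ are mutually inverse on those DSets that contain $V\setminus Z$, which makes the correspondence look cleaner, but this observation is not strictly necessary, as each direction uses only one of the two maps.
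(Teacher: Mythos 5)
Your proposal is correct and follows exactly the paper's own proof: each direction produces the same witness DSet ($D\cap Z$ resp.\ $D'\cup(V\setminus Z)$) and invokes Lemma~\ref{lm:SCC-intact-down} resp.\ Lemma~\ref{lm:SCC-intact-up}, with the same bookkeeping on inclusions. No differences worth noting.
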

    \begin{proof}
        ``$\Rightarrow$'': Let $v$ be $B$-intact in $(V,S)$. Then there is a DSet $D$ of $(V,S)$ with $B\subseteq D\subseteq V\setminus\{v\}$. Lemma~\ref{lm:SCC-intact-down} yields that $D\cap Z$ is a DSet of $(Z, S|_Z)$. Clearly, $B\cap Z\subseteq D\cap Z\subseteq Z\setminus\{v\}$, which proves that $v$ is $(B\cap Z)$-intact in $(Z,S|_Z)$.
        
        ``$\Leftarrow$'': Let $v$ be $(B\cap Z)$-intact in $(Z,S|_Z)$. Then there is a DSet $D$ of $(Z, S|_Z)$ with $B\cap Z\subseteq D\subseteq Z\setminus \{v\}$. Lemma~\ref{lm:SCC-intact-up} implies that $D\cup(V\setminus Z)$ is a DSet of $(V,S)$. Note that $B\subseteq(B\cap Z)\cup(V\setminus Z)\subseteq D\cup(V\setminus Z)\subseteq (Z\setminus \{v\})\cup(V\setminus Z)=V\setminus\{v\}$ and therefore, $v$ is $B$-intact in $(V,S)$.
    \end{proof}
    The following corollary uses Lemma~\ref{lem:trust-cluster}.
    \begin{cor}\label{lm:intact-SCC}
        Let $C$ be the greatest SCC and $B\subseteq V$. Let $I$ be the set of $B$-intact nodes of $(V,S)$. Then $I\cap C$ is the set of $(B\cap C)$-intact nodes of $(C,S|_C)$.
    \end{cor}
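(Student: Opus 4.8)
The plan is to obtain this as an immediate specialization of Theorem~\ref{theo:intact-SCC}. First I would observe that the greatest SCC $C$ qualifies as a trust cluster: by Lemma~\ref{lm:maximal-SCC} the greatest SCC is maximal, and by Lemma~\ref{lem:trust-cluster} every maximal SCC is a trust cluster. Hence Theorem~\ref{theo:intact-SCC} is applicable with the trust cluster $Z$ taken to be $C$.

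Next I would turn the equivalence from Theorem~\ref{theo:intact-SCC} into the asserted set equality. Write $J$ for the set of $(B\cap C)$-intact nodes of $(C,S|_C)$. By Definition~\ref{def:intact}, a $(B\cap C)$-intact node of $(C,S|_C)$ is in particular a node of $C$, so $J\subseteq C$; thus it suffices to verify, for each $v\in C$, that $v\in I$ if and only if $v\in J$. Unravelling the definitions, this amounts to: $v$ is $B$-intact in $(V,S)$ if and only if $v$ is $(B\cap C)$-intact in $(C,S|_C)$, which is precisely the content of Theorem~\ref{theo:intact-SCC} for $Z=C$. Therefore $I\cap C=J$, as claimed.

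I do not anticipate any real obstacle here; the corollary is a direct consequence of Theorem~\ref{theo:intact-SCC}, and the only points needing a word are the harmless facts that $C$ is a trust cluster (via Lemmas~\ref{lm:maximal-SCC} and~\ref{lem:trust-cluster}) and that $(B\cap C)$-intactness in $(C,S|_C)$ is automatically a property of nodes lying in $C$, so that restricting $I$ to $C$ matches $J$ exactly.
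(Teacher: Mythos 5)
Your proposal is correct and matches the paper's intended argument exactly: the paper derives the corollary from Theorem~\ref{theo:intact-SCC} after noting (via Lemma~\ref{lem:trust-cluster}, together with Lemma~\ref{lm:maximal-SCC}) that the greatest SCC is a trust cluster. The extra remark that $(B\cap C)$-intact nodes of $(C,S|_C)$ necessarily lie in $C$ is a harmless and correct bit of bookkeeping.
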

    
    There are two ways to interpret the previous theorem and corollary. First, whether nodes inside a trust cluster are intact or befouled merely depends on the ill-behaved nodes inside the trust cluster. The ill-behaved nodes outside the trust cluster have no influence on this. Second, if we are only interested to know what nodes inside a specific trust cluster $Z$ are intact, we can apply the algorithm that we present below to the smaller FBAS $(Z, S|_Z)$ instead. This is particularly meaningful when one is only interested in determining what nodes in the greatest SCC are intact. It also shows that the nodes in the greatest SCC have the highest importance in an FBAS -- they can influence the intactness of all other nodes but their own intactness cannot be influenced by nodes outside the SCC.

\subsection{The intactness algorithm}
    The algorithm to compute the set of $B$-intact nodes is shown in Figure~\ref{fig:alg-intactness}. Before we are going to prove its correctness, we want to point out that in a naive implementation of the expression $\textsc{quorumIntersection}(F^{V\setminus Q})$ on line~4 one would need to construct $F^{V\setminus Q}$ first. This can be challenging as usually $F^{V\setminus Q}$ is not a simple FBAS even if $F$ is a simple FBAS; see the remark after Lemma~\ref{lem:simpledelete}. Then the construction of $F^{V\setminus Q}$ would lead to an exponential blowup; cf. our concept of size of general and simple FBAS in Definition~\ref{def:fbas-size}.
    
    There is a better approach that avoids that problem. We can redefine the function $\textsc{quorumIntersection}$ to accept an additional parameter $D$, so that the function call $\textsc{quorumIntersection}(F, D)$ determines whether $F^D$ has quorum intersection. This can be achieved by redefining the functions {\sc traverseMinQuorums}, {\sc greatestQuorum} and {\sc containSlice}, too, and equip them with an additional parameter $D$ in a similar fashion. Then the parameter $D$ is forwarded from {\sc quorumIntersection} to {\sc greatestQuorum} to {\sc containsSlice} and every reference to $V$ in these algorithms is replaced by $V\setminus D$. Additionally we need to change the logic of the function {\sc containsSlice}: replace line~2 for the general FBAS version with
    \begin{center}
        {\bf if} $s\setminus D\subseteq U$ {\bf return} $\mathit{true}$
    \end{center}
    and replace line~1 for the simple FBAS version with
    \begin{center}
        {\bf return} $|(U\cup D)\cap q(v)|\geq n(v)$
    \end{center}
    While the change to the general FBAS version is obvious, the change to the simple FBAS version requires some argument.
    
    \begin{lem}
        Let $F=(V, q, n)$ be a simple FBAS, let $D, U\subseteq V$ and $v\in U\setminus D$. Then $v$ has a quorum slice in $F^D$ contained in $U$ if and only if $|(U\cup D)\cap q(v)|\geq n(v)$.
    \end{lem}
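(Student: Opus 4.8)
The plan is to unfold both sides of the claimed equivalence down to a purely combinatorial statement about the set $(U\cup D)\cap q(v)$, and then read off the equivalence directly. The key observation is that a quorum slice of $v$ in $F^D$ is, by Definition~\ref{def:delete}, exactly a set of the form $s\setminus D$ where $s$ is a quorum slice of $v$ in $F$; and by Definition~\ref{def:simple} the latter are precisely the sets $s$ with $v\in s\subseteq q(v)$ and $|s|=n(v)$. So the left-hand condition says: there exists $s$ with $v\in s\subseteq q(v)$, $|s|=n(v)$, and $s\setminus D\subseteq U$.

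The first step is to rewrite the containment $s\setminus D\subseteq U$ as $s\subseteq U\cup D$. Combining this with $s\subseteq q(v)$ gives $s\subseteq (U\cup D)\cap q(v)$, so the left-hand condition becomes: there is a set $s$ with $v\in s$, $|s|=n(v)$, and $s\subseteq (U\cup D)\cap q(v)$. Since $v\in U\setminus D\subseteq U\cup D$ and $v\in q(v)$ by the definition of a simple FBAS, the element $v$ itself lies in $(U\cup D)\cap q(v)$, so such an $s$ exists if and only if $(U\cup D)\cap q(v)$ has at least $n(v)$ elements, i.e. $|(U\cup D)\cap q(v)|\geq n(v)$. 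This is exactly the right-hand condition.

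For the direction ``$\Leftarrow$'' one makes this explicit: assuming $|(U\cup D)\cap q(v)|\geq n(v)$, choose any $n(v)$-element subset $s$ of $(U\cup D)\cap q(v)$ with $v\in s$; then $s$ is a quorum slice of $v$ in $F$, and $s\setminus D\subseteq ((U\cup D)\cap q(v))\setminus D\subseteq U$, so $s\setminus D$ is a quorum slice of $v$ in $F^D$ contained in $U$. For ``$\Rightarrow$'', one takes a quorum slice $s\setminus D\subseteq U$ of $v$ in $F^D$ with $s$ a quorum slice of $v$ in $F$, deduces $s\subseteq (U\cup D)\cap q(v)$ as above, and uses $|s|=n(v)$ to conclude $|(U\cup D)\cap q(v)|\geq n(v)$.

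There is no real obstacle here; the only point to be slightly careful about is that the image slice $s\setminus D$ may have strictly fewer than $n(v)$ elements (when $s$ meets $D$), so one should phrase the argument in terms of $s$ rather than $s\setminus D$ having size $n(v)$, and verify the containment $s\setminus D\subseteq U$ by passing through $s\subseteq U\cup D$.
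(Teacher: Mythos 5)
Your proof is correct and follows essentially the same route as the paper's: unfold the slices of $F^D$ as sets $s\setminus D$ with $v\in s\subseteq q(v)$ and $|s|=n(v)$, rewrite $s\setminus D\subseteq U$ as $s\subseteq U\cup D$, and use $v\in(U\cup D)\cap q(v)$ to build the required $n(v)$-element subset in the converse direction. No gaps.
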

    \begin{proof}
        First we expand the simple FBAS $(V, q, n)$ into its general form $(V, S)$, i.e., $S(v)=\{W\subseteq q(v)\mid v\in W, |W|=n(v)\}$. Next we construct $F^D=(V\setminus D, S^D)$ with
        \begin{align*}
            S^D(v)&=\{s\setminus D\mid s\in S(v)\}
            =\{W\setminus D\mid W\subseteq q(v), v\in W, |W|=n(v)\}.
        \end{align*}
        Then the following statements are equivalent: (i)~$v$ has a quorum slice in $F^D$ contained in $U$, and (ii)~there is a set $W\subseteq q(v)$ with $v\in W$, $|W|=n(v)$ and $W\setminus D\subseteq U$. It remains to prove that the latter statement is equivalent to  (iii)~$|(U\cup D)\cap q(v)|\geq n(v)$. For the direction (ii)~$\Rightarrow$~(iii) we get $W\subseteq U\cup D$ and therefore $W\subseteq (U\cup D)\cap q(v)$. Hence, $n(v)=|W|\leq |(U\cup D)\cap q(v)|$. For the direction (iii)~$\Rightarrow$~(ii) we have $v\in (U\cup D)\cap q(v)$, and therefore there is a subset $W\subseteq (U\cup D)\cap q(v)$ with $|W|=n(v)$ and $v\in W$. This set $W$ clearly satisfies~(ii).
    \end{proof}
    
    \begin{figure}[tb]
        \begin{center}
            \begin{tabular}{rl}
                \hline \multicolumn{2}{l}{{\sc intactNodes}}\\\hline
                \multicolumn{2}{l}{\emph{Input:} FBAS $F$ having quorum intersection with set of nodes $V$; $B\subseteq V$}\\
                \multicolumn{2}{l}{\emph{Output:} the set of all $B$-intact nodes}\\
                1&$U_1\rightarrow V\setminus B$, $i\leftarrow 1$\\
                2&{\bf repeat}\\
                3&\qquad$Q\leftarrow \textsc{greatestQuorum}(F, U_i)$\\
                4&\qquad$\mathrm{result} \leftarrow \textsc{quorumIntersection}(F^{V\setminus Q})$\\
                5&\qquad{\bf if} $\mathrm{result}=\mathit{true}$\\
                6&\qquad\qquad{\bf return} $Q$\\
                7&\qquad{\bf else}\\
                8&\qquad\qquad$(Q_1, Q_2)\leftarrow \mathrm{result}$\\
                9&\qquad\qquad$W_1\leftarrow \textsc{greatestQuorum}(F, Q\setminus Q_1)$\\
                10&\qquad\qquad$W_2\leftarrow \textsc{greatestQuorum}(F, Q\setminus Q_2)$\\
                11&\qquad\qquad{\bf if} $W_1=\emptyset$\\
                12&\qquad\qquad\qquad$U_{i+1}\leftarrow W_2$\\
                13&\qquad\qquad{\bf else if} $W_2=\emptyset$\\
                14&\qquad\qquad\qquad$U_{i+1}\leftarrow W_1$\\
                15&\qquad\qquad{\bf else}\\
                16&\qquad\qquad\qquad$U_{i+1}\leftarrow W_1\cap W_2$\\
                17&\qquad$i\leftarrow i+1$\\\hline
            \end{tabular}
        \end{center}
        \caption{Algorithm for determining $B$-intact nodes given a set $B$ of nodes. The functions {\sc greatestQuorum} and {\sc quorumIntersection} are defined in Section~\ref{sec:alg-quorums}.}        \label{fig:alg-intactness}
    \end{figure}
    
    Let us now prove the correctness of the algorithm {\sc intactNodes} in Figure~\ref{fig:alg-intactness}. First we state an auxiliary lemma.
    \begin{lem}\label{lm:alg-intactness-aux}
        Consider the $i$-th iteration of the loop of {\sc intactNodes} and the sets $Q$ and $U_{i+1}$ constructed in that iteration. Suppose that there is a DSet $D$ with $V\setminus D\subseteq U_i$. Then $V\setminus D\subseteq Q$ and if the iteration branches into line~8, also $V\setminus D\subseteq U_{i+1}$.
    \end{lem}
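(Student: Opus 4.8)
The plan is to peel off the trivial case $D=V$ first (then $V\setminus D=\emptyset$ and both assertions hold vacuously), and otherwise to exploit the two defining properties of a DSet together with Lemma~\ref{lm:theo-1-maz16} and the characterisation of \textsc{greatestQuorum} in Lemma~\ref{lm:alg-greatest-quorum}. So assume $D\neq V$, which means $V\setminus D$ is a nonempty quorum of $(V,S)$.

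First I would establish $V\setminus D\subseteq Q$: since $V\setminus D$ is a quorum contained in $U_i$ by hypothesis and $Q=\textsc{greatestQuorum}(F,U_i)$ is the greatest quorum contained in $U_i$ by Lemma~\ref{lm:alg-greatest-quorum}, the inclusion follows. Now suppose the iteration branches into line~8, so that $(Q_1,Q_2)$ are two disjoint quorums of $F^{V\setminus Q}=(V,S)^{V\setminus Q}$, whose node set is $Q$. The pivotal observation is that deleting $D\cap Q$ from $(V,S)^{V\setminus Q}$ yields precisely $(V,S)^D$, because $(V\setminus Q)\cup(D\cap Q)=D$ (here one uses $V\setminus D\subseteq Q$) and Corollary~\ref{cor:delete} applies. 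Consequently, by Lemma~\ref{lm:theo-1-maz16} applied inside $(V,S)^{V\setminus Q}$, whenever $Q_j\setminus D$ is nonempty it is a quorum of $(V,S)^D$; since $D$ is a DSet, $(V,S)^D$ has quorum intersection, while $Q_1\setminus D$ and $Q_2\setminus D$ are disjoint, so they cannot both be nonempty. Hence $Q_1\subseteq D$ or $Q_2\subseteq D$, and since lines~9--16 treat $Q_1$ and $Q_2$ symmetrically I may assume $Q_1\subseteq D$. Then $V\setminus D$ is disjoint from $Q_1$, so $V\setminus D\subseteq Q\setminus Q_1$, and being a quorum it lies inside $W_1=\textsc{greatestQuorum}(F,Q\setminus Q_1)$; in particular $W_1\neq\emptyset$.

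It then remains to control $W_2$. If additionally $Q_2\subseteq D$, the same reasoning gives $V\setminus D\subseteq W_2$ and $W_2\neq\emptyset$, so line~16 sets $U_{i+1}=W_1\cap W_2\supseteq V\setminus D$. The main obstacle is the remaining sub-case $Q_2\not\subseteq D$, where $V\setminus D$ need not be contained in $Q\setminus Q_2$; here I would show $W_2=\emptyset$ by contradiction. If $W_2\neq\emptyset$, then $W_2$ is a quorum of $(V,S)$ contained in $Q\setminus Q_2$, hence disjoint from $Q_2$. Now $Q_2\setminus D$ is a nonempty quorum of $(V,S)^D$, and by Lemma~\ref{lm:theo-1-maz16} (applied to $(V,S)$ and $D$) the set $W_2\setminus D$ would likewise be a quorum of $(V,S)^D$ if it were nonempty; since these two are disjoint, quorum intersection of $(V,S)^D$ forces $W_2\subseteq D$. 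But then $W_2$ and $V\setminus D$ are disjoint quorums of $(V,S)$, contradicting the standing assumption that $(V,S)$ has quorum intersection. Hence $W_2=\emptyset$, so line~14 sets $U_{i+1}=W_1\supseteq V\setminus D$, and this exhausts all cases.
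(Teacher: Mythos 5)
Your proof is correct and follows essentially the same route as the paper's: the same reduction of $(F^{V\setminus Q})^{D\cap Q}$ to $F^D$ via Corollary~\ref{cor:delete}, the same trichotomy on whether $Q_1,Q_2$ are contained in $D$, and the same use of Lemma~\ref{lm:theo-1-maz16} together with quorum intersection of $F$ and of $F^D$ to force $W_a=\emptyset$ in the non-contained case (you merely reorder the two contradiction steps there). No gaps.
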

    \begin{proof}
        This statement is trivial if $D=V$. So let us assume that $D\not=V$; then $V\setminus D$ is a quorum because $D$ is a DSet. Line~3 yields that $Q$ is a quorum with $V\setminus D\subseteq Q$.
        
        Now suppose that the iteration branches into line~8. Let $Q_1$, $Q_2$, $W_1$ and $W_2$ be the values of the respective variables in iteration $i$. Let $a\in\{1,2\}$. We will show that
        \begin{enumerate}
            \item if $Q_a\subseteq D$, then $V\setminus D\subseteq W_a$,
            \item if $Q_a\not\subseteq D$, then $Q_a\setminus D$ is a quorum in $F^D$ and $W_a=\emptyset$,
            \item $Q_1\subseteq D$ or $Q_2\subseteq D$.
        \end{enumerate}
        For the proof of the first claim we have that $V\setminus D$ is a quorum contained in $Q\setminus Q_a$, and hence $V\setminus D\subseteq W_a$.
        
        For the proof of the second claim observe that $Q_a\setminus D\not=\emptyset$. In particular, $Q_a\setminus (D\cap Q)\not=\emptyset$. 
        Together with Lemma~\ref{lm:theo-1-maz16} applied to $F^{V\setminus Q}$ and the fact that $Q_a$ is a quorum of $F^{V\setminus Q}$, we obtain that $Q_a\setminus (D\cap Q)$ is a quorum in $(F^{V\setminus Q})^{D\cap Q}$. Clearly, $Q_a\setminus (D\cap Q)=Q_a\setminus D$ because $Q_a\subseteq Q$ by construction in line~4. Furthermore, $(F^{V\setminus Q})^{D\cap Q}=F^D$ by Corollary~\ref{cor:delete} and because $(V\setminus Q)\cup(D\cap Q)=D$ due to $V\setminus D\subseteq Q$. We conclude that $Q_a\setminus D$ is a quorum in $F^D$. 
        
        For the second claim it remains to show that $W_a=\emptyset$. We assume that $W_a$ is nonempty and derive a contradiction. Then $W_a$ is a quorum by construction and it intersects $V\setminus D$ because we assumed that $F$ has quorum intersection; equivalently, $W_a\setminus D\not=\emptyset$. By Lemma~\ref{lm:theo-1-maz16} applied to $F$ we obtain that $W_a\setminus D$ is a quorum in $F^D$. The fact that $D$ is a DSet implies that $F^D$ has quorum intersection and, therefore, $Q_a\setminus D$ and $W_a\setminus D$ intersect. This is a contradiction to $W_a\subseteq Q\setminus Q_a$, which follows from the construction of $W_a$.
        
        For the proof of the third claim assume that $Q_1\not\subseteq D$ and that $Q_2\not\subseteq D$. We will derive a contradiction. By the second claim, $Q_1\setminus D$ and $Q_2\setminus D$ are quorums in $F^D$. They intersect because $D$ is a DSet and, thus, $F^D$ has quorum intersection. This 
        contradicts that $Q_1$ and $Q_2$ are disjoint by construction.
        
        Now we can employ our three claims to prove the lemma. By the third claim $Q_1\subseteq D$ or $Q_2\subseteq D$. Due to symmetry we will just consider the case that $Q_1\subseteq D$. Then $V\setminus D\subseteq W_1$ by the first claim and, hence, $W_1\not=\emptyset$ because we assumed $D\not=V$. Therefore the algorithm will not branch into line~12. If $Q_2\not\subseteq D$, then by the second claim the algorithm will branch into line~14 and we have $V\setminus D\subseteq W_1=U_{i+1}$. If $Q_2\subseteq D$, then again $\emptyset\not=V\setminus D\subseteq W_2$ and the algorithm will branch into line~16 and we obtain $V\setminus D\subseteq W_1\cap W_2=U_{i+1}$.
    \end{proof}
    
    \begin{thm}
        Given a set $B$ the function {\sc intactNodes} returns the set of all $B$-intact nodes.
    \end{thm}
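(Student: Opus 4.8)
The plan is to prove three things and combine them: that the loop always terminates with a return in line~6, that every node in the returned set $Q$ is $B$-intact, and conversely that every $B$-intact node lies in $Q$. The basic bookkeeping tool is the loop invariant $U_i\subseteq V\setminus B$, which holds for $U_1=V\setminus B$ and is preserved because $U_{i+1}$ is always a subset of $Q=\textsc{greatestQuorum}(F,U_i)\subseteq U_i$. For termination I would first note that whenever an iteration reaches line~8 we must have $Q\neq\emptyset$: otherwise $F^{V\setminus Q}=F^V$ is the empty FBAS, which trivially has quorum intersection, so $\mathrm{result}=\mathit{true}$ and the algorithm already returns in line~6. Hence $Q_1,Q_2$ are disjoint \emph{nonempty} quorums of $F^{V\setminus Q}$, an FBAS whose node set is $Q$; so $Q_a\subseteq Q$ and $W_a\subseteq Q\setminus Q_a\subsetneq Q$ for $a\in\{1,2\}$, whence $U_{i+1}\subsetneq Q\subseteq U_i$. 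Thus $|U_i|$ strictly decreases, so after at most $|V|$ iterations line~8 can no longer be reached, forcing $\mathrm{result}=\mathit{true}$ and a return in line~6.

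Next I would show the returned $Q$ contains only $B$-intact nodes. If $Q=\emptyset$ this is vacuous. Otherwise $Q$ is a quorum of $F$ with $Q\cap B=\emptyset$ by the invariant, so $D:=V\setminus Q$ satisfies $B\subseteq D$ and $V\setminus D=Q$ is a quorum; moreover $\mathrm{result}=\textsc{quorumIntersection}(F^D)=\mathit{true}$ means $F^D$ has quorum intersection. Hence $D$ is a DSet with $B\subseteq D\subseteq V\setminus\{v\}$ for every $v\in Q$, i.e.\ every such $v$ is $B$-intact.

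For the reverse inclusion let $v$ be $B$-intact, so there is a DSet $D$ with $B\subseteq D\subseteq V\setminus\{v\}$; equivalently $V\setminus D\subseteq V\setminus B=U_1$ and $v\in V\setminus D$. Lemma~\ref{lm:alg-intactness-aux} states that whenever $V\setminus D\subseteq U_i$, then $V\setminus D\subseteq Q$ in that iteration, and if the iteration branches into line~8 also $V\setminus D\subseteq U_{i+1}$. Inducting over the iterations actually performed yields $V\setminus D\subseteq U_i$ for each of them, so in the final iteration (the one returning in line~6) we get $v\in V\setminus D\subseteq Q$. Combining the two inclusions, $Q$ is exactly the set of $B$-intact nodes. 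I expect the genuine difficulty of the theorem to lie entirely in Lemma~\ref{lm:alg-intactness-aux}, which is already available and whose proof carries the structural heart of the argument — that a failure of quorum intersection in $F^{V\setminus Q}$ splits so that exactly one of the witnesses $Q_1,Q_2$ lies in $D$ and the ``survivor'' $V\setminus D$ is retained in the corresponding $W_a$; given that lemma, the only remaining care needed is with the loop invariant, the strict-decrease argument for termination, and the degenerate case $Q=\emptyset$ (which occurs precisely when no node is $B$-intact, consistently with the algorithm returning $\emptyset$).
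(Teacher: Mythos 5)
Your proof is correct and follows essentially the same route as the paper's: termination via the strict decrease of the $U_i$, the observation that the returned $Q$ yields a DSet $V\setminus Q\supseteq B$, and an induction on Lemma~\ref{lm:alg-intactness-aux} for the reverse inclusion. Your explicit handling of the $Q=\emptyset$ case before line~8 is a minor tidying of the same argument, not a different approach.
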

    \begin{proof}
        First we show that the function terminates by proving that the sequence $U_1, U_2, \ldots$ is strictly decreasing, i.e., $U_{i+1}\subsetneq U_i$ for every $i$ as long as the loop does not terminate in line~6. Clearly, (i)~$Q\subseteq U_i$ (line~3), (ii)~$Q_1\not=\emptyset\not=Q_2$ as they are quorums, and (iii) $W_1\subseteq Q\setminus Q_1$, $W_2\subseteq Q\setminus Q_2$. These statements together imply $W_1\subsetneq U_i$ and $W_2\subsetneq U_i$. Then $U_{i+1}\subsetneq U_i$ because $U_{i+1}\subseteq W_1$ or $U_{i+1}\subseteq W_2$.
        
        Next observe that when {\sc intactNodes} terminates and returns a set $Q$ in line~6, then $V\setminus Q$ is a DSet because $F^{V\setminus Q}$ has quorum intersetion (line~4) and because $Q$ is a quorum or empty (line~3). Moreover, $V\setminus Q$ contains $B$ because $U_1, U_2, \ldots$ is a decreasing sequence and $U_1\cap B=\emptyset$ by construction and therefore also $Q\cap B=\emptyset$.
        
        Let $D$ be a DSet that contains $B$. Then $V\setminus D\subseteq U_1$. From Lemma~\ref{lm:alg-intactness-aux} it follows through a simple inductive proof that the result $Q$ returned by the function {\sc intactNodes} satisfies $V\setminus D\subseteq Q$ or, equivalently, $V\setminus Q\subseteq D$. 
        
        We have shown above that $V\setminus Q$ is a DSet containing $B$. Hence, it is the smallest DSet that contains $B$. Thus, $Q$ is the set of all $B$-intact nodes.
    \end{proof}
    
    Let us analyze the time complexity of the function {\sc intactNodes}. In every iteration of the loop the call to {\sc quorumIntersection} dominates the time complexity of the other three calls to {\sc greatestQuorum}. In fact the former function has exponential time complexity, say, $c\cdot 2^n$ for some constant $c$, where $n$ is the number of nodes of the argument FBAS. The number $n$ for the call $\textsc{quorumIntersection}(F^{V\setminus Q})$ is $|Q|$, which is at most $|V|-i+1$ in the $i$-th iteration because $Q\subseteq U_i$ and the $U_i$ are a strictly decreasing sequence.
    
    Since there are at most $|V|+1$ iterations of the loop, we obtain that the time complexity of {\sc intactNodes} is roughly
    \begin{align*}
        \sum\nolimits_{n=0}^{|V|}c\cdot 2^n\leq c\cdot 2^{|V|+1}\;,
    \end{align*}
    twice as long as determining $\textsc{quorumIntersection}(V)$.

\section{Probabilistic intactness}\label{sec:intact-prob}
    Let $(V, S)$ be an FBAS with quorum intersection. In this section we will study the probability that a node $v\in V$ is intact. We will treat the set $B$ of ill-behaved nodes as unknown and describe it as a random variable $\mathbb B$. Let $p$ be its probability distribution, i.e., $p(B)=\mathrm P(\mathbb B=B)$. Observe that $p:\mathcal P(V)\rightarrow [0,1]$ with $\sum_{B\subseteq V}p(B) =1$. Then for every $v\in V$ we have
    \begin{align}
        \mathrm {P}(\text{$v$ is intact})&=\sum_{B\subseteq V}\,\mathrm P(\text{$v$ is intact}\mid \mathbb B = B)\cdot\mathrm P(\mathbb B = B)
        =\sum_{\substack{B\subseteq V\\ \text{$v$ is $B$-intact}}} \mathrm P(\mathbb B = B)\nonumber\\
        &=\sum_{\substack{B\subseteq V\\ \text{$\exists$ DSet $D$: $B\subseteq D\subseteq V\setminus\{v\}$}}} p(B)\;.\label{eq:def-prob-intactness}
    \end{align}
    It is particularly interesting to consider the probability of a node to be intact given that it is well-behaved. This is described by the following conditional probability:
    \begin{align}
        &\mathrm {P}(\text{$v$ is intact}\mid \text{$v$ is well-behaved})\nonumber\\
        &=\frac{\mathrm {P}(\text{$v$ is intact})}{\mathrm {P}(\text{$v$ is well-behaved})}\tag{because intact implies well-behaved}\nonumber\\
        &=\frac{\mathrm {P}(\text{$v$ is intact})}{\sum_{B\subseteq V}\mathrm P(\text{$v$ is well-behaved}\mid \mathbb B = B)\cdot \mathrm P(\mathbb B=B)}=\frac{\mathrm {P}(\text{$v$ is intact})}{\sum_{B\subseteq V\setminus\{v\}} p(B)}\;.\label{eq:def-conditional-prob-intactness}
    \end{align}
    Let us illustrate the probabilistic intactness for some examples.
    
    \begin{ex}\label{ex:prob_intact1_basis}
        Consider the symmetric simple FBAS $(V,k)$ with $V=\{a,b,c,d\}$ and $k=3$. Lemma~\ref{lem:symmDSet} yields that the DSets are $\emptyset$, $V$, and all one-element subsets of $V$. Thus,
        \begin{align}
            \mathrm P(\text{$a$ is intact}) = p(\emptyset) + p(\{b\}) + p(\{c\}) + p(\{d\})\;,\label{eq:prob_intact1_basis1}
        \end{align}
        and similarly for the nodes $b$, $c$ and $d$.
    \end{ex}
    
    \begin{ex}\label{ex:prob_intact2_basis}
        Consider the FBAS from Example~\ref{ex:FBAS_fig7}. As stated already in Example~\ref{ex:FBAS_fig7_Dsets}, the DSets are 
        $\emptyset$, $\{1,2,3\}$, $\{4,5,6\}$, $V\setminus\{7\}$ and $V$. Thus,
        \begin{align*}
            \mathrm {P}(\text{$v$ is intact})&=
            \begin{cases}
                \sum_{B\subseteq \{4,5,6\}}p(B),&\text{if $v\in\{1,2,3\}$},\\
                \sum_{B\subseteq \{1,2,3\}}p(B),&\text{if $v\in\{4,5,6\}$},\\
                \sum_{B\subseteq V\setminus\{7\}}p(B),&\text{if $v=7$}.
            \end{cases}
        \end{align*}
        In particular, $\mathrm {P}(\text{$7$ is intact}\mid \text{$7$ is well-behaved}) = 1$ by equation~\eqref{eq:def-conditional-prob-intactness}, i.e., node $7$ will not become befouled if it is well-behaved, independent from the actual probability distribution $p$. Note that this property holds for every FBAS that has a well-behaved node $v$ and a DSet that comprises all nodes except $v$. Another way to show that $\mathrm {P}(\text{$7$ is intact}\mid \text{$7$ is well-behaved})=1$ is by using the fact that $C=\{7\}$ is the greatest SCC (see Figure~\ref{fig:FBAS_fig7}): Clearly, $C$ is the set of $\emptyset$-intact nodes of $(C, S|_C)$ and then Corollary~\ref{lm:intact-SCC} implies that $7$ is intact whenever it is well-behaved.
    \end{ex}
    
    \begin{ex}\label{ex:prob_intact3_basis}
        Suppose that the FBAS is built by 4 organizations that each operate 3 nodes. We let $V=A\cup B\cup C\cup D$ with $A=\{a_1, a_2, a_3\}$ and likewise for the other organizations. We let $\mathcal O=\{A, B, C, D\}$ be the set of organizations. Similar to Example~\ref{ex:hierarchy}, we suppose that the quorum slices are built by first choosing 3 of the 4 organizations, and then choosing 2 nodes within each chosen organization. Formally, let
        \begin{align*}
            T&=\bigcup\nolimits_{\{O_1,O_2,O_3\}\in \mathcal P_3(\mathcal O)}\bigl\{M_1\cup M_2\cup M_3\mid M_i\in \mathcal P_2(O_i)\bigr\},
        \end{align*}
        with $\mathcal P_k(M)$ as in Definition~\ref{def:k-subsets}. Now we define $S(v)=\{U\in T\mid v\in U\}$ for all $v\in V$.
        The resulting set of quorums is obtained by picking 3 or 4 organizations, and then picking 2 or 3 nodes from each chosen organization. The FBAS has quorum intersection and it is easy to check that the corresponding DSets are given by $\emptyset$, $V$, each singleton set, and each set comprising a complete single organization.
        
        Let us determine $\mathrm {P}(\text{$a_1$ is intact})$. The set of all $B\subseteq V$ such that there is a DSet $D$ with $B\subseteq D\subseteq V\setminus\{a_1\}$ is given by
        \begin{align}
            \bigl\{\emptyset, \{a_2\}, \{a_3\}\bigr\}\cup\mbox{}\bigcup\nolimits_{O\in \{B,C,D\}}\mathcal P(O)\setminus\{\emptyset\}.\label{eq:example_prob_intact_groups1}
        \end{align}
        Observe that all unions in~\eqref{eq:example_prob_intact_groups1} are disjoint and hence
        \begin{align}
            \mathrm {P}(\text{$a_1$ is intact}) = p(\emptyset) + p(\{a_2\}) + p(\{a_3\}) + \sum_{O\in \{B,C,D\}}\ \sum_{\emptyset\not=O'\subseteq O}p (O').\label{eq:example_prob_intact_groups2}
        \end{align}
        Similar expressions can be derived for the other nodes in $V$ by symmetry.
    \end{ex}
    
    Let us now discuss some interesting special cases for the distribution $p$.

\subsection{At most one ill-behaved node}\label{sec:prob-intactness-one-maximal}
    If at most one node can be ill-behaved, i.e, $p(B)=0$ for $|B|>1$, we have 
    \begin{align*}
        \mathrm {P}(\text{$v$ is intact})&=p(\emptyset)+\sum\nolimits_{u\in U_v} p(\{u\})=1-\sum\nolimits_{u\in V\setminus U_v} p(\{u\}),
    \end{align*}
    where $U_v$ is the union of all DSets that do not contain $v$. Note that our assumption that $(V,S)$ has quorum intersection implies that $\emptyset$ is a DSet. Furthermore,
    \begin{align}
        \mathrm {P}(\text{$v$ is intact}\mid \text{$v$ is well-behaved})
        &=\frac{p(\emptyset)+\sum_{u\in U_v} p(\{u\})}{p(\emptyset)+\sum_{u\in V\setminus\{v\}} p(\{u\})}\nonumber\\
        &=\frac{p(\emptyset)+\sum_{u\in U_v} p(\{u\})}{1-p(\{v\})}\;.\label{eq:cond-prob-intactness-one-maximal}
    \end{align}
    
    \begin{ex}[continuation of Example~\ref{ex:prob_intact1_basis}]\label{ex:prob_intact1}
        Suppose that
        \begin{align*}
            p(\emptyset)&=0.6,&
            p(\{a\})&=0.2,&
            p(\{b\})&=p(\{c\})=0.1,&
            p(\{d\})&=0.
        \end{align*}
        By~\eqref{eq:prob_intact1_basis1}, $\mathrm {P}(\text{$a$ is intact})=0.8$. Likewise, $\mathrm {P}(\text{$b$ is intact})=\mathrm {P}(\text{$c$ is intact})=0.9$ and $\mathrm {P}(\text{$d$ is intact})=1$. With~\eqref{eq:cond-prob-intactness-one-maximal} we conclude furthermore that the conditional probability of each node to be intact given that it is well-behaved is $1$ because $U_v=V\setminus\{v\}$ for each node $v$.
        
        Note that the latter property carries over to a large class of symmetric simple FBAS $(V,k)$: If $|V|-1\geq k\geq |V|/2 +1$, then Lemma~\ref{lem:symmDSet} implies that every one-element subset of $V$ is a DSet. We thus get $U_v=V\setminus \{v\}$ for every $v\in V$ and consequently $\mathrm {P}(\text{$v$ is intact}\mid \text{$v$ is well-behaved})=1$.
    \end{ex}
    
    \begin{ex}[continuation of Example~\ref{ex:prob_intact2_basis}]\label{ex:prob_intact2}
        Since the DSets are $\emptyset$, $\{1,2,3\}$, $\{4,5,6\}$, $V\setminus\{7\}$, and $V$, we obtain
        \begin{align*}
            U_1&=U_2=U_3=\{4,5,6\},&U_4&=U_5=U_6=\{1,2,3\},&U_7=V\setminus\{7\}.
        \end{align*}
        From this we conclude that
        \begin{align*}
            \mathrm {P}(\text{$v$ is intact})=
            \begin{cases}
                p(\emptyset)+p(\{4\})+p(\{5\})+p(\{6\})\;,&\text{if $v\in\{1,2,3\}$},\\
                p(\emptyset)+p(\{1\})+p(\{2\})+p(\{3\})\;,&\text{if $v\in\{4,5,6\}$},\\
                1-p(\{7\})\;,&\text{if $v=7$}.
            \end{cases}
        \end{align*}
    \end{ex}

\subsection{Independent failures}\label{sec:prob-intact-indep}
    Let us suppose that ill-behavior only occurs due to random and non-coordinated failures of nodes. We can model this case by assuming that the events of distinct nodes to be ill-behaved are independent, and that a single node $v$ becomes ill-behaved with probability $p_v$. Thus, for every $B\subseteq V$ we have 
    \begin{align*}
        p(B)=\prod_{u\in B}p_u\cdot\prod_{u\in V\setminus B}(1-p_u)\;,
    \end{align*}
    and therefore
    \begin{align*}
        \mathrm {P}(\text{$v$ is intact})&=\sum_{\substack{B\subseteq V\\ \text{$\exists$ DSet $D$: $B\subseteq D\subseteq V\setminus\{v\}$}}} \prod_{u\in B}p_u\cdot\prod_{u\in V\setminus B}(1-p_u)\;.
    \end{align*}
    Note that $$\sum_{B\subseteq V\setminus\{v\}}p(B)=\sum_{B\subseteq V\setminus\{v\}}\prod_{u\in B}p_u\cdot\prod_{u\in V\setminus B}(1-p_u)=1-p_v,$$ 
    which is easy to show by induction. Then~\eqref{eq:def-conditional-prob-intactness} yields
    \begin{align*}
        \mathrm {P}(\text{$v$ is intact}\mid \text{$v$ is well-behaved})&=\frac{\mathrm {P}(\text{$v$ is intact})}{1-p_v}\;.
    \end{align*}
    
    \begin{ex}[continuation of Example~\ref{ex:prob_intact1_basis}]
       Suppose that 
        \begin{align*}
            p_a&=0.2,&
            p_b&=p_c=0.1,&
            p_d&=0,
        \end{align*}
        then by~\eqref{eq:prob_intact1_basis1},
        \begin{align*}
            \mathrm {P}(\text{$a$ is intact})=0.8\cdot 0.9^2 + 0.8\cdot 0.1\cdot 0.9 + 0.8\cdot 0.9\cdot 0.1 + 0 = 0.792.
        \end{align*}
        Similarly,
        \begin{align*}
            \mathrm {P}(\text{$b$ is intact})&=\mathrm {P}(\text{$c$ is intact})=0.882,&\mathrm {P}(\text{$d$ is intact})&=0.954.
        \end{align*}
        For the conditional probabilities we have
        \begin{align*}
            \mathrm {P}(\text{$a$ is intact}\mid \text{$a$ is well-behaved})&=0.99,\\
            \mathrm {P}(\text{$b$ is intact}\mid \text{$b$ is well-behaved})&=0.98,\qquad\mbox{(same for $c$)}\\
            \mathrm {P}(\text{$d$ is intact}\mid \text{$d$ is well-behaved})&=0.954.
        \end{align*}
        Observe that the conditional probability for $a$ to be intact is higher than for the other three nodes. This is because we assume that $a$ is well-behaved and therefore it can only become befouled if at least one of the other nodes is ill-behaved. However, these nodes are more likely to be well-behaved than node~$a$.
    \end{ex}
    
    \begin{ex}[continuation of Example~\ref{ex:prob_intact2_basis}]
        For node $1$ there are precisely two DSets $D$ with $D\subseteq V\setminus\{1\}$, namely the sets $\emptyset$ and $\{4,5,6\}$. Then
        \begin{align*}
            \mathrm {P}(\text{$1$ is intact})&=\textstyle{\sum_{B\subseteq\{4,5,6\}}\prod_{u\in B}p_u\cdot\prod_{u\in V\setminus B}1-p_u}\\&=(1-p_1)\cdot(1-p_2)\cdot(1-p_3)\cdot(1-p_7),
        \end{align*}
        and
        \begin{align*}
            \mathrm {P}(\text{$1$ is intact}\mid \text{$1$ is well-behaved})&=(1-p_2)\cdot(1-p_3)\cdot(1-p_7).
        \end{align*}
        The computation is similar for the other nodes in $V\setminus\{7\}$.
    \end{ex}

\subsection{Grouping of nodes}\label{sec:prob-intact-groups}
    Let us generalize the previous case by assuming that the nodes are associated with entities or organizations (as in Example~\ref{ex:hierarchy} or Example~\ref{ex:prob_intact3_basis}) that act independently. Therefore, we can assume that the ill-behavior of different organizations is statistically independent. More precisely, we assume a partitioning $\mathcal O$ of $V$, where every element of the partitioning consists of the nodes of a single organization, and a probability distribution $p_O$ over $\mathcal P(O)$ for every $O\in \mathcal O$. Then we define
    \begin{align*}
        p(B)=\prod_{O\in \mathcal O} p_O(B\cap O).
    \end{align*}
    
    In practice the quorum slices of nodes within one organization will presumably be set up in a way that all nodes of the organization trust each other; equivalently, for every organization there is an SCC of the trust graph containing all nodes of the organization. This implies that every trust cluster of the FBAS is a union of organizations due to Lemma~\ref{lem:trust-cluster}. In this case we can simplify the computation of the probabilistic intactness of a node within a trust cluster -- it is determined completely by the nodes of the trust cluster and independent from the nodes outside the trust cluster.

    \begin{lem}\label{lem:prob-intact-groups}
        Suppose that for every organization $O\in \mathcal O$ there is an SCC that contains $O$. Let $Z$ be a trust cluster and $v\in Z$. Then
        \begin{align*}
            \mathrm P(\text{$v$ is intact})=\sum_{\substack{B\subseteq Z\\ \text{$\exists$ DSet $D$ of $(Z, S|_Z)$: $B\subseteq D\subseteq Z\setminus\{v\}$}}} \prod_{O\in \mathcal O, O\subseteq Z}p_O(B\cap O)\;.
        \end{align*}
    \end{lem}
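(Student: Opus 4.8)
The plan is to start from the defining identity~\eqref{eq:def-prob-intactness} for $\mathrm P(\text{$v$ is intact})$ and carry out two reductions: first replace the predicate ``$v$ is $B$-intact in $(V,S)$'' by the predicate ``$v$ is $(B\cap Z)$-intact in $(Z,S|_Z)$'' via Theorem~\ref{theo:intact-SCC}, and then exploit the product form of $p$ to collapse the part of the sum that ranges over the nodes outside $Z$.

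First I would record the structural consequence of the hypothesis. By Lemma~\ref{lem:trust-cluster} the trust cluster $Z$ is a union of SCCs, and since every organization $O\in\mathcal O$ is contained in a single SCC, each $O$ is either entirely contained in $Z$ or entirely disjoint from $Z$. Write $\mathcal O_Z=\{O\in\mathcal O\mid O\subseteq Z\}$ and $\mathcal O_{\bar Z}=\{O\in\mathcal O\mid O\cap Z=\emptyset\}$, so that $\mathcal O=\mathcal O_Z\cup\mathcal O_{\bar Z}$, $Z=\bigcup_{O\in\mathcal O_Z}O$, and $V\setminus Z=\bigcup_{O\in\mathcal O_{\bar Z}}O$ as a disjoint union.

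Next I would split each $B\subseteq V$ as $B=B_Z\cup B'$ with $B_Z=B\cap Z$ and $B'=B\setminus Z\subseteq V\setminus Z$. By Theorem~\ref{theo:intact-SCC} (together with Definition~\ref{def:intact}), the existence of a DSet $D$ of $(V,S)$ with $B\subseteq D\subseteq V\setminus\{v\}$ is equivalent to the existence of a DSet $D'$ of $(Z,S|_Z)$ with $B_Z\subseteq D'\subseteq Z\setminus\{v\}$; in particular this condition depends only on $B_Z$. Hence
\begin{align*}
    \mathrm P(\text{$v$ is intact})
    &=\sum_{\substack{B_Z\subseteq Z\\ \exists\text{ DSet }D'\text{ of }(Z,S|_Z):\,B_Z\subseteq D'\subseteq Z\setminus\{v\}}}\ \sum_{B'\subseteq V\setminus Z}p(B_Z\cup B')\;.
\end{align*}
For the inner sum I would observe that $(B_Z\cup B')\cap O=B_Z\cap O$ for $O\in\mathcal O_Z$ and $(B_Z\cup B')\cap O=B'\cap O$ for $O\in\mathcal O_{\bar Z}$, so $p(B_Z\cup B')=\prod_{O\in\mathcal O_Z}p_O(B_Z\cap O)\cdot\prod_{O\in\mathcal O_{\bar Z}}p_O(B'\cap O)$. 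Since $V\setminus Z$ is the disjoint union of the $O\in\mathcal O_{\bar Z}$, the sum over $B'$ factors as $\prod_{O\in\mathcal O_{\bar Z}}\bigl(\sum_{B'_O\subseteq O}p_O(B'_O)\bigr)=\prod_{O\in\mathcal O_{\bar Z}}1=1$, using that each $p_O$ is a probability distribution on $\mathcal P(O)$. Substituting this back and renaming $B_Z$ to $B$ yields exactly the claimed identity.

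The only genuinely non-routine ingredient is the reduction via Theorem~\ref{theo:intact-SCC}; the rest is bookkeeping with the product measure. The step to be careful about is the claim that each organization lies wholly inside or wholly outside $Z$ — this is where both Lemma~\ref{lem:trust-cluster} (so that $Z$ is a union of SCCs) and the standing hypothesis (so that each organization sits inside one SCC) are needed, and without it the factorization of the inner sum would fail. I expect no further obstacles.
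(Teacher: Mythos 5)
Your proposal is correct and follows essentially the same route as the paper's proof: the reduction via Theorem~\ref{theo:intact-SCC}, the observation (from Lemma~\ref{lem:trust-cluster} and the hypothesis) that each organization lies wholly inside or wholly outside $Z$, and the factorization of the sum over $B'\subseteq V\setminus Z$ into a product of totals each equal to $1$. No gaps.
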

    \begin{proof}
        First observe that Lemma~\ref{lem:trust-cluster} implies that for every organization $O\in \mathcal O$ either $O\subseteq Z$ or $O\subseteq V\setminus Z$. Let us distinguish between these two kinds of organizations and put $\mathcal O_\mathrm {in}=\{O\in \mathcal O\mid O\subseteq Z\}$ and $\mathcal O_\mathrm {out}=\{O\in \mathcal O\mid O\subseteq V\setminus Z\}$. Then $\mathcal O=\mathcal O_\mathrm{in}\cup \mathcal O_\mathrm{out}$. We derive
        \begin{align*}
            &\mathrm {P}(\text{$v$ is intact})=\sum_{\substack{B\subseteq V\\ \text{$v$ is $B$-intact in $(V,S)$}}} p(B)\\
            &=\sum_{\substack{B\subseteq V\\ \text{$v$ is $B\cap Z$-intact in $(Z,S|_Z)$}}} p(B)\tag{by Theorem~\ref{theo:intact-SCC}}\\
            &=\sum_{\substack{B_\mathrm{in}\subseteq Z\\ \text{$v$ is $B_\mathrm{in}$-intact in $(Z,S|_Z)$}}} \sum_{B_\mathrm{out}\subseteq V\setminus Z}p(B_\mathrm{in}\cup B_\mathrm{out})\\
            &=\sum_{\substack{B_\mathrm{in}\subseteq Z\\ \text{$v$ is $B_\mathrm{in}$-intact in $(Z,S|_Z)$}}} \sum_{B_\mathrm{out}\subseteq V\setminus Z}\ \ \prod_{O\in \mathcal O} p_O((B_\mathrm{in}\cup B_\mathrm{out})\cap O)\\
            &=\sum_{\substack{B_\mathrm{in}\subseteq Z\\ \text{$v$ is $B_\mathrm{in}$-intact in $(Z,S|_Z)$}}} \sum_{B_\mathrm{out}\subseteq V\setminus Z}\Bigl(\prod_{O\in \mathcal O_\mathrm{in}} p_O(B_\mathrm{in}\cap O)\Bigr)\cdot\Bigl(\prod_{O\in \mathcal O_\mathrm{out}} p_O(B_\mathrm{out}\cap O)\Bigr)\\
            &=\sum_{\substack{B_\mathrm{in}\subseteq Z\\ \text{$v$ is $B_\mathrm{in}$-intact in $(Z,S|_Z)$}}}\Bigl(\prod_{O\in \mathcal O_\mathrm{in}} p_O(B_\mathrm{in}\cap O)\Bigl)\cdot\sum_{B_\mathrm{out}\subseteq V\setminus Z}\Bigl(\prod_{O\in \mathcal O_\mathrm{out}} p_O(B_\mathrm{out}\cap O)\Bigr)
        \end{align*}
        It suffices to show that $\sum_{B_\mathrm{out}\subseteq V\setminus Z}\prod_{O\in \mathcal O_\mathrm{out}} p_O(B_\mathrm{out}\cap O)=1$. Let $O_\mathrm{out}=\{O_1, \ldots, O_n\}$ and observe that $\bigcup_{i=1}^n O_i=V\setminus Z$. Then 
        \begin{align*}
            \sum\nolimits_{B_\mathrm{out}\subseteq V\setminus Z}\prod\nolimits_{O\in \mathcal O_\mathrm{out}} p_O(B_\mathrm{out}\cap O)&=\sum\nolimits_{B_1\subseteq O_1, \ldots, B_n\subseteq O_n}\prod\nolimits_{i=1}^n p_{O_i}(B_i)\\
            &=\prod\nolimits_{i=1}^n \sum\nolimits_{B_i\subseteq O_i} p_{O_i}(B_i)=1,
        \end{align*}
        which concludes the proof.
    \end{proof}
    Note that Lemma~\ref{lem:prob-intact-groups} also applies to the probability distributions discussed in Section~\ref{sec:prob-intact-indep} as they are special cases of the probability distributions in this section.
    
    \begin{ex}[continuation of Example~\ref{ex:prob_intact3_basis}]\label{ex:prob_intact3}
        Let us study our example for the partitioning $\mathcal O$. For every $O\in\mathcal O$ let $e_O=p_O(\emptyset)$. Then we can reformulate the right hand side of~\eqref{eq:example_prob_intact_groups2} as
        \begin{align*}
            &\bigl(e_A + p_A(\{a_2\}) + p_A(\{a_3\})\bigr)e_B e_C e_D
             + \sum_{O\in \{B,C,D\}}\ \sum_{\emptyset\not=O'\subseteq O}p_O(O')\prod_{O''\in \mathcal O\setminus \{O\}}e_{O''}\\
             &=\bigl(e_A + p_A(\{a_2\}) + p_A(\{a_3\})\bigr)e_B e_C e_D
             + \sum_{O\in \{B,C,D\}}\ (1-e_O)\prod_{O''\in \mathcal O\setminus \{O\}}e_{O''}\tag{$\star$}\\
             &=\bigl(e_A + p_A(\{a_2\}) + p_A(\{a_3\})\bigr)e_B e_C e_D\\
             &\phantom{=\ }\mbox{}+ (1-e_B)e_A e_C e_D+(1-e_C)e_A e_B e_D + (1-e_D)e_A e_B e_C.
        \end{align*}
        At ($\star$) we used that fact that $p_O$ is a probability distribution over $\mathcal P(O)$ and therefore $\sum_{\emptyset\not=O'\subseteq O}p_O(O') = 1-e_O$.
        
        Suppose that $e_A=e_B=e_C=e_D$ and let us refer to this value as $e$. Then
         \begin{align}
            \mathrm P(\text{$a_1$ is intact})=e^3\bigl(p_A(\{a_2\}) + p_A(\{a_3\})+3-2e\bigr).\label{eq:prob_intact3}
        \end{align}
    \end{ex}

\subsection{Grouping of nodes with Byzantine failures}
    Finally, let us consider a specialization of the previous case and assume a simple model of the probability distribution $p_O$ of every element $O\in \mathcal O$. Our model is based on the assumption that nodes within an organization can become ill-behaved for two reasons: (i)~random failures that are statistically independent and (ii)~the organization experiences a Byzantine fault and all nodes of the organization become ill-behaved.
    
    For every $O\in \mathcal O$ we presuppose the probability $q_O$ (the probability of a single node of the organization to fail randomly) and the probability $r_O$ (the probability of all the nodes of the organization $O$ to become Byzantine). Then for every $B\subseteq O$ we define
    \begin{align*}
        p_O(B)=\begin{cases}r_O + (1-r_O)\cdot (q_O)^{|O|}\;,&\text{if $B=O$,}\\(1-r_O)\cdot (q_O)^{|B|}\cdot (1-q_O)^{|O|-|B|}\;,&\text{otherwise.}\end{cases}
    \end{align*}
    Observe that $p_O$ is indeed a probability distribution.
    
    \begin{ex}[continuation of Example~\ref{ex:prob_intact3}]\label{ex:prob_intact3_2}
        Let $q_O$ and $r_O$ be the probabilities for organization $O$. Let us suppose that all $q_O$ are identical for each organization and likewise for $r_O$, and let us simply refer to these values as $q$ and $r$. Then $e=p_O(\emptyset)=(1-r)\cdot (1-q)^3$. Furthermore, $p_A(\{a_2\})=p_A(\{a_3\})=(1-r)\cdot q\cdot (1-q)^2$. Then from~\eqref{eq:prob_intact3} we derive
        \begin{align*}
            \mathrm P(\text{$a_1$ is intact})=(1-r)^3 (1-q)^9\bigl(2(1-r) q (1-q)^2+3-2(1-r) (1-q)^3\bigr)\;.
        \end{align*}
        For example, if $q=0.1$ and $r=0.01$, then $\mathrm P(\text{$a_1$ is intact})\approx 0.65$. By symmetry all nodes have the same probability to be intact.
    \end{ex}
    
    The FBAS in Example~\ref{ex:prob_intact3_2} is not the only way to arrange four organizations with three nodes per organization. We could also lay out the nodes as a symmetric simple FBAS. We will study this FBAS in the following example.
    
    \begin{ex}\label{ex:prob_intact4}
        Consider the symmetric simple FBAS $(V, 8)$, where $V=A\cup B\cup C\cup D$ is defined as in Example~\ref{ex:prob_intact3_basis}. From Lemma~\ref{lem:symmDSet} we know that the DSets are $V$ and every subset of $V$ having at most 3 elements. Therefore, for every $v\in V$,
        \begin{align*}
            \mathrm P(\text{$v$ is intact}) = \sum_{B\subseteq V\setminus\{v\}, |B|\leq 3}p(B).
        \end{align*}
        This sum is combinatorially more complex than the expression in Example~\ref{ex:prob_intact3_2}. However, assuming identical probability values $q$ and $r$ for each organization as in Example~\ref{ex:prob_intact3_2}, we can group the sets $B$ into categories as many of them have the same value for $p(B)$. We consider the following five categories of sets $B$: (i)~having $0$ members, (ii)~having $1$ member, (iii)~having $2$ members, (iv)~having $3$ members, not all in one organization, and (v)~a complete organization. Then we find that there is 1 set in category~(i), 11 sets in category~(ii), 55 sets in category~(iii), 162 sets in category~(iv), and 3 sets in category~(v). For example, all sets in category~(iii) satisfy $p(B)=(1-r)^4\cdot q^2\cdot(1-q)^{10}$, so that we obtain
        \begin{align*}
            \mathrm P(\text{$v$ is intact}) &= (1-r)^4\cdot(1-q)^{12} + 11\cdot (1-r)^4\cdot q\cdot(1-q)^{11} \\
            &\phantom{=\ }\mbox{}+ 55\cdot (1-r)^4\cdot q^2\cdot(1-q)^{10} + 162\cdot (1-r)^4\cdot q^3\cdot(1-q)^{9}\\
            &\phantom{=\ }\mbox{}+3\cdot (1-r)^3\cdot(1-q)^{9}\cdot\bigl(r + (1-r)\cdot q^3\bigr).
        \end{align*}
        For the values $q=0.1$ and $r=0.01$ as in Example~\ref{ex:prob_intact3_2} we get $\mathrm P(\text{$v$ is intact})\approx 0.86$ for every $v\in V$. We observe that the probability that a node becomes befouled in case of the symmetric simple FBAS is \emph{significantly smaller} (by a factor of approximately 2.5) than for the hierarchical FBAS in Example~\ref{ex:prob_intact3_basis}. 
    \end{ex}

\section{Concluding remarks}

In this paper we have discussed the main concepts of the federated Byzantine agreement system as defined by Mazi\`eres~\cite{Maz16}, in particular quorum and quorum intersection, and intactness of nodes. We have illustrated these concepts with several examples, ranging from small ``academic'' cases to the greatest strongly connected component of the current Stellar network; see Example~\ref{ex:stellar-core}. In our discussion of intactness, we have clarified the relation between Mazi\`eres' original definition and the more recent definitions in~\cite{LosGafMaz19,LokLosMaz19}, and we have introduced the \emph{subslice property} as a condition for equivalence of the two concepts; see Section~\ref{sec:intact-compare}.

We have treated the problems of quorum enumeration and quorum intersection as well as the intactness of nodes from an algorithmic point of view. We have shown that the quorum split decision problem is NP-complete even for the special class of \emph{simple FBAS}, which we introduced in this paper; see Definition~\ref{def:simple}. Based on the work of Lachowski~\cite{Lac19}, we have derived algorithms for quorum enumeration, checking quorum intersection, and computing intact nodes. Some computed examples with our implementations of these algorithms in the Python package \emph{Stellar Observatory}\/\footnote{\url{https://github.com/andrenarchy/stellar-observatory}} are given is this paper; see, e.g., Example~\ref{ex:stellar-core} and Figure~\ref{fig:quorum-enumeration-computation}. In order to derive the algorithms, prove their correctness and reduce their complexity we have introduced several new concepts including the \emph{trust graph} and \emph{trust clusters} of an FBAS; see Section~\ref{sec:trust-graph}. 

In the final section of this paper we have treated intactness from a probabilistic point of view, and we have studied the probability that a node is intact in different scenarios of ill-behaved nodes. Of particular interest in this context is the intactness probability in hierarchical FBAS settings as the ones in Example~\ref{ex:hierarchy}, which we then discussed in Examples~\ref{ex:prob_intact3_basis},~\ref{ex:prob_intact3}, and~\ref{ex:prob_intact3_2}. 

In Example~\ref{ex:prob_intact4} we have shown that from the viewpoint of the intactness probability, the symmetric simple FBAS is superior to a hierarchical FBAS. This observation not only holds for the simplified probability distribution we considered, but in fact for every probability distribution $p$, which is immediately clear from the definition of probabilistic intactness in~\eqref{eq:def-prob-intactness}: The set of DSets in Example~\ref{ex:prob_intact3_basis} is a proper subset of the set of DSets in Example~\ref{ex:prob_intact4}. Therefore, no probabilistic intactness in the hierarchical FBAS exceeds the corresponding probabilistic intactness in the symmetric simple FBAS. 

In Sections~\ref{sec:quorum-intersection-scc} and~\ref{sec:algo-intact-scc} we have shown that intersection and intactness properties of the greatest SCC (and more generally any trust cluster) of an FBAS are independent from the remainder of the FBAS. Therefore, this component of the FBAS can be studied in isolation, and this property carries over to probabilistic intactness; see~Lemma~\ref{lem:prob-intact-groups}. A similar argument as in Example~\ref{ex:prob_intact4} then shows that 
the probabilistic intactness of nodes in the hierarchical setup of the greatest SCC of the FBAS in Example~\ref{ex:stellar-core}
(with quorum slices containing 11, 12 or 13 nodes) will not be larger than in a symmetric simple FBAS with $k=12$.

%Finally, we point out that by a similar argument the hierarchical setup of the greatest strongly connected component of the Stellar network in Example~\ref{ex:stellar-core} (with quorum slices containing 11, 12 or 13 nodes) is strictly inferior to an arrangement as a symmetric simple FBAS with $k=12$. The probabilistic intactness of nodes in the former will never be larger than in the latter, and from this point of view it is advisable to restructure the greatest strongly connected component of the Stellar network as a symmetric simple FBAS.

%Firstly, for such an FBAS every set of nodes with at most 5 nodes is a DSet due to Lemma~\ref{lem:symmDSet}. Secondly, every DSet of Example~\ref{ex:hierarchy} has at most 5 nodes.

\section*{Acknowledgement}
This work was supported by a grant in the SDF Academic Research Program.

\printbibliography

\end{document}